\documentclass[manuscript, screen, authorversion, nonacm, preprint]{jair}

\setcopyright{cc}
\setcopyright{none}
\copyrightyear{2025}
\acmDOI{10.1613/jair.1.xxxxx}

\JAIRAE{Insert JAIR AE Name}
\JAIRTrack{} %
\acmVolume{4}
\acmArticle{6}
\acmMonth{8}
\acmYear{2025}

\RequirePackage[
  datamodel=acmdatamodel,
  style=acmauthoryear,
  backend=biber,
  giveninits=true,
  uniquename=init
  ]{biblatex}

\usepackage{amsmath}
\usepackage{amsfonts}
\usepackage{amsthm}
\usepackage{cleveref}
\usepackage{xspace}
\usepackage{xcolor}
\usepackage{tikz}
\usetikzlibrary{backgrounds}
\usepackage{nicefrac}
\usepackage{bm}

\usepackage{tabularx}
\usepackage{booktabs}
\usepackage{multirow}

\newcommand{\N}{\mathbb{N}}

\newcommand{\Yes}{\textsc{Yes}\xspace}
\newcommand{\No}{\textsc{No}\xspace}

\newcommand{\agents}{{\ensuremath{{N}}}}    %
\newcommand{\numAgents}{{\ensuremath{{n}}}} %
\newcommand{\wFn}{w}

\newcommand{\items}{\ensuremath{{M}}}       %
\newcommand{\numItems}{\ensuremath{{m}}}    %
\newcommand{\val}{{v}}                      %
\newcommand{\si}{{s}}                      %
\newcommand{\alloc}{{\mathcal{A}}}          %
\newcommand{\bundle}{{A}}
\newcommand{\SI}{\operatorname{SI}}

\newcommand{\probName}[1]{\textsc{#1}\xspace}

\newcommand{\FDSI}[1][$\mathcal{F}$]{\probName{#1-Fair Division with Social Impact}}

\newcommand{\Oh}[1]{{\mathcal{O}\left(#1\right)}}

\NewDocumentCommand{\cc}{ O{} O{} m }{\mbox{%
    \expandafter\ifx\expandafter\relax\detokenize{#2}\relax\else{#2-}\fi%
    \textsf{#3}%
    \expandafter\ifx\expandafter\relax\detokenize{#1}\relax\else{-#1}\fi%
    }\xspace}
\newcommand{\NP}{\cc{NP}}
\newcommand{\NPh}{\cc[hard]{NP}}
\newcommand{\NPhness}{\cc[hardness]{NP}}
\newcommand{\NPc}{\cc[complete]{NP}}

\newcommand{\FPT}{\cc{FPT}}

\newtheorem{theorem}{Theorem}
\newtheorem{lemma}{Lemma}

\newtheorem{observation}{Observation}
\newtheorem{proposition}{Proposition}
\newtheorem{definition}{Definition}
\newtheorem{example}{Example}
\newtheorem{remark}{Remark}

\addbibresource{references.bib}

\begin{document}

\fancypagestyle{firstpagestyle}{
    \fancyhf{}%
    \fancyfoot[RO,LE]{}%
  }%

\fancyfoot{}

\title%
{Dividing Indivisible Items for the Benefit of All: It is Hard to Be Fair Without Social Awareness%
}
\titlenote{%
    An extended abstract of this work has been published in the Proceedings of the 40th {AAAI} Conference on Artificial Intelligence, AAAI~'26~\cite{DeligkasEGKS2026}.}%

\author{Argyrios Deligkas}
\orcid{0000-0002-6513-6748}
\email{argyrios.deligkas@rhul.ac.uk}
\affiliation{%
    \institution{Royal Holloway, University of London}
    \city{Egham}
    \country{United Kingdom}
}

\author{Eduard Eiben}
\orcid{0000-0003-2628-3435}
\email{eduard.eiben@rhul.ac.uk}
\affiliation{%
    \institution{Royal Holloway, University of London}
    \city{Egham}
    \country{United Kingdom}
}

\author{Tiger-Lily Goldsmith}
\orcid{0000-0003-0458-6267}
\email{tiger-lily.goldsmith@rhul.ac.uk}
\affiliation{%
    \institution{Royal Holloway, University of London}
    \city{Egham}
    \country{United Kingdom}
}

\author{Dušan Knop}
\orcid{0000-0003-2588-5709}
\email{dusan.knop@fit.cvut.cz}
\affiliation{
    \institution{Czech Technical University in Prague}
    \city{Prague}
    \country{Czech Republic}
}

\author{Šimon Schierreich}
\orcid{0000-0001-8901-1942}
\email{schiesim@fit.cvut.cz}
\affiliation{
    \institution{Czech Technical University in Prague}
    \city{Prague}
    \country{Czech Republic}
}
\affiliation{
    \institution{AGH University of Krakow}
    \city{Krakow}
    \country{Poland}
}

\renewcommand{\shortauthors}{Deligkas, Eiben, Goldsmith, Knop \& Schierreich}

\begin{abstract}
    In standard fair division models, we assume that all agents are selfish. However, in many scenarios, division of resources has a direct impact on the whole group or even society. Therefore, we study fair allocations of indivisible items that, at the same time, maximize {\em social impact}.
    In this model, each agent is associated with two additive functions that define their value and social impact for each item. The goal is to allocate items so that the social impact is maximized while maintaining some fairness criterion. We reveal that the complexity of the problem heavily depends on whether the agents are socially aware, i.e., they take into consideration the social impact functions. For socially unaware agents, we prove that the problem is \NPh for a variety of fairness notions, and that it is tractable only for very restricted cases, e.g., if, for every agent, the valuation equals social impact {\em and} it is binary. On the other hand, social awareness allows for fair allocations that maximize social impact, and such allocations can be computed in polynomial time. Interestingly, the problem becomes again intractable as soon as the definition of social awareness is relaxed.
\end{abstract}

\maketitle

\section{Introduction}

The new equipment for your lab has arrived, and the lab director (LD) has to decide how to allocate it among lab members. Each member has their own preferences, i.e., {\em subjective valuations}, over the items of equipment. Meanwhile, LD anticipates the {\em impact} each member can have for the {\em whole} lab when allocated an item. This can depend on the background and technical skills of each member, and thus it can vary significantly {\em and} might not be aligned with members' valuations. Having said this, LD knows that the members will {\em compare} their bundle against their peers. Thus, LD would ideally like to find an allocation that maximizes social impact {\em and} is perceived as being {\em ``fair''}.

Consider the following scenario, though. Two items have to be allocated between reliable-Bill and sloppy-Joe. It is known that the impact of Bill will be 10 from each item, while Joe can contribute only 1 for each. 
This toy example demonstrates that even the most relaxed fairness criterion -- envy-freeness up to one item (EF1) -- cannot always be satisfied if we require maximization of social impact. In many cases in the real world, though, people are {\em socially aware}. Thus, they are willing to accept a ``non-fair'' outcome if they realize that this is for the {\em benefit of all}. 

The scenarios above might be seen as \emph{fair division with externalities}~\cite{velez2016fairness,aziz2023fairness,DeligkasEKS2024}, where agents care not only for the items assigned to them, but also how exactly the remaining items are allocated. 
Nevertheless, this model is too general, so 
\citet{FlamminiGV2025}
have recently introduced a model that captures exactly the setting above: there is a set of goods and every agent is associated with two additive functions that define their value and social impact for each good. 
The goal is to allocate the goods such that the social impact is maximized while maintaining some fairness criterion.
They have studied 
the ``{\em price of fairness}'', which essentially quantifies how much worse in terms of social welfare a fair outcome is. 
In addition, they have formally introduced the notion of socially aware agents 
and which outcomes they perceive as fair.
They proved that with socially aware agents, EF1 and social impact maximizing allocations always exist and can be computed in polynomial time.
However, the complexity of the underlying computational problems was not examined in detail. 
Our goal is to extend the fairness solution concepts for this model and settle their complexity.

\subsection{Our Contribution}

\begin{figure*}
    \centering
    \begin{tikzpicture}[align=center,node distance=1.5cm]
        \tikzstyle{decision} = [draw,align=center,text width=2.2cm];
        \tikzstyle{result} = [draw,rectangle,rounded corners,align=center,text width=1.8cm,node distance=3.25cm,font=\tiny\linespread{1.1}\selectfont];
        \tikzstyle{NPh} = [fill=red!30];
        \tikzstyle{poly} = [fill=green!30];
        \renewcommand{\NPc}{\NPh}

        \node[decision,text width=2.5cm,ultra thick] (d1) at (0,0) {\mbox{Socially-Aware} Agents};

        \node[decision] (d2) at (3.5,0) {All Agents \mbox{Socially-Aware}};
        \draw[->] (d1) -- (d2) node [above,midway] {\it\small yes};
        
        \node[result,NPh,right of=d2] (r1) {\small\NPc\\\tiny[Thm.~\ref{thm:oneSociallyUnaware:NPh}]};
        \draw[->] (d2) -- (r1) node [above,midway] {\it\small no};

        \node[decision,below of=d2,node distance=1.5cm] (d3) {\mbox{$\alpha$-Approx.} \mbox{Socially-Aware}};
        \draw[->] (d2) -- (d3) node [left, midway] {\it\small yes};
        
        \node[result,NPh,right of=d3] (r2) {\small\NPc\\\tiny[Thm.~\ref{thm:alphaSA:NPc}]};
        \draw[->] (d3) -- (r2) node [above,midway] {\it\small~$\alpha < 1$};

        \node[decision,below of=d3,node distance=1.5cm] (d4) {Weakly \mbox{Socially-Aware}};
        \draw[->] (d3) -- (d4) node [left, midway] {\it\small$\alpha=1$};

        \node[result,NPh,right of=d4] (r3) {\small\NPc\\\tiny{[Thm.~\ref{thm:weaklySA:NPh}]}};
        \draw[->] (d4) -- (r3) node [above,midway] {\it\small yes};

        \node[result,poly,below of=d4,node distance=1.5cm] (r4) {\small poly-time\\\tiny[Thms.~\ref{thm:sa_swef1_goods},\,\ref{thm:EFL_poly}]};
        \draw[->] (d4) -- (r4) node [left, midway] {\it\small no};

        \node[decision] (d5) at (-3.5,0) {Binary Valuations};
        \draw[->] (d1) -- (d5) node [above,midway] {\it\small no};

        \node[decision,below of=d1,node distance=2.6cm] (d6) {Valuation = Social-Impact};
        \draw[->] (d5) -- (d6.north) node [right,midway] {\it\small yes};

        \node[result,poly,below of=d6,node distance=1.9cm,xshift=-1cm] (r5) {\small poly-time\\\tiny[Thm.~\ref{thm:binary_equal_polytime}]};
        \draw[->] (d6) -- (r5) node [left, midway] {\it\small yes};
        \node[result,NPh,right of=r5,node distance=2.25cm] (r6) {\small\NPc\\\tiny{[Thm.~\ref{thm:NPh_binary}]}};
        \draw[->] (d6) -- (r6) node [right, midway] {\it\small no};

        \node[decision,below of=d5,node distance=1.5cm] (d7) {Constant \# of Agents};
        \draw[->] (d5) -- (d7) node [right,midway] {\it\small no};

        \node[result,NPh,left of=d7] (r7) {\small\NPc\\\tiny{[Thm.~\ref{thm:NPh_binary}]}};
        \draw[->] (d7) -- (r7) node [above,midway] {\it\small no};

        \node[decision,below of=d7,node distance=1.5cm] (d8) {Bounded Valuations};
        \draw[->] (d7) -- (d8) node [right,midway] {\it\small yes};

        \node[result,NPh,left of=d8] (r8) {\small\NPc\\\tiny{[Thm.~\ref{thm:two_agents_hard}]}};
        \draw[->] (d8) -- (r8) node [above,midway] {\it\small no};
        \node[result,poly,below of=d8,node distance=1.5cm] (r9) {\small poly-time\\\tiny{[Thm.~\ref{UMandEF1:pseudopolynomialConstantNumberOfAgents}]}};
        \draw[->] (d8) -- (r9) node [right,midway] {\it\small yes};
    \end{tikzpicture}
    \caption{A basic overview of the complexity landscape of deciding whether a SIM and~$\mathcal{F}$-fair allocation exist.}
    \Description{A decision diagram providing a map to our complexity results.}
    \label{fig:resultsOverview}
\end{figure*}

We provide a comprehensive study of the complexity of computing fair allocations that maximize social impact, which we term SIM allocations. We investigate seven different fairness criteria, all of which are strengthenings or weakenings of EF1. For each criterion, we examine the constraints on the valuation and social impact functions that allow for tractability (see \Cref{fig:resultsOverview} for an overview of our results).

Our initial set of technical results considers socially unaware agents, where we study two different dimensions of the problem and for each of them we provide complementary results between tractability and \NPhness. Interestingly, we show that the problem behaves in the same way for every fairness criterion, albeit we need slight modifications in our proofs to formally establish this. First, we restrict the domain of the valuation and social impact functions. There, we show that the problem is \NPh even when both functions are binary  (\Cref{thm:NPh_binary}), but it becomes tractable if for every agent both functions are binary {\em and} equal to each other (\Cref{thm:binary_equal_polytime}).

Then, we restrict the number of agents. Unfortunately, the problem is \NPh even for two agents with binary social impact functions (\Cref{thm:two_agents_hard}). On the other hand, we derive a pseudo-polynomial time algorithm that solves the problem for any constant number of agents (\Cref{UMandEF1:pseudopolynomialConstantNumberOfAgents}). Hence, the problem is tractable when we have a constant number of agents, polynomially bounded valuation functions, and unrestricted social impact functions.

Then we consider the case where the agents are socially aware. Our first set of results for this setting adopts the definition of social awareness from \cite{FlamminiGV2025}. Intuitively, we accept allocations that do not satisfy the ``standard'' fairness constraint, however, any bundle an agent envies (according to the chosen fairness criterion) cannot produce {\em strictly} more social impact if this agent gets it. We prove that social awareness allows for SIM and fair allocations that can be computed in polynomial time, for {\em all} fairness criteria we study (\Cref{thm:sa_swef1_goods,thm:EFL_poly}). This leads us to the question of whether the definition of social awareness allows for an arbitrary fairness criterion. We prove that this is not the case by showing \NPhness if we have ``{\em fully envious}'' agents (\Cref{thm:czechEnvy:NPh}).

Finally, we ask whether we can relax social awareness and still get a fair and SIM allocation in polynomial time. Our first relaxation considers the case where not all agents are socially aware. Unfortunately, even if only one agent is socially unaware existence of SIM and fair allocations is not guaranteed and, in fact, the problem becomes \NPh even when there are two agents (\Cref{thm:oneSociallyUnaware:NPh}). On the positive side, we complement this negative result with a pseudo-polynomial time algorithm that works for any constant number of agents where some of them are socially unaware (\Cref{thm:socially_Unaware_pseudopoly}).
Then, we limit social awareness for each agent in two different ways.
In the first case, we focus on {\em~$\alpha$-approximately} socially aware agents, where~$\alpha \in [0,1]$, and an agent accepts an unfair outcome only if every envied bundle produces at least~$\alpha$ times more social impact under the current allocation. 
Strikingly, we get the same behavior as before: there is not always a solution and the problem is \NPh for {\em every} possible~$\alpha$ (\Cref{thm:alphaSA:NPc}).
In the second case, we define {\em weakly socially aware} agents. Intuitively, an agent accepts an unfair allocation if the proportional gain of their valuation from an envied bundle is less than the proportional gain in social impact (\Cref{def:wsa}). Once again, there is no guaranteed solution and the problem is \NPh even for two agents (\Cref{thm:weaklySA:NPh}).

\subsection{Related Work}Our paper lies in the area of fair division with indivisible items, where there is a flourish of results in the last 15 years~\citep{LiptonMMS2004,BouveretL08,Budish2011,CaragiannisKMPSW19}; for excellent recent surveys, see~\citep{AmanatidisABFLMVW2023} and~\citep{NguyenR2023}.

As we have mentioned before, \cite{FlamminiGV2025} is the most relevant paper to ours, since we extend their model and their computational results. Having said this, both papers fall into the larger, more general model of {\em fair division with externalities}~\cite{velez2016fairness,aziz2023fairness,DeligkasEKS2024}. In this model, each agent gets value from the {\em whole} allocation, i.e., they value each bundle depending on the agent who got it. Hence, we can see any instance in our setting as an instance of agents with externalities. For socially unaware agents, we create an instance with externalities with one extra agent that has zero value for every item they get, while for socially aware agents, each agent has externalities as well that match the social impact.
On a similar note, \citet{BuLLST2023} studied the model in which there is an \emph{allocator} with its own view of how the items should be allocated between agents, and the goal is to be fair to both the agents and the allocator.
It should be highlighted that users deriving their utility not only from their own benefit but also from the benefit of (or to) others is a well-described phenomenon in social sciences and behavioral economics; see, e.g., \cite{Overvold1980,Andreoni1990,HuangLET2022,ThomasT2025}.

A different point of view is to consider our problem as {\em constrained} fair division~\cite{suksompong2021constraints}. We identify two settings that are closest to ours. The first regards {\em welfare-maximizing} fair allocations. Observe that when social impact equals the valuation of each agent, fair SIM allocations coincide with welfare-maximizing allocations that are fair, which was shown to be \NPh~\cite{AzizHMS2023} in general. The second is related to {\em orientations}~\cite{ChristodoulouFKS_orientations,ZhouWLL2024,DEGK2024_orientations,AfshinmehrDKMR2025,BlazejGSS2025}. Here, every agent is associated with a subset of items that they are interested in. The goal is to find an allocation where each agent gets items {\em only} from their subset. In our case, agents are only allowed to get items for which they are social impact maximizers. The crucial difference though is that in our case, agents value the remaining items as well.

\section{Preliminaries}\label{sec:prelims}

We assume a set~$\items$ of~$\numItems$ \emph{items} which need to be partitioned between a set~$\agents$ of~$\numAgents$ \emph{agents}. Each agent~$i\in\agents$ is associated with a \emph{valuation function}~$\val_i\colon 2^\items\to\N_0$ which assigns to each subset~$S$ of items, called a \emph{bundle}, a numerical value expressing how satisfied~$i$ is when~$S$ is allocated to them. Additionally, there is a \emph{social impact} function~${\si_i\colon 2^\items\to\N_0}$ for every agent~$i\in\agents$ expressing what social impact agent~$i$ generates for the society with each bundle~$S\subseteq \items$. Throughout the paper, we assume that both the valuation functions and the social impact functions are \emph{additive}, meaning that for every agent~$i\in\agents$ and every~$S\subseteq\items$ we have~$\val_i(S) = \sum_{g\in S} \val_i(\{g\})$ and~$\si_i(S) = \sum_{g\in S} \si_i(\{g\})$, respectively, and \emph{normalized}, i.e.,~$\val_i(\emptyset) = \si_i(\emptyset) = 0$. 
If an item~$g\in\items$ is valued (in terms of~$v_i$'s) non-negatively by every agent, we say that~$g$ is \emph{good}. An item is called a \emph{chore} if it is valued non-positively by every agent. \textbf{Unless stated otherwise, we assume that all items are goods.} 
An instance of \emph{fair division with social impact} is then a quadruple~$\mathcal{I} = (\agents,\items,(\val_i)_{i\in\agents},(\si_i)_{i\in\agents})$.

A \emph{partial allocation} is a tuple~$\alloc = (\bundle_1,\ldots,\bundle_\numAgents)$, where~$\bundle_i\subseteq M$ is a (potentially empty) bundle received by agent~$i\in\agents$, for which it holds that~$\bundle_i \cap \bundle_j = \emptyset$ for each pair of distinct~$i,j\in\agents$. If additionally~$\bigcup_{i=1}^\numAgents \bundle_i = \items$, then we say~$\alloc$ is a \emph{complete allocation}. In the remainder of the paper, we are interested in complete allocations. Let~$\alloc$ be a (partial) allocation. 
We naturally extend the valuations and social impact functions from bundles to allocations by setting~$\val_i(\alloc) = \val_i(\bundle_i)$ and~$\si_i(\alloc) = \si_i(\bundle_i)$.
Moreover, we denote the overall social impact generated by~$\alloc$ as \[\SI(\alloc) = \sum_{i\in\agents} s_i(\bundle_i).\]

\subsection{Solution Concepts} 

The first solution concept crucial for our work is the requirement to maximize social impact. Specifically, we require that in each solution the sum of social impacts is the maximum possible.

\begin{definition}
    An allocation~$\alloc$ is \emph{social impact maximizing} (SIM) if for every allocation~$\alloc'$ we have~$\SI(\alloc) \geq \SI(\alloc')$.
\end{definition}

It is easy to see that SIM allocations always exist and can be found efficiently. It is enough to allocate each item to an agent that maximizes the social impact for it. However, such allocations may be very unfair. Assume an instance with social impact functions that assign one to every combination of an agent and an item. It may happen that the previous procedure allocates all items to a single agent~$i$, which is arguably very unfair to all remaining agents. 

To overcome this issue, we combine SIM requirement with various notions of fairness. We focus on axioms based on pairwise comparison of agents' bundles. The prototypical concept in this line of research is \emph{envy-freeness}~\cite{Foley1967}, which requires that each agent prefers their own bundle before the bundles of other agents and is defined as follows.

\begin{definition}
    An allocation~$\alloc = (\bundle_1,\ldots,\bundle_\numAgents)$ is called \emph{envy-free} (EF) if for any pair of agents~$i,j\in\agents$ we have~$\val_i(\bundle_i) \geq \val_i(\bundle_j)$.
\end{definition}

It is well known that envy-free allocations are not guaranteed to exist, even without social impact functions. Simply assume an instance with two agents and one item strictly positively valued by both agents; the agent who does not receive this item is always envious. Moreover, deciding the existence of EF allocations is a notoriously hard computational problem. Therefore, in the rest of the paper, we focus on several relaxations of envy-freeness for which, without social impact function, existence is guaranteed and there are efficient algorithms finding such desirable outcomes.

We start with a relaxation called \emph{envy-freeness up to one item}~\cite{LiptonMMS2004,Budish2011} which requires that if there is envy from agent~$i$ to agent~$j$, then this envy can be eliminated by the removal of one item from~$j$'s bundle.

\begin{definition}
    An allocation~$\alloc = (\bundle_1,\ldots,\bundle_\numAgents)$ is called \emph{envy-free up to one item} (EF1) if for every pair of agents~$i,j\in\agents$ there exists an item~$g\in \items$ such that~$\val_i(\bundle_i) \geq \val_i(\bundle_j\setminus\{g\})$.
\end{definition}

Note that in the definition of EF1, the item removed from~$j$'s bundle can be \emph{specific} for each agent~$i\in\agents$. A stronger variant of this notion requires that the removal of one \emph{universal} item from~$\bundle_j$ prevents potential envy from all other agents. This notion is due to \citet{ConitzerFSW2019}.

\begin{definition}
    An allocation~$\alloc = (\bundle_1,\ldots,\bundle_\numAgents)$ is called \emph{strongly envy-free up to one item} (sEF1) if for every~$j\in\agents$ there exists an item~$g\in \items$ such that~$\val_i(\bundle_i) \geq \val_i(\bundle_j\setminus\{g\})$ for every~$i\in\agents$.
\end{definition}

As noted by \citet{WuZZ2025}, in traditional EF1 (or sEF1), it is assumed that all agents have equal obligations. However, this assumption can be very unrealistic in many scenarios. E.g., recall our scientific laboratory example from the beginning of the paper. It is reasonable to expect that the division of the equipment should also consider factors such as seniority or merit. Therefore, we also study \emph{weighted envy-freeness}~\cite{ChakrabortyISZ2021,AzizLMWZ2024,Suksompong2025}. In this model, each agent~$i\in\agents$ is additionally associated with its \emph{weight}~$\wFn_i$. The concept is then defined as follows.

\begin{definition}
    An allocation~$\alloc = (\bundle_1,\ldots,\bundle_\numAgents)$ is called \emph{weighted envy-free up to one item} (wEF1) if for every pair of agents~$i,j\in\agents$ there exists an item~$g\in\items$ such that
    \[
        \frac{\val_i(\bundle_i)}{\wFn_i} \geq \frac{\val_i(\bundle_j\setminus\{g\})}{\wFn_j}\,.
    \]
\end{definition}

Clearly, if all the weights are the same, wEF1 is equivalent to EF1. Therefore, wEF1 is a generalization of EF1. Similarly to sEF1, we can analogously define \emph{strong weighted envy-freeness up to one item} (swEF1).

\begin{definition}
    An allocation~$\alloc = (\bundle_1,\ldots,\bundle_\numAgents)$ is called \emph{strongly weighted envy-free up to one item} (swEF1) if for every~$j\in\agents$ there exists an item~$g\in\items$ such that for every~$i\in\agents$ we have
    \[
        \frac{\val_i(\bundle_i)}{\wFn_i} \geq \frac{\val(\bundle_j\setminus\{g\})}{\wFn_j}\,.
    \]
\end{definition}

A different point of criticism towards EF1 is that it allows the envy between~$i$ and~$j$ to vanish simply by the removal of the best good from the~$j$'s bundle, regardless of how valued this item is for the agent~$i$. For example, assume that~$i$ and~$j$ are allocated one item each, an~$i$ values~$\bundle_j$ as~$1000$ and~$\bundle_i$ as~$1$. Although formally there is no EF1-envy from~$i$ towards~$j$, agent~$i$ may still consider the allocation very unfair, since there is a huge difference in the valuations of both bundles. Motivated by this, \citet{BarmanBMN2018} introduced \emph{envy-freeness up to one less preferred good}.

\begin{definition}
    An allocation~$\alloc = (\bundle_1,\ldots,\bundle_\numAgents)$ is called \emph{envy-free up to one less preferred item} (EFL) if for every pair of agents~$i,j\in\agents$ at least one of the following two conditions hold:
    \begin{enumerate}
        \item~$A_j$ contains at most one item which is positively valued by~$i$, or
        \item there exists~$g\in A_j$ such that~$\val_{i}(A_i) \geq \val_i(A_j\setminus \{g\})$ and~$\val_i(A_i) \geq \val_i(\{g\})$.
    \end{enumerate}
\end{definition}

It follows directly from the definition that every EFL allocation is simultaneously EF1.

Finally, in light of our negative results, we also study a relaxation of EF1 that, instead of removing one item to eliminate envy, uses a transfer of one item. In the literature, this notion is usually called weak envy-freeness up to one item. However, to avoid possible confusion with weighted EF1, we call it \emph{envy-freeness up to one transfer}.

\begin{definition}
    An allocation~$\alloc = (\bundle_1,\ldots,\bundle_\numAgents)$ is called \emph{envy-free up to one transfer} (tEF1) if for every pair of agents~$i,j\in\agents$ such that~$\val_i(\bundle_i) < \val_i(\bundle_j)$ there exists an item~$g\in \bundle_j$ so that~$\val_i(\bundle_i\cup\{g\}) \geq \val_j(\bundle_j\setminus\{g\})$.
\end{definition}

It is easy to see that whenever an allocation is EF1, it is tEF1, as if the item~$g$ that removes the envy from~$i$ to~$j$ is additionally moved to~$\bundle_i$, it cannot decrease the value of~$\bundle_i$ for~$i$. In the opposite direction, tEF1 does not imply EF1. The example here is an allocation where~$\bundle_j$ contains two items, both valued~$1$ by~$i$, and~$\bundle_i = \emptyset$. If we transfer one item from~$\bundle_j$ to~$\bundle_i$, the value of agent~$i$ for both bundles is~$1$, so this allocation is tEF1. However, the removal of a single item does not eliminate envy, so it is not EF1.

\begin{remark}
    It is not hard to see that given an allocation~$\alloc$, one can check whether this allocation is SIM and~\mbox{$\mathcal{F}$-fair}, where~$\mathcal{F}\in\{\text{tEF1},\text{EF1},\text{sEF1},\text{wEF1},\text{swEF1},\text{EFL}\}$, in polynomial time. Hence, deciding whether a SIM and~$\mathcal{F}$-fair allocation exist is trivially in \NP and we will not stress this explicitly in our hardness proofs.
\end{remark}

\section{Socially Unaware Agents}\label{sec:SA}

In this section, we explore the computational complexity of deciding the existence of SIM and fair allocations. In general, hardness for EF1 follows from the result of \citet{AzizHMS2023}, who showed that deciding whether an EF1 allocation maximizing social-welfare\footnote{Social welfare of an allocation is simply the sum of utilities of all agents in this allocation.} exists is \NPc. This corresponds to the case with~$\si_i = \val_i$ for every~$i\in\agents$. 
We extend their result by studying also additional notions of fairness and giving a much more detailed picture of its complexity, even in as restricted domains as binary valuations, or with a constant number of agents.

First, we observe that, without loss of generality, we can assume that the social impact function is binary.

\begin{lemma}
    \label{UMandEF1:socialImpactCanBeBinary}
    For any~$\mathcal{F}\in\{\text{tEF1},\text{EF1},\text{sEF1},\text{wEF1},\text{swEF1},\text{EFL},\text{EF}\}$, an instance~$\mathcal{I} = (\agents,\items,(\val_i)_{i\in\agents},(s_i)_{i\in\agents})$ of fair division with social impact admits a SIM and~$\mathcal{F}$-fair allocation if and only if an instance~$\mathcal{J}=(\agents,\items,(\val_i)_{i\in\agents},(\si^*_i)_{i\in\agents})$, where~${\si^*_i(g) = 1}$ if~${i\in \arg\max_{i\in\agents} \si_i(g)}$ and~$0$ otherwise, admits a SIM and~$\mathcal{F}$-fair allocation.
\end{lemma}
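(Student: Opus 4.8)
The plan is to prove something slightly stronger than the stated equivalence: the set of SIM and $\mathcal{F}$-fair allocations is literally the \emph{same} for~$\mathcal{I}$ and~$\mathcal{J}$. Since~$\mathcal{I}$ and~$\mathcal{J}$ differ only in the social impact functions — the valuation functions (and any associated weights, for \text{wEF1}/\text{swEF1}) are identical — and since every fairness notion in the list is defined purely in terms of the~$\val_i$'s (and weights), an allocation~$\alloc$ is~$\mathcal{F}$-fair in~$\mathcal{I}$ if and only if it is~$\mathcal{F}$-fair in~$\mathcal{J}$. Hence it suffices to show that~$\alloc$ is SIM with respect to~$(\si_i)_{i\in\agents}$ if and only if it is SIM with respect to~$(\si^*_i)_{i\in\agents}$; the claimed biconditional about existence then follows immediately.

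The key observation is the per-item decomposition of social impact for additive functions. Writing~$\pi(g)$ for the agent who receives item~$g$ under~$\alloc$, additivity gives~$\SI(\alloc) = \sum_{i\in\agents}\si_i(\bundle_i) = \sum_{g\in\items}\si_{\pi(g)}(g)$. Each summand satisfies~$\si_{\pi(g)}(g) \le \max_{i\in\agents}\si_i(g)$, and this upper bound is attained simultaneously for all items by assigning every item to one of its social-impact maximizers. Consequently,~$\alloc$ is SIM for~$\mathcal{I}$ if and only if~$\pi(g)\in\arg\max_{i\in\agents}\si_i(g)$ for every item~$g\in\items$. Applying exactly the same reasoning to~$\mathcal{J}$ shows that~$\alloc$ is SIM for~$\mathcal{J}$ if and only if~$\pi(g)\in\arg\max_{i\in\agents}\si^*_i(g)$ for every~$g\in\items$.

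It remains to check that these two ``maximizer'' conditions coincide, i.e., that~$\arg\max_{i\in\agents}\si_i(g) = \arg\max_{i\in\agents}\si^*_i(g)$ for every~$g\in\items$. By construction~$\si^*_i(g)=1$ precisely for the agents in~$\arg\max_{i\in\agents}\si_i(g)$ and~$\si^*_i(g)=0$ for all other agents; since~$\arg\max_{i\in\agents}\si_i(g)$ is always nonempty, we get~$\max_{i\in\agents}\si^*_i(g)=1$ and the set of agents attaining this maximum is exactly~$\arg\max_{i\in\agents}\si_i(g)$. Therefore the SIM conditions for~$\mathcal{I}$ and~$\mathcal{J}$ are identical, so the two instances have the same set of SIM and~$\mathcal{F}$-fair allocations, and in particular one set is nonempty iff the other is. The only point that needs to be spelled out carefully is the handling of ties (several agents maximizing~$\si_i(g)$, or~$\si_i(g)=0$ for all~$i$): both are absorbed uniformly by the fact that~$\arg\max$ is nonempty, so no agent ever needs to be singled out and no additional social impact can be ``created'' by the binarization — this is the only place where one might otherwise slip.
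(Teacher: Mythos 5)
Your proof is correct and follows essentially the same route as the paper's: both arguments rest on the observation that, by additivity, an allocation is SIM precisely when every item goes to one of its social-impact maximizers, that binarization preserves the set of maximizers for each item, and that fairness depends only on the (unchanged) valuations. Your write-up is in fact a cleaner formalization — making the per-item decomposition and the equality of the $\arg\max$ sets explicit, and noting that the two instances share the exact same set of SIM and $\mathcal{F}$-fair allocations rather than merely the same existence status.
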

\begin{proof}
    For our problem, we have to maximize the sum of social impact in a solution; regardless of the fairness notion/solution concept. Hence, we can consider any~$\mathcal{F}$ of \FDSI for this (since the part of the instance that is changing is the social impact ``scores"). 
    
    Consider that we have a solution for~$\mathcal{I}$. In a SIM solution, agents who maximize the social impact for a single item must receive this in the allocation because the social impact function is additive. In the same allocation in the corresponding instance~$\mathcal{J}$, the solution must also be SIM because some agent~$i$ can only receive some item~$g$ if they are a social impact maximizer, i.e.,~$s^*_i(g) = 1$. Which, by definition, means they are the social impact maximizer in~$\mathcal{I}$. 
    
    On the other hand, consider we have a solution for instance~$\mathcal{J}$ where the social impact function is binary. Consider now that finding a SIM solution in the binary instance did not give us a SIM assignment in the original instance. Therefore, this means there exists some agent~$i$ that has a higher social impact for an item~$g$ than agent~$j$ who received it. 
    If this was the case, agent i would not have been assigned~$v_i(s)= 1$ for~$g$ in the binary instance and hence could not have received the item. Hence, contradiction. Therefore it must be a solution in~$\mathcal{I}$.

    Note that, in both directions, the valuation function~$v_i$ of agent~$i$ remains the same. Therefore, if we have a solution in~$\mathcal{I}$ which is~$\mathcal{F}$-fair the same allocation is also~$\mathcal{F}$-fair in~$\mathcal{J}$, vice versa, because the fairness notion only depends on the agents' valuations.
\end{proof}

With the previous lemma in hand, we show our main hardness result of this section, which settles that even if both the valuations and the social impact functions are binary, the problem is \NPc. Our strategy in the reduction is to introduce some initial unfairness between agents so that the rest of the items have to be allocated in an envy-free way, which is known to be computationally hard~\cite{AzizGMW2015,HosseiniSVWX2020}.

\begin{theorem}\label{thm:NPh_binary}
    For any~$\mathcal{F}\in\{\text{tEF1},\text{EF1},\text{sEF1},\text{wEF1},\text{swEF1},\text{EFL}\}$, it is \NPc to decide whether a SIM and \mbox{$\mathcal{F}$-fair} allocation exist, even if both the valuations and the social impact functions are binary.
\end{theorem}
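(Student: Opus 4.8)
The plan is to reduce from a known \NPhardness result about envy-free allocations of indivisible goods --- for example, the \NPhardness of deciding whether an EF allocation exists for a fixed number of agents with binary (or \{0,1\}) valuations, as established by~\citet{AzizGMW2015} or \citet{HosseiniSVWX2020}. By \Cref{UMandEF1:socialImpactCanBeBinary} we are free to take the social impact functions to be binary, so the only real work is encoding the envy-freeness instance while using the social impact constraint to ``pre-load'' a controlled amount of unfairness that collapses the relevant relaxation (tEF1, EF1, sEF1, wEF1, swEF1, or EFL) back to plain envy-freeness on the remaining items.

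First I would take an instance of the source problem with agent set $\agents$, items $\items$, and binary valuations $\val_i$. I would then build a new instance by adding, for each agent $i$, a private ``gadget'' of dummy items that only agent $i$ is a social-impact maximizer for (say $s^*_i(g)=1$ for $i$ and $0$ for everyone else on these items), while all original items have $s^*_i(g)=1$ for every agent, so that \emph{every} complete allocation of the original items is SIM as long as each gadget bundle goes to its owner. The valuations on the gadget items are chosen so that (a) each agent values their own gadget positively by exactly the amount needed to make an EF1-type relaxation vacuous --- i.e.\ removing/transferring one item can never be ``used up'' to excuse envy over the real items --- and (b) no agent gains value from another agent's gadget, so gadgets create no new envy. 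Concretely, one wants each agent's gadget bundle to contain enough identically-valued items (from that agent's own perspective) that the ``up to one item'' slack is absorbed internally and the residual requirement on the original items is genuine envy-freeness.

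The key step is to argue the equivalence in both directions. For the forward direction: given an EF allocation of the source instance, hand each agent their gadget plus their EF bundle; this is SIM (every item goes to a social-impact maximizer) and the fairness relaxation is satisfied because envy over original items is zero and gadget items contribute no cross-envy. For the reverse direction: in any SIM allocation each gadget must go to its owner (only that agent has $s^*_i=1$ there, and additivity forces SIM to place each item with a maximizer), so the allocation restricts to a complete allocation of the original items; then the chosen gadget valuations force the $\mathcal{F}$-relaxation to imply exact envy-freeness on the original items, giving a solution to the source instance. Membership in \NP is noted in the \Cref{remark} preceding the theorem, so \NPcness follows.

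The main obstacle will be making a \emph{single} gadget construction work \emph{uniformly} for all six fairness notions at once, since they relax envy differently: EF1/sEF1/EFL remove an item, tEF1 transfers one, and the weighted variants rescale by $\wFn_i,\wFn_j$. I expect to handle the weighted cases by setting all weights equal (so wEF1 $\equiv$ EF1 and swEF1 $\equiv$ sEF1 on the constructed instance, which is legitimate since the source problem already sits inside the unweighted case), and to handle the EF1 vs.\ tEF1 vs.\ EFL distinction by choosing the gadget large enough and its per-item value small enough relative to the original items that no single removal or transfer can ever bridge a genuine envy gap over the real items --- while still ensuring the \emph{intended} allocation is fair. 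The paper's own remark that ``we need slight modifications in our proofs to formally establish this'' signals exactly this point, so I would isolate the gadget sizing as a short lemma and then verify each of the six notions against it in turn.
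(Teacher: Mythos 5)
Your high-level plan --- reduce from envy-freeness with binary valuations, use the social impact function to force one private gadget onto each agent, and let that gadget consume the ``up to one item'' slack so that the relaxation collapses to exact envy-freeness on the original items --- is exactly the paper's strategy. However, the concrete gadget you specify is oriented backwards, and as written the reduction fails in the reverse direction. You require (a) that each agent values \emph{their own} gadget positively and (b) that ``no agent gains value from another agent's gadget, so gadgets create no new envy.'' With such gadgets, $\val_i(\bundle_i)$ only increases while $\val_i(\bundle_j)$ is unchanged for $j\neq i$, so the EF1 condition $\exists g\colon \val_i(\bundle_i)\ge\val_i(\bundle_j\setminus\{g\})$ becomes strictly \emph{weaker} than envy-freeness on the original items, not equivalent to it: a SIM and EF1 allocation of the constructed instance need not restrict to an EF allocation of the source. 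This also contradicts your own opening sentence about ``pre-loading a controlled amount of unfairness.'' The working construction (the paper's) is the mirror image: each agent $i$ gets a single forced item $g_i$ with $\val_i(g_i)=0$ but $\val_j(g_i)=1$ for every $j\neq i$. Then every other agent \emph{does} envy $i$ on account of $g_i$, the one permitted removal must be spent on $g_i$ (or an equally-valued original item, which with binary valuations changes nothing), and EF1 on the new instance coincides with EF on $\items$.

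A second problem is your sizing argument. You propose to make the gadget ``large enough and its per-item value small enough relative to the original items,'' but the theorem requires \emph{binary} valuations, so there is no room to scale: every nonzero value is $1$, and a genuine envy gap over the original items can be exactly $1$, which a single removal always bridges unless that removal is forced onto a gadget item the envier values. Finally, for tEF1 one forced item per agent does not suffice, because a transfer simultaneously shrinks $\bundle_j$ and grows $\bundle_i$; the paper handles this by duplicating the special item (two copies per agent), and for EFL it simply observes that with binary valuations EFL coincides with EF1. These are exactly the ``slight modifications'' you anticipated, but they only go through once the gadget valuations are oriented the right way.
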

\begin{proof}
    \citet[Theorem 11]{AzizGMW2015}, show that it is \NPh to find an envy-free allocation of items even when agents have binary additive valuations.
    
    In an instance of envy-free allocation~$\mathcal{I}$ we have a set of items~$M$ and a set of agents~$N$. The goal is to find an envy-free allocation.
    
    From~$\mathcal{I}$ we create an instance~$\mathcal{J}$ of \FDSI. We create some additional items to add to the set~$M$. We will call the set of these special items~$M'$.
    For each agent~$i \in N$ we create an item~$g_i$ with~$v_i(g_i) = 0$ and~$s_i(g_i)=1$. For all other agents~$i \neq j$, let~$v_j(g_i) = 1$ and~$s_j(g_i)=0$. %
    
    As for the remaining “standard" items, i.e., the set~$M$, the valuation functions remain the same. For each agent~$i \in N$ let~$s_i(g)=1$ for all~$g \in M$. 
    
    Observe that in the solution, every agent~$j$ must receive his special items~$g_j$ because he is the only agent who maximizes social impact for these. 
    Note that in a SIM solution, any agents can receive the items in~$M$. as they all have social impact 1. 
    
    \noindent\textbf{EF1}\hspace{0.25cm}
    Given a solution to~$\mathcal{I}$, we have an allocation of the items in~$M$ which is envy-free; what remains to show is that by assigning items the special items in~$M'$ we still end up with an EF1 allocation. 
    Since we must have a solution which is SIM and EF1, all special items must be given to their respective agents for it to be SIM, i.e., we give item~$g_i$ to the agent~$i$ who is the only agent with~$s_i(g_i)=1$. 
    This is clearly an EF1 allocation, as no agent was envious prior to this assignment and every agent has received exactly one item. (Therefore, if any agent became envious, it must be envy-free after removing one good.)
    
    On the other hand, take a solution to~$\mathcal{J}$. Each special item must be allocated to its respective agents, since we are given a SIM solution. Without loss of generality, we can assume that~$M$ is non-empty. For each agent~$i$, he must get allocated at least one standard item in addition to~$g_i$ or else he will EF1-envy (i.e., EF1 is not satisfied for him) as his bundle will be of value~$0$ for him.
    Thus, this allocation of items in~$M$ must be envy-free, because already due to the special items agent~$i$ has to remove~$g_j$ in order not to envy~$j$, i.e., they cannot afford to envy agent~$j$ for any more items.
    
    \noindent\textbf{EFL, sEF1,wEF1,swEF1,}\hspace{0.25cm}
    Since we have binary valuations, EF1 and EFL coincide. 
    Strong EF1 requires all agents who envy agent~$j$ to be removing the same item. The above construction straightforwardly holds for sEF1, as~$g_i$ can always be the item that agents are removing.
    In addition, EF1 is a special case of wEF1 (likewise, sEF1 for swEF1); therefore, the construction follows.
    
    \noindent\textbf{tEF1}\hspace{0.25cm}
    We make a small modification to the construction. For each agent, we create two copies of the special item, referred to~$g_i^1$ and~$g_i^2$. The key difference is that there are now {\emph {two}} items that must go to each agent, and every other agent will envy them. 
    
    Let~$v_i^o(A_i)$ be agent~$i$'s valuation function for instance~$\mathcal{I}$ where he receives the bundle~$A_i$. 
    Let~$v_i^r(A_i')$ be agent~$i$'s valuation function in instance~$\mathcal{J}$ where he receives the bundle~$A_i'$. 
    
    In our construction, we have:~$v_i^o(A_i) = v_i^r(A_i')$ and~$v_i^o(A_j) = v_i^r(A_j') -2$ as in~$\mathcal{I}$ they have received~$g_i^1$ and~$g_i^2$.
    
    Moreover, given a solution to~$\mathcal{I}$, we have that~$v_i^o(A_i) \geq v_i^o(A_j)$. Therefore by giving out the special items to their respective agents, we get a SIM solution which is tEF1 because the following holds:~$v_i^r(A_i'\cup\{g_i^1\} = v_i^o(A_i)+1 \geq v_i^o(A_j)+1 = v_i^r(A_j' \setminus\{g_j^1\}) = v_i^r(A_j')-1$. 
    
    Given a solution to~$\mathcal{J}$, we have a SIM solution where~${g_i^1, g_i^2} \in A_i'$. Hence, for a pair of agents~$i$ and~$j$ the following holds because the allocation is tEF1:
   ~$v_i^o(A_i) = v_i^r(A_i' \cup \{g_j^1\}-1) \geq v_i^r(A_j'\setminus\{g_j^1\})-1 = v_i^o(A_j)+1-1 = v_i^o(A_j)
   ~$, i.e., agent~$i$'s valuation for~$A_i'$ is greater than his valuation for~$A_j$ if he transfers~$g_j^2$, therefore~$v_i^o(A_i) \geq v_i^o(A_j)$ (envy-free).
\end{proof}

Note that in the previous results, the social impact functions and the valuations are not the same. If we additionally assume that the valuations and social impacts are the same, we obtain a positive result modifying the Yankee-swap algorithm of \citet{viswanathan2023yankee} and the weighted round-robin of \citet{ChakrabortyISZ2021}, respectively.

\begin{theorem}\label{thm:binary_equal_polytime}
    If the valuations are binary and~$\val_i = s_i$ for every~$i\in\agents$, a SIM and~$\mathcal{F}$-fair allocation is guaranteed to exist and can be found in polynomial time for any~$\mathcal{F}\in\{\text{tEF1},\text{EF1},\text{sEF1},\text{wEF1},\text{swEF1},\text{EFL}\}$.
\end{theorem}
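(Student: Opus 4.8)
The plan is to turn the search for a SIM allocation into a welfare‑maximization problem and then exploit envy‑freeness‑up‑to‑one‑item machinery for binary valuations. First I would observe that, since $\val_i=\si_i$ for every agent, any allocation $\alloc$ satisfies $\SI(\alloc)=\sum_{i\in\agents}\si_i(\bundle_i)=\sum_{i\in\agents}\val_i(\bundle_i)$; thus $\alloc$ is SIM if and only if it maximizes utilitarian social welfare, which for binary additive valuations means precisely that every item valued by at least one agent is allocated to such an agent. Items valued by nobody contribute $0$ to every $\val_i$ and hence influence neither SIM nor any of the seven fairness notions, so I would set them aside and re‑insert them arbitrarily at the end. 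It therefore suffices to produce a welfare‑maximizing (equivalently, SIM) allocation of the valued items that is additionally $\mathcal{F}$‑fair.

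For the unweighted criteria $\mathcal{F}\in\{\text{tEF1},\text{EF1},\text{sEF1},\text{EFL}\}$ I would run the Yankee‑swap algorithm of \citet{viswanathan2023yankee}, viewing each binary additive $\val_i$ as a (trivial) matroid‑rank function. Its output is a \emph{Lorenz‑dominating} allocation: its sorted utility profile weakly Lorenz‑dominates that of every other allocation. Comparing full‑length prefix sums, a Lorenz‑dominating allocation maximizes welfare, hence is SIM, and it is well known to be EF1; since valuations are binary, EF1 coincides with EFL, and every EF1 allocation is tEF1, so three of the four criteria are immediate. For sEF1 I would argue by a local exchange: were the allocation not sEF1, some agent $j$ would be envied so that no single item of $\bundle_j$ ``covers'' all agents $i$ with $\val_i(\bundle_i)<\val_i(\bundle_j)$; EF1 makes all such agents \emph{tight}, i.e.\ $\val_i(\bundle_i)=\val_i(\bundle_j)-1$, and a short case analysis (using that SIM forces $j$ to value every valued item of $\bundle_j$, so $\val_j(\bundle_j)\ge\val_i(\bundle_j)$, together with the fact that not all tight agents can value the same item) shows that the tight agent $i^\star$ of smallest utility satisfies $\val_{i^\star}(\bundle_{i^\star})\le\val_j(\bundle_j)-2$ and values some $g^\star\in\bundle_j$. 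Moving $g^\star$ from $j$ to $i^\star$ keeps the allocation SIM and welfare‑maximal while being a strict Pigou--Dalton improvement of the sorted utility profile, contradicting Lorenz‑dominance.

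For the weighted criteria $\mathcal{F}\in\{\text{wEF1},\text{swEF1}\}$ I would compute a weighted‑balanced SIM allocation, e.g.\ via a weighted round‑robin in the spirit of \citet{ChakrabortyISZ2021}: process the agents according to a weighted picking sequence, let the agent whose turn it is take any still‑available item it values, and have it leave the rotation once no such item remains. Two points need care. (i) SIM: a departing agent must not leave a valued item that only it could absorb, and a heavy agent must not hoard items a light agent needs; I would enforce this by having each picker prefer, among its available valued items, one still needed by the fewest other agents — equivalently, by first fixing a maximum ``orientation'' of the valued items onto valuers and only then using the weighted sequence to rebalance. (ii) One must re‑establish the wEF1 guarantee of Chakraborty et al.\ under this modification. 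Then swEF1 follows by the weighted analogue of the Pigou--Dalton exchange argument used for sEF1 above.

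I expect the weighted case to be the main obstacle. In the unweighted setting Lorenz‑dominance simultaneously delivers SIM, EF1, EFL, tEF1, and — with a one‑step exchange — sEF1, essentially for free. In the weighted setting a naive modification of weighted round‑robin can fail to be SIM \emph{or} fail to be wEF1 (a heavy agent grabbing exactly the items a light agent values can leave that light agent wEF1‑envious), so the bookkeeping that makes the picking sequence respect the SIM constraint while preserving the weighted‑envy invariants of Chakraborty et al., and the corresponding strengthening to swEF1, is where the real work lies.
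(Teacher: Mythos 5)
Your reduction of SIM to utilitarian welfare maximization under $\val_i=\si_i$, and your treatment of the unweighted notions via Yankee Swap, match the paper's proof. One simplification you missed: Yankee Swap guarantees not just a Lorenz-dominating (hence welfare-maximal and EF1) allocation but an EFX\textsubscript{0} allocation, i.e.\ removing \emph{any} item of $\bundle_j$ eliminates envy towards $j$; this yields sEF1 immediately, since an arbitrary item of $\bundle_j$ serves as the universal item. Your Pigou--Dalton exchange argument for sEF1 does go through (the transfer of a commonly valued unit from $j$ to a tight agent with utility at most $\val_j(\bundle_j)-2$ preserves welfare and strictly Lorenz-improves the profile, contradicting Lorenz dominance), but it is work you did not need to do.

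The genuine gap is the weighted case, which you explicitly leave unresolved (``one must re-establish the wEF1 guarantee \ldots under this modification'') and for which you propose extra machinery (preferring items ``needed by the fewest other agents,'' pre-computing a maximum orientation) that is both unnecessary and unjustified. The point you are missing is that for \emph{binary} valuations the restriction ``only pick items you value, and skip your turn otherwise'' is already an instance of the Chakraborty et al.\ protocol: pad the instance with dummy items valued $0$ by everyone, one per skipped turn; then every pick in the modified protocol is a pick of a most-preferred remaining item in the padded instance (a valued item has value $1$, which is maximal; a dummy has value $0$, which is maximal when nothing valued remains for that agent). Hence the padded run is a verbatim weighted picking sequence, its output is swEF1 (the universal removable item being the first item the envied agent picked), and deleting the dummies changes no agent's value for any bundle. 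SIM is automatic and needs none of your anti-hoarding bookkeeping: with binary $\val_i=\si_i$, an allocation is SIM iff every item valued by someone goes to someone who values it, which holds because each agent keeps picking valued items until none remain; it is irrelevant \emph{which} valuer takes an item. Your worry that a heavy agent ``grabbing exactly the items a light agent values'' breaks wEF1 is exactly the situation the first-picked-item guarantee of Chakraborty et al.\ already covers. So the algorithm you name is the right one and the theorem is true, but as written your argument for wEF1/swEF1 is not complete.
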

\begin{proof}
    \noindent\textbf{EF1, EFL, sEF1, tEF1.}\hspace{0.25cm}
    In their paper, \citet{viswanathan2023yankee} prove that an EFX\footnote{For this result, we stress here that \citet{viswanathan2023yankee} assume EFX definition where envy can be eliminated by the removal of \emph{any} item (even those valued~$0$ by the agent) as defined by \citet{PlautR2020}. This notion is sometimes also referred to as EFX\textsubscript{0}.} allocation with maximum utilitarian social welfare can be found in polynomial time using the algorithm they introduce: Yankee Swap. Observe that it is SIM because~$v_i = s_i$, i.e., a maximum allocation with respect to the agents' valuation functions is also a SIM solution.
    Thus, for our problem, because EFX~$\implies$ EFL~$\implies$ EF1~$\implies$ tEF1 and EFX~$\implies$ sEF1, we have existence and a polynomial time algorithm.
    
    \noindent\textbf{wEF1, swEF1.}\hspace{0.25cm}
    In their paper, \citet{ChakrabortyISZ2021} introduce an algorithm to find swEF1 allocations in polynomial time. In this algorithm, agents select items in a weight-based picking sequence, and after each round, the partial allocation is swEF1.
    No agent ever receives an item of value~$0$, as they can stop picking once there is no item that they value. Hence, since~$v_i=s_i$ the allocation is SIM. 
    Since swEF1~$\implies$ wEF1, we have existence and a polynomial time algorithm for both notions.
\end{proof}

Next, we turn our attention to instances with a constant number of agents. We show that with two agents already, deciding the existence of SIM and fair allocation is computationally intractable.

\begin{theorem}\label{thm:two_agents_hard}
    For any~$\mathcal{F}\in\{\text{tEF1},\text{EF1},\text{sEF1},\text{wEF1},\text{swEF1},\text{EFL}\}$, it is \NPc to decide whether a SIM and~$\mathcal F$-fair allocation exist, even if the social impact functions are binary and~$\numAgents = 2$.
\end{theorem}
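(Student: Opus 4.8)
The plan is to reduce from the \NP-complete problem \textsc{Partition}: given positive integers $a_1,\dots,a_k$ with $\sum_i a_i = 2U$, decide whether some subset sums to $U$. After trivial preprocessing I may assume $U\in\N$ and $a_i<U$ for all $i$ (if $\sum_i a_i$ is odd or some $a_i>U$, output a fixed no-instance; if some $a_i=U$, output a fixed yes-instance). Membership in \NP is the Remark above, so it remains to show hardness.

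For $\mathcal F\in\{\text{EF1},\text{sEF1},\text{wEF1},\text{swEF1},\text{EFL}\}$ I would build a two-agent instance as follows. For each $i$ there is a ``free'' item $g_i$ with $\val_1(g_i)=\val_2(g_i)=a_i$ and $s_1(g_i)=s_2(g_i)=1$. In addition there is a ``cross-item'' $x_1$ with $\val_1(x_1)=0$, $\val_2(x_1)=U$, $s_1(x_1)=1$, $s_2(x_1)=0$, and symmetrically $x_2$ with $\val_2(x_2)=0$, $\val_1(x_2)=U$, $s_2(x_2)=1$, $s_1(x_2)=0$; for the weighted variants put $\wFn_1=\wFn_2=1$, so that for two agents swEF1 and wEF1 collapse to sEF1 and EF1, and sEF1 collapses to EF1. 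Since each of $x_1,x_2$ is the unique social-impact maximizer of itself, every SIM allocation gives $x_1$ to agent~$1$, $x_2$ to agent~$2$, and splits the free items arbitrarily; write $S$ for the free items given to agent~$1$ and $\sigma=\sum_{g_i\in S}a_i$.

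The core of the argument is a short case analysis of the fairness condition on such an allocation. From agent~$1$'s perspective $\val_1(A_1)=\sigma$, while the most valuable item of $A_2$ is $x_2$ (value $U>a_i$ by preprocessing), so the best EF1/EFL ``removal'' from $A_2$ is $x_2$, leaving value $2U-\sigma$; hence the EF1 requirement for agent~$1$ against agent~$2$ holds iff $\sigma\geq U$. The same threshold governs EFL, since the extra EFL requirement $\val_1(A_1)\geq \val_1(x_2)=U$ again reads $\sigma\geq U$, removing a free item $g_i$ instead would need $\sigma\geq(3U-a_i)/2>U$ and is therefore never preferable, and the degenerate EFL clause (``$A_2$ has at most one positively valued item'') only adds the already-subsumed case $\sigma=2U$. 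Symmetrically, the condition from agent~$2$'s side holds iff $\sigma\leq U$. Thus a SIM and $\mathcal F$-fair allocation exists iff some $S$ has $\sigma=U$, i.e.\ iff the \textsc{Partition} instance is positive.

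For tEF1 this template fails, because a transfer of the cross-item double-counts its value and leaves too much slack ($\sigma\geq U/2$ instead of $\sigma\geq U$). The fix is to use two identical copies $x_1,x_1'$ of agent~$1$'s cross-item and two copies $x_2,x_2'$ of agent~$2$'s; then $\val_1(A_1)=\sigma$, $\val_1(A_2)=2U+(2U-\sigma)$, the best transferable item of $A_2$ has value $U$, and the tEF1 requirement for agent~$1$ against agent~$2$ becomes $\sigma+U\geq\val_1(A_2)-U$, i.e.\ $\sigma\geq U$; symmetry again gives $\sigma\leq U$ from agent~$2$, so existence is once more equivalent to \textsc{Partition}. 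The main obstacle throughout is the calibration: the cross-item value must be exactly $U$, half the total of the free items, so that the fairness inequality collapses precisely to a comparison of $\sigma$ with $U$ for all seven notions, and one has to verify that removing or transferring a free item (or, for EFL, invoking the degenerate clause) never accidentally certifies fairness away from $\sigma=U$ — which is exactly what the preprocessing assumption $a_i<U$ guarantees — with tEF1 being the one case genuinely needing the modified (doubled) gadget.
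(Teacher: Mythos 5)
Your reduction is exactly the paper's: the same \textsc{Partition} instance, the same pair of cross items forced to their owners by SIM to create an initial envy of exactly $U=t$, the same collapse of the strong/weighted variants to EF1 for two equally weighted agents, the same ``remove the cross item'' threshold analysis for EF1/EFL, and the same duplication of the cross items to repair tEF1. The argument is correct (the stray reference to ``seven notions'' should read six), so there is nothing substantive to add.
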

\begin{proof}
    Our reduction is from the \probName{Partition} problem. There, we are given a multi-set of integers~$W=\{w_1,\ldots,w_\ell\}$ such that~$\sum_{j\in[\ell]} w_j = 2t$ and the goal is to decide whether a set~$J\subseteq [\ell]$ exists so that~$\sum_{j\in J} w_j = \sum_{j\in [\ell]\setminus J} w_j = t$. The problem is known to be \NPh, even if no~$w_j$,~$j\in[\ell]$, is greater than or equal to~$t$~\cite{GareyJ1979}.

    Given an instance~$\mathcal{J}$ of \probName{Partition}, we construct an equivalent instance~$\mathcal{I}$ of of fair division with social impact as follows (for the sake of exposition, we start with the case of EF1 allocations; we show how the reduction works for other fairness notions assumed at the end of the proof). There are two agents,~$a_1$ and~$a_2$. The set of items~$\items$ contains one item~$g_j$ for every integer~$w_j\in W$ and two additional large items~$G_1$ and~$G_2$. The valuations and social impacts are as follows. For any~$a_i$,~$i\in[2]$, we have~$\val_{a_i}(g_j) = w_j$ and~$\si_{a_i}(g_j) = 1$. Additionally, we have~$\val_{a_1}(G_1) = \val_{a_2}(G_2) = 0$,~$\si_{a_1}(G_1) = \si_{a_2}(G_2) = 1$, and~$\val_{a_1}(G_2) = \val_{a_2}(G_1) = t$ and~$\si_{a_1}(G_2) = \si_{a_2}(G_1) = 0$. The valuation functions and social impact functions are also summarized in the following table.
    \begin{center}
        \renewcommand{\arraystretch}{1.2}
        \begin{tabular}{c|c|c|c}
                  &~$G_1$           &~$G_2$          &~$g_j$,~$j\in[\ell]$  \\\hline
           ~$a_1$ &~$\si=1, \val=0$ &~$s=0, \val=t$ &~$s=1, \val=w_j$\\
           ~$a_2$ &~$\si=0, \val=t$ &~$s=1, \val=0$ &~$s=1, \val=w_j$\\
        \end{tabular}
    \end{center}
    The idea again is that the large items introduce some initial envy, so the rest of the items have to be split equitably (or, in an envy-free way, to use the terms of our problems) between the agents.

    For correctness, let~$J$ be a solution for~$\mathcal{J}$. We construct an allocation~$\alloc=(\bundle_1,\bundle_2)$ such that~$\bundle_1 = \{G_1\} \cup \{ g_j \mid j\in J \}$ and~$\bundle_2 = \{ G_2 \} \cup \{ g_j \mid j\in[\ell]\setminus J \}$, and claim that~$\alloc$ is EF1 allocation. For agent~$a_1$, we have~$\val_{a_1}(\bundle_1) = t$ and~$\val_{a_1}(\bundle_2) = 2t$; however, the envy can be eliminated by the removal of~$G_2$. The argument for~$a_2$ is symmetric. That is,~$\alloc$ is EF1 and is obviously SIM.

    In the opposite direction, let~$\alloc=(\bundle_1,\bundle_2)$ be an EF1 and SIM allocation. Since~$\alloc$ is SIM, it must be the case that~$G_1\in \bundle_1$ and~$G_2\in \bundle_2$. Now, we construct a solution for~$\mathcal{J}$ by setting~$J = \{ j\in[\ell] \mid g_j \in \bundle_1 \}$. It remains to show that~$J$ is indeed a solution. For the sake of contradiction, let~$\sum_{j\in J} w_j < \sum_{j\in[\ell]\setminus J} w_j$ and~$\delta$ be the difference between these two sums. Then, in~$\mathcal{I}$, it holds that~$\val_{a_1}(A_2) = t + t + \delta = 2t + \delta$, while~$\val_{a_1}(A_1) = t - \delta$. Even if we remove the best item from~$\bundle_2$, the envy from~$a_1$ towards~$a_2$ persists. This contradicts that~$\alloc$ is an EF1 allocation. Hence,~$\sum_{j\in J} w_j \geq \sum_{j\in[\ell]\setminus J} w_j$. If~$\sum_{j\in J} w_j > \sum_{j\in[\ell]\setminus J} w_j$, then by symmetric arguments we obtain that~$a_2$ has EF1-envy towards~$a_1$. Thus, it must be the case that~$\sum_{j\in J} w_j = \sum_{j\in[\ell]\setminus J} w_j$, which shows that~$J$ is indeed a solution for~$\mathcal{J}$.

    \noindent{\bfseries sEF1, wEF1, swEF1.}\hspace{0.25cm}In case of two agents, sEF1 and EF1 coincides. Moreover, EF1 is a special case of wEF1 where all weights are the same, and the same is true for sEF1 and swEF1. Therefore, the construction works for all these notions without any modification.

    \noindent\textbf{tEF1.}\hspace{0.25cm} The only modification for the case of tEF1, we duplicate~$G_1$ and~$G_2$. That is, there are two copies of special items specific for each agent.

    \noindent\textbf{EFL.}\hspace{0.25cm} 
    For EFL, the reduction is identical. Observe that each agent values as~$t$ the large item he wants and that is allocated to the other agent due to social impact. Hence he has to obtain at least value of~$t$ from standard items.
\end{proof}

However, the previous hardness result is highly dependent on the fact that the agents' valuations for the items are exponential in the input size. That is, the result shows only weak \NPhness. In the following, we complement the hardness and show that if the number of agents is a fixed constant and the valuations are polynomially bounded in the input size, then there is an efficient algorithm.
Notably, the algorithm is not based on an explicit DP formulation, as one might expect, but rather models our problem as graph reachability. Thanks to this, the algorithm can be implemented very efficiently by generating the graph on-demand and finding the solution using, e.g., a variant of the A$^*$ algorithm~\cite{HartNR1968,DechterP1985,Korf1997,ZhouZ2015}.

\begin{theorem}\label{UMandEF1:pseudopolynomialConstantNumberOfAgents}
    For every fixed constant number of agents~$\numAgents$, there exists a pseudo-polynomial-time algorithm deciding whether a SIM and~$\mathcal F$-fair allocation exist for any~$\mathcal{F}\in\{\text{tEF1},\text{EF1},\text{sEF1},\text{wEF1},\text{swEF1},\text{EFL},\text{EF}\}$.
\end{theorem}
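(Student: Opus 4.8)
The plan is to rephrase the question as reachability in a layered, implicitly defined digraph whose nodes are bounded integer ``summaries'' of partial allocations; for a fixed number of agents this digraph has pseudo-polynomially many nodes and out-degree $O(\numAgents)$, so it can be searched (in particular, generated on demand by an $A^*$-style routine). First, by \Cref{UMandEF1:socialImpactCanBeBinary} I may assume every $\si_i$ is binary; write $N_g=\{\,i\in\agents:\si_i(g)=1\,\}$, which is always nonempty. Since social impact is additive, $\SI(\alloc)=\sum_{g\in\items}\si_{\pi(g)}(g)\le\sum_{g\in\items}\max_i\si_i(g)$, where $\pi(g)$ is the agent receiving $g$, with equality exactly when $\pi(g)\in N_g$ for every $g$. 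Thus ``$\alloc$ is SIM'' is precisely the \emph{per-item} constraint that each item go to one of its impact-maximisers, and the task becomes deciding whether some allocation obeying all these constraints is $\mathcal F$-fair.

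Fix an arbitrary order $g_1,\dots,g_m$ of the items and put $V=\max_{i,g}\val_i(g)$. A node of layer $k$ records only the data about the first $k$ items needed both to continue the allocation and to certify $\mathcal F$-fairness: the value $\val_i(\bundle_j)$ of each current bundle to each agent (all ordered pairs $(i,j)$), plus a little $\mathcal F$-specific bookkeeping. For EF nothing more is needed. For EF1 and wEF1 I also store $\max_{g\in\bundle_j}\val_i(g)$ per pair; for EFL, the two largest values of $\val_i$ over $\bundle_j$ (which, together with $\val_i(\bundle_j)$ and $\val_i(\bundle_i)$, decide both EFL clauses --- the ``at most one positively valued item'' clause amounts to the second-largest value being $0$); for tEF1, $\max_{g\in\bundle_j}(\val_i(g)+\val_j(g))$ per ordered pair, as the transfer condition requires. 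For sEF1 and swEF1 --- where a \emph{single} removed item must neutralise every agent's envy towards $j$ --- the walk instead nondeterministically \emph{commits} (at most) one item of each bundle $\bundle_j$ as the intended witness, and the node stores, for each agent $i$, both $\val_i(\bundle_j)$ and $\val_i(\bundle_j\setminus\{\text{witness}_j\})$, together with a ``committed?'' flag per bundle. Every stored quantity is a nonnegative integer of size $O(mV)$ and there are $O(\numAgents^2)$ of them, so each layer has at most $(mV+O(1))^{O(\numAgents^2)}$ nodes --- pseudo-polynomial for fixed $\numAgents$.

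The layer-$0$ node is all-zero. From a layer-$k$ node there is an arc, for each $i\in N_{g_{k+1}}$ (and, in the sEF1/swEF1 variant, for each choice of whether $g_{k+1}$ is committed as the witness of $\bundle_i$), to the layer-$(k+1)$ node obtained by putting $g_{k+1}$ into $\bundle_i$ and updating each stored quantity in $O(1)$ time. A layer-$m$ node is \emph{accepting} iff its quantities pass the $\mathcal F$-fairness test: e.g.\ for EF1 one checks $\val_i(\bundle_i)\ge\val_i(\bundle_j)-\max_{g\in\bundle_j}\val_i(g)$ for all pairs; for swEF1, that each nonempty $\bundle_j$ carries a committed witness with $\wFn_j\val_i(\bundle_i)\ge\wFn_i\val_i(\bundle_j\setminus\{\text{witness}_j\})$ for every $i$ (clearing denominators, weights may be assumed integral); the remaining notions translate analogously. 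The answer is \Yes iff some accepting layer-$m$ node is reachable from the start, and this is decided by any graph search. Correctness holds because the stored quantities are updated exactly as the true bundle quantities change, so along any path the layer-$k$ node is the faithful summary of the allocation encoded by the arc choices; hence reachable accepting layer-$m$ nodes correspond precisely to SIM allocations passing the $\mathcal F$-test (for the witness variants, from a given $\mathcal F$-fair allocation one commits actual witness items realising the guarantee). The digraph has $O(m)\cdot(mV+O(1))^{O(\numAgents^2)}$ nodes and out-degree $O(\numAgents)$, so BFS/DFS, or an on-demand $A^*$ search, runs in pseudo-polynomial time for every fixed $\numAgents$.

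\textbf{The main obstacle} is engineering a state that is at once small and rich enough for all seven notions: the per-pair maxima that settle EF, EF1, wEF1, EFL, and tEF1 do \emph{not} capture sEF1 and swEF1, because there the removed item must work for all agents simultaneously and different agents' maximisers in $\bundle_j$ may be distinct items; committing one witness item per bundle and carrying ``$\bundle_j$ minus its witness'' in the state repairs this with only $O(\numAgents)$ additional integers. A secondary point is to keep the bound honest --- valuations are given in binary, so $V$ may be exponential in the input length, and the algorithm is polynomial in $m$, $V$, and a fixed power of $\numAgents$, i.e.\ pseudo-polynomial, matching the weak \NPhness of \Cref{thm:two_agents_hard}.
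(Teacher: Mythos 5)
Your proposal is correct and takes essentially the same route as the paper: a layered reachability graph whose nodes store, for each ordered pair of agents, the bundle value plus a small amount of $\mathcal F$-specific bookkeeping, with arcs restricted to social-impact maximisers to enforce SIM and a fairness test at the final layer (including the same nondeterministic ``commit a witness item'' trick for sEF1/swEF1, and even the same on-demand A$^*$ remark). The only differences are cosmetic choices of what to store for EFL and tEF1, where the paper instead reuses the witness-commitment arcs and the single per-pair maximum, respectively.
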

\begin{proof}
    Let~${\nu = \max_{i\in{\agents}}\max_{g\in\items} |v_i(g)|}$ be the largest value an agent can receive from a single item.
    We build a directed acyclic graph~$G = (V,E)$ with~$\numItems+2$ layers of vertices~$V_s = \{s\}, V_t = \{t\}, \{ V^i \mid i \in \items \}$.
    The vertex~$s$ is the source and~$t$ is the sink of the graph.
    Vertices in the layer~$V^i$ are in one-to-one correspondence with the set
    \[
        \{ 0, \ldots, \nu \cdot \numItems \}^{\agents \times \agents}
        \times
        \{ 0, \ldots, \nu \}^{\agents \times \agents}
        \,,
    \]
    that is, the product of 
    \begin{enumerate}
        \item a set of~$\numAgents^2$-sized tuples of numbers in~$[\nu \cdot \numItems]_0$ (this way, we store the value of the bundles as seen by all the agents), and
        \item a set of~$\numAgents^2$-sized tuples of numbers in~$[\nu]_0$ (we store the value of a largest item as seen by all the agents).
    \end{enumerate}
    The intended meaning of such a vertex~$v^i_{\bm{x}, \bm{y}}$ is that we are having a partial allocation assigning items in~$\items$ preceding~$i$ in some fixed order in such a way that the value~$x_{a,b} = \val_a(\bundle_b)$ (note that we allow~$a=b$) and the value of a best item in the bundle~$\bundle_b$ of~$b \in \agents$ as seen by~$a \in \agents$ is~$y_{a,b}$.
    We interpret the source vertex~$s$ as the vertex~$v^0_{\bm{0}, \bm{0}}$, i.e., no items assigned and all the values set to~$0$.

    We begin the exposition for the objective EF1.
    We construct the arcs of the graph in a layer by layer fashion.
    The arcs are always between consecutive layers.
    Let~$i \in \items$ be an item, we construct the arcs from the previous layer~$V^j$ to the layer~$V^i$ as follows.
    For each vertex~$v^j_{\bm{x}, \bm{y}}$ we add at most~$\numAgents$ arcs corresponding to assigning the item~$i$ to each agent.
    That is, for each agent~$c \in \agents$ we add the arc~$(v^j_{\bm{x}}, v^j_{\bm{\widehat{x}}, \bm{\widehat{y}}})$ if~$c$ is an agent maximizing the social impact if~$i$ is assigned to them,
    \[
        \widehat{x}_{a,b} =
        \begin{cases}
            x_{a,b} & a,b \in \agents \setminus \{c\} \\
            x_{c,a} & a \in \agents \setminus \{c\} \\
            x_{a,c} + \val_a(i) & a \in \agents
        \end{cases}
        \,,
    \]
    in other words, all the agents add their value of~$i$ to how they perceive the bundle~$\bundle_c$ of agent~$c$ (including~$c$); and we update the value~$y_{a,b}$ based on if~$i$ is a best item in the bundle of~$c$ as viewed by the other agents
    \[
        \widehat{y}_{a} =
        \begin{cases}
            y_{a,b} & a \in \agents, b \in \agents \setminus \{c\} \\
            \max \{ y_{a,c}, \val_a(i) \} & a \in \agents
        \end{cases}
        \,.
    \]
    The arcs from the last layer to the sink~$t$ we add from all vertices corresponding to EF1 allocations, that is, if for all pairs of agents~$a,b \in \agents$ the following holds for the vertex~$v_{\bm{x}, \bm{y}}$ in the last layer:
    \[
        x_{a,a} \ge x_{a,b} - y_{a,b} \,,
    \]
    in other words, if we remove one of the most valuable items (with the value~$y_{a,b}$) from~$\bundle_b$ (which is valued by~$a$ as~$x_{a,b}$), then~$a$ does not envy to~$b$ (since~$a$ values their bundle to~$x_{a,a}$).
    Now, it follows from the construction that any~$s$-$t$-path in~$G$ corresponds to allocation of all items to agents that maximizes the social impact and is EF1.
    Furthermore,~$G$ is acyclic, since we always increase the number of the layer (and never go back).
    To finish the argument, we discuss the size of~$G$.
    The number of vertices of~$G$ is
    \begin{align*}
        2 + &|[\nu \cdot \numItems]_0^{\agents \times \agents} \times [\nu]_0 ^{\agents \times \agents} | \cdot \numItems \\
        &=
        2 + \left( (1 + |\nu| \cdot \numItems)^{\numAgents^2} + ((1+\nu)^{\numAgents^2}) \right)\cdot \numItems
    \end{align*}
    a size that is pseudopolynomial if~$\numAgents$ is a constant.
    Note that we can find an~$s$-$t$-path if it exists using, e.g., BFS in linear-time in the size of the graph.
    We stress here that even though the size of the graph~$G$ is clearly impractical even for mild settings of~$\nu, \numAgents, \numItems$, we can generate it in an on-demand fashion and find the solution quickly in practice using, e.g., the A$^*$ algorithm.

    \noindent\textbf{wEF1.}
    If we change the objective from EF1 to wEF1, the only affected part of the construction of the graph~$G$ are the edges towards the sink~$t$.
    This is the only place which is objective-dependent.
    The arcs from the last layer to the sink~$t$ we add from all vertices corresponding to wEF1 allocations, that is, if for all pairs of agents~$a,b \in \agents$ the following holds for the vertex~$v_{\bm{x}, \bm{y}}$ in the last layer:
    \[
        \frac{x_{a,a}}{w_a} \ge \frac{x_{a,b} - y_{a,b}}{w_b} \,,
    \]
    in other words if we remove a best item (with the value~$y_{a,b}$) from~$\bundle_b$ (that is valued by~$a$ a~$x_{a,b}$) and divide by the weight~$w_b$ of~$b$, then~$a$ does not envy to~$b$ (because they value their bundle to~$x_{a,a}$) when devided by the weight~$w_a$ of~$a$.
    Now, it follows from the construction that any~$s$-$t$-path in~$G$ corresponds to allocation of all items to agents that maximizes the social impact and is wEF1.

    \noindent\textbf{sEF1.}
    We change the way in which we update the vector~$\bm{y}$ during the construction of the graph~$G$.
    In this case, we connect the vertex~$v^j_{\bm{x}, \bm{y}}$ with two vertices--$v^j_{\bm{\widehat{x}}, \bm{\widehat{y}}}$ and~$v^j_{\bm{\widehat{x}}, \bm{y}}$.
    Essentially, the two correspond to an additional decision if we decide that~$i$ should or should not be the item we remove from~$a$'s bundle while deciding if the allocation is sEF1 (if not,~$\bm{y}$ stays the same).
    Therefore, we update the value~$y_{a,b}$ as follows
    \[
        \widehat{y}_{a} =
        \begin{cases}
            y_{a,b} & a \in \agents, b \in \agents \setminus \{c\} \\
            \val_a(i) & a \in \agents
        \end{cases}
        \,.
    \]
    The rest is done in the same way as in EF1.

    \noindent\textbf{tEF1.}
    We verify that
    \[
        x_{a,a} + y_{a,b} \ge x_{a,b} - y_{a,b} \,,
    \]
    when adding arcs to~$t$, since this corresponds not only to~$b$ not having the item of~$a$-value~$y_{a,b}$ but at the same time increasing the value of~$a$'s bundle by adding this item to their bundle.

    \noindent\textbf{EF.}
    In this case, we can work with~$V^i_{\bm{x}}$ only, since we do not need to remove any items and therefore it suffices to track the values of the bundles.
    
    \noindent\textbf{EFL.}
    This case is similar to the sEF1 case, that is, we add two arcs corresponding to updating the vector~$\bm{y}$ or not.
    Finally, when adding the arcs to the sink~$t$, we have to check
    \[
        x_{a,a} \ge x_{a,b} - y_{a,b} \quad\text{and}\quad y_{a,b} \le x_{a,a} \,,
    \]
    \[
        x_{a,a} \ge x_{a,b} \,
    \]
    or
    \[
        x_{a,b} = y_{a,b}
    \]
    holds.
\end{proof}

\section{Socially Aware Agents}\label{sec:socially_aware}

Motivated by the experimental work of \citet{HerreinerP2009} and \citet{HosseiniKSVX2025}, which explore what real-life users perceive to be fair outcomes, \citet{Hosseini2024} argues that \emph{``developing fair algorithms may require axioms that are able to capture solutions that take the social context into account beyond perceived envy.''} In this section, we build up on this idea and follow the approach introduced by \citet{FlamminiGV2025} and assume \emph{socially aware} agents. Intuitively, an agent is socially aware if they are willing to accept a formally unfair outcome in case it leads to a greater social impact. 

Formally, we define envy-freeness with socially aware agents as follows.
 
\begin{definition}\label{def:SA-EF}
    We say that an allocation~$\alloc$ is \emph{envy-free with socially aware agents} (SA-EF), if for each pair of agents~$i,j\in\agents$, at least one of the following conditions holds:
    \begin{align*}
        && \val_i(\bundle_i) \geq \val_i(\bundle_j) && \text{or} && s_i(\bundle_j) < s_j(\bundle_j)\,. &&
    \end{align*}
\end{definition}
Based on the above definition, we will say that an agent~$i$ \emph{SA-envies}~$j$, if and only if
$\val_i(\bundle_i) < \val_i(\bundle_j) \text{ and } s_i(\bundle_j) \ge s_j(\bundle_j)$.
Notice that if the social impact of all agents is the same for all agents, SA-EF simply reduces to EF, a computationally notoriously hard fairness notion which is rarely satisfiable. Henceforth, in the rest of the section, we focus on natural relaxations of EF.

\begin{definition}\label{def:SA-EF1}
    We say that an allocation~$\alloc$ is \emph{envy-free up to one item with socially aware agents} (SA-EF1), if for each pair of agents~$i,j\in\agents$, one of the following conditions hold:
    \begin{enumerate}
        \item~$g\in \items$ exists such that~$\val_{i}(\bundle_i) \geq \val_{i}(\bundle_j\setminus\{g\})$, or
        \item~$s_i(\bundle_j) < s_j(\bundle_j)$.
    \end{enumerate}
\end{definition}

Other fairness notions, such as SA-wEF1 and SA-EFL, are then defined analogously to SA-EF1, just with the first condition modified accordingly. Based on the above definition, we will say that an agent~$i$ \emph{SA-envies}~$j$, if and only if
$\val_i(\bundle_i) < \val_i(\bundle_j) \text{ and } s_i(\bundle_j) \ge s_j(\bundle_j)$.

First, we observe that the second condition in the definition of social awareness is meaningful in the sense that it captures the intuition of socially aware agents but is not strong enough to allow for some very unfair solutions.

\begin{observation}
    If we replace~$<$ with~$\leq$ in the second part of the definition of SA-EF (and its relaxations), any SIM allocation is a solution. 
\end{observation}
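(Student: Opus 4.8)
The plan is to show that with the relaxed definition — where the social-awareness escape clause reads $s_i(\bundle_j) \leq s_j(\bundle_j)$ instead of the strict inequality — every SIM allocation trivially satisfies the fairness condition for every pair of agents, so any SIM allocation is a valid solution (and SIM allocations always exist, as noted right after the definition of SIM).

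\begin{proof}
    Let~$\alloc = (\bundle_1,\ldots,\bundle_\numAgents)$ be any SIM allocation; such an allocation exists by the remark following the definition of SIM. Fix an arbitrary ordered pair of agents~$i,j\in\agents$. We claim that the (relaxed) social-awareness escape clause holds, that is,~$s_i(\bundle_j) \leq s_j(\bundle_j)$. Suppose, for contradiction, that~$s_i(\bundle_j) > s_j(\bundle_j)$. Consider the allocation~$\alloc'$ obtained from~$\alloc$ by transferring the whole bundle~$\bundle_j$ from agent~$j$ to agent~$i$; formally~$\bundle'_i = \bundle_i \cup \bundle_j$,~$\bundle'_j = \emptyset$, and~$\bundle'_k = \bundle_k$ for every~$k \notin \{i,j\}$. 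Since the social impact functions are additive and normalized, we have
    \[
        \SI(\alloc') - \SI(\alloc) = \bigl(s_i(\bundle_i \cup \bundle_j) + 0\bigr) - \bigl(s_i(\bundle_i) + s_j(\bundle_j)\bigr) = s_i(\bundle_j) - s_j(\bundle_j) > 0\,,
    \]
    contradicting the assumption that~$\alloc$ is SIM. Hence~$s_i(\bundle_j) \leq s_j(\bundle_j)$, so the second condition in the relaxed definition of SA-EF is satisfied for the pair~$(i,j)$. As~$i,j$ were arbitrary, this holds for all pairs, and therefore~$\alloc$ is a solution under the relaxed definition. The same argument applies verbatim to the relaxations SA-EF1, SA-wEF1, and SA-EFL, since in each case satisfying the second (unchanged) escape clause for all pairs already suffices.
\end{proof}

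The only real content is the single transfer argument showing that in a SIM allocation no agent can strictly prefer (in social-impact terms) to absorb another agent's entire bundle; everything else is bookkeeping. I expect no genuine obstacle here — the main thing to be careful about is using additivity of~$s_i$ to equate the change in total social impact with~$s_i(\bundle_j) - s_j(\bundle_j)$, and noting that this same comparison is exactly what the relaxed escape clause negates, so it is the contradiction we need. One may also wish to remark briefly why this observation matters: it shows the strict inequality in \Cref{def:SA-EF} is not an arbitrary choice, since weakening it collapses the notion and makes fairness vacuous.
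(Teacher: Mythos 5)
Your proposal is correct and rests on exactly the same key fact as the paper's proof, namely that in any SIM allocation $s_i(\bundle_j) \leq s_j(\bundle_j)$ holds for every pair of agents, so the relaxed escape clause is always satisfied. The paper simply asserts this inequality while you justify it with a short bundle-transfer exchange argument; the substance is the same.
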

\begin{proof}
    In any SIM allocation, for each pair of agents~$i$ and~$j$ we have~$\si_i(\bundle_j) \leq \si_j(\bundle_j)$, so the second condition of the definition (with~$\leq$) is always satisfied.
\end{proof}

In their paper, \citet{FlamminiGV2025} proved that there is a modification of the famous envy-cycle elimination algorithm~\cite{LiptonMMS2004} that always finds an SA-EF1 allocation. We extend their result in two ways.
In our first result, we show that SA-swEF1 allocations (and hence also SA-wEF1 allocations) are also guaranteed to exist.
Interestingly, our approach is based on a picking sequence similar to the round-robin algorithm. More precisely, the algorithm is a modification of the weighted picking sequence protocol proposed by \citet{ChakrabortyISZ2021}, which reduces to a round-robin approach in the case of identical agent weights. However, we only allow agents to pick goods for which they maximize social impact, and we force agents to skip their turn whenever they do not maximize the social impact of any unallocated good. The proof then basically follows from \citet{ChakrabortyISZ2021} and a simple observation that agent~$i$ can SA-envy agent~$j$ only if~$j$ chose only goods for which~$i$ maximizes social impact. Therefore, from the point of view of~$i$, we can completely ignore all agents that get an item for which~$i$ does not maximize social impact. 
On the other hand, if we consider an auxiliary instance in which agents only value items for which they maximize impact and get a dummy item that everyone values~$0$ whenever they skip a turn we get an allocation that is swEF1 according to \citet{ChakrabortyISZ2021} and in which: 1) All agents get same bundle plus some dummy items. 2) Every agent values every bundle for which the agent maximizes the social impact and all non-dummy items in it exactly the same in both instance.

\begin{theorem}\label{thm:sa_swef1_goods}
    SIM and SA-swEF1 allocation is guaranteed to exist and can be found in polynomial time.
\end{theorem}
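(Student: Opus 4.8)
The plan is to output the allocation produced by a restricted variant of the weighted picking-sequence protocol of \citet{ChakrabortyISZ2021} in which only social-impact-maximizing picks are allowed. We use their weight-based picking sequence (extended periodically if necessary), but when it is agent~$i$'s turn, $i$ may only pick an unallocated good for which $i$ maximizes the social impact, and among those $i$ takes one of maximum $\val_i$, ties broken arbitrarily; if no unallocated good is maximized by~$i$, then $i$ skips. We continue until all of~$\items$ is allocated, which happens after polynomially many turns, since every still-unallocated good has a social-impact maximizer who reappears in the sequence and who therefore does not skip while that good is present. As every good ends up with one of its social-impact maximizers, the output allocation~$\alloc$ is SIM, and the procedure runs in polynomial time; it remains to show that~$\alloc$ is SA-swEF1.

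The structural fact driving everything is: \emph{if $i$ SA-envies $j$ in $\alloc$, then $i$ is a social-impact maximizer of every good in $\bundle_j$}. Indeed, $j$ picked only goods it maximizes, so $\si_j(\bundle_j)=\sum_{g\in\bundle_j}\si_j(g)\ge\sum_{g\in\bundle_j}\si_i(g)=\si_i(\bundle_j)$; since SA-envy requires $\si_i(\bundle_j)\ge\si_j(\bundle_j)$, equality holds termwise, i.e., $\si_i(g)=\si_j(g)=\max_k\si_k(g)$ for all $g\in\bundle_j$. Consequently, from $i$'s point of view, the only relevant agents are those whose whole bundle is built out of goods $i$ maximizes; for every other $j$ the second disjunct of SA-swEF1, namely $\si_i(\bundle_j)<\si_j(\bundle_j)$, holds automatically, so we may effectively ignore those agents.

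To convert this into SA-swEF1 I would reduce to an auxiliary instance $\mathcal{I}'$ on the same agents and weights, obtained from $\mathcal{I}$ by adding sufficiently many \emph{dummy} goods, valued $0$ by everyone, and by truncating valuations to $\val'_i(g)=\val_i(g)$ whenever $i$ maximizes $\si$ for $g$, and $\val'_i(g)=0$ otherwise. Running the unrestricted protocol of \citet{ChakrabortyISZ2021} on $\mathcal{I}'$ along the same picking sequence, with the tie-break ``prefer a real good you maximize over a dummy over a real good you do not maximize'', replays our restricted run step by step: a pick in our algorithm becomes the identical pick in $\mathcal{I}'$, and a skip becomes a dummy pick. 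Hence the output $\alloc'$ of $\mathcal{I}'$ restricted to $\items$ equals $\alloc$, and by \citet{ChakrabortyISZ2021} $\alloc'$ is swEF1. Now fix $j$, let $g^\star$ be a swEF1 witness for $j$ in $\alloc'$ (a real item or a dummy), and set $g_j=g^\star$ if $g^\star\in\items$ and let $g_j$ be an arbitrary item otherwise. For every $i$ that SA-envies $j$, the structural fact gives $\val'_i(\bundle_i)=\val_i(\bundle_i)$ (our algorithm gives $i$ only goods it maximizes) and that $\val'_i$ agrees with $\val_i$ on $\bundle_j$; combined with $\bundle'_j\setminus\{g^\star\}\supseteq\bundle_j\setminus\{g_j\}$, nonnegativity of $\val'_i$, and swEF1 of $\alloc'$, this yields $\val_i(\bundle_i)/\wFn_i\ge\val_i(\bundle_j\setminus\{g_j\})/\wFn_j$; for every $i$ that does not SA-envy $j$, the social-impact disjunct holds. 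Thus $g_j$ witnesses SA-swEF1 for $j$, and as $j$ was arbitrary, $\alloc$ is SA-swEF1.

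The step I expect to be delicate is the correspondence in the last paragraph: I have to pin down the exact model of \citet{ChakrabortyISZ2021} (agents picking a favourite remaining item, possibly of value $0$, along a weighted sequence, with the partial allocation swEF1 after every round), supply enough dummies so that an agent who would skip always has one available, and verify that the stated tie-break forces their protocol to reproduce our restricted run exactly, so that their swEF1 guarantee transfers unchanged. Everything else — termination, the SIM property, and the disjunctive bookkeeping above — is routine.
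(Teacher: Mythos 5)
Your proposal is correct and follows essentially the same route as the paper's proof: the same restricted weighted picking sequence (picks limited to social-impact-maximizing goods, skips otherwise), the same observation that by additivity an SA-envied bundle must consist entirely of goods the envier maximizes, and the same reduction to an auxiliary instance with zero-valued dummy items and truncated valuations on which the unrestricted protocol of \citet{ChakrabortyISZ2021} replays your run and delivers the swEF1 witness. The step you flag as delicate is exactly the one the paper handles by an explicit induction over rounds, matching each skip to a dummy pick.
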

\begin{proof}
    Let us start the proof by describing the weighted picking sequence protocol by \citet{ChakrabortyISZ2021}. Agents pick iteratively items in rounds and we keep for each agent~$i\in\agents$ the integer~$t_i$ representing how many items they picked so far, initialized as~$t_i=0$. In each round, an agent~$i$ that minimizes the value~$\nicefrac{t_i}{w_i}$ picks their preferred item and increases their~$t_i$ by one. \citet{ChakrabortyISZ2021} show that if agent~$i$ picks in some round, then in the partial allocation up to and including~$i$’s latest pick, agent~$j$ is weighted envy-free toward~$i$ up to the first item~$i$ picked. Since allocating goods to agents other than~$i$ cannot increase envy towards~$i$, it follows that every partial allocation, including the final allocation, is swEF1. 

    Let us now modify the picking sequence, such that in each round an agent~$i$ that minimizes the value~$\nicefrac{t_i}{w_i}$ is picked and~$t_i$ is increased by~$1$. However, agent~$i$ is only allowed to pick an item for which they have maximum social impact among all agents and they do not get any item if there is no such item for which they maximize the social impact left. Let~$\sigma$ denote the final picking sequence (that is, in the round~$t$, the agent~$\sigma(t)$ picks an item). Note that the number of rounds can be larger than~$|\items|$, since agents might not be allowed to pick an item. 

    It is easy to see that the final allocation~$\alloc = (\bundle_1, \ldots,\bundle_n)$ is SIM, since for every~$i\in \agents$, the bundle~$\bundle_i$ only contains items for which agent~$i$ has maximum social impact. 
    Moreover, since for all~$i\in\agents$, the social impact function is additive, it follows that if for some pair of agents~$i,j\in\agents$ there is a good~$g\in \bundle_j$ such that~$s_i(g) < s_j(g)$, then~$s_i(\bundle_j) < s_j(\bundle_j)$.
    Let us now argue that~$\alloc$ is also SA-swEF1. Let us fix an agent~$i\in\agents$. As argued above, if there is a good~$g\in \bundle_j$ in the bundle of some agent~$j\in \agents\setminus\{i\}$ such that~$s_i(g) < s_j(g)$, then~$s_i(\bundle_j) < s_j(\bundle_j)$ and the condition of SA-swEF1 from~$i$ towards~$j$ is satisfied. It remains to show that swEF1 condition is satisfied as well for the remaining agents~$j$ that only receive items for which~$i$ maximizes the impact. To do so, we define an auxiliary instance without social impact and with some dummy items, where agent~$j\in\agents$ only values items for which it maximizes the impact and gets as its bundle~$\bundle_j$ plus some dummy items that everyone values~$0$, and we argue that this allocation is swEF1.
    Now, let~$m'\ge |\items|$ be the final number of rounds of the picking sequence~$\sigma$ and let us consider instance~$\mathcal{I}' = (\agents, \items', (v_j')_{j\in\agents'}, (w_j)_{j\in\agents})$ for swEF1 (without social impact) such that 
    \begin{itemize}
        \item~$\items \subseteq \items'$ and~$|\items'| = m'$, 
        \item for all~$j\in \agents$ and all~$g\in \items~$,~$v_j'(g) = v_j(g)$ if~$j$ maximizes impact of~$g$ and~$v_j'(g) = 0$ otherwise, 
        \item for all~$j\in \agents$ and all~$g\in \items'\setminus\items~$,~$v_j(g) = 0$.
    \end{itemize}
    Note that~$\sigma$ is the weighted picking sequence by \citet{ChakrabortyISZ2021} for the instance~$\mathcal{I}'$. Let us show by induction that if in the round~$t$, the agent~$\sigma(t)$ chose item~$g$ in our modified weighted picking sequence protocol, then~$\sigma(t)$ can chose~$g$ according to weighted picking sequence protocol by \citet{ChakrabortyISZ2021} for the instance~$\mathcal{I}'$ and if~$\sigma(t)$ chooses nothing in our protocol, then it can chose a \emph{dummy} item in~$\items'\setminus \items$. Let~$t\in [m']$ and assume that we are following weighted picking sequence protocol by \citet{ChakrabortyISZ2021} for~$\mathcal{I}'$ and in every round~$t'< t$, the agents~$\sigma(t')$ either chose the same item as the agent~$\sigma(t')$ in our modified protocol or a dummy item from~$\items'\setminus \items$. Note, this assumption holds vacuously for~$t=1$. It follows from our assumption that at the beginning of the round~$t$ the set of assigned items from~$\items$ is exactly the same in both protocols and the number of assigned dummy items is exactly the same as the number of times an agent skipped turn. Now, if agent~$\sigma(t)$ skips turn in our modified protocol, then all items for which~$\sigma(t)$ maximizes impact are already allocated. But all remaining items, the agent~$\sigma(t)$ values~$0$ according to~$v_{\sigma(t)}'$. Moreover, by our assumption, not all dummy items have been assigned yet, so~$\sigma(t)$ can choose a dummy item. Otherwise, in our modified protocol,~$\sigma(t)$ chooses an item~$g$ with a maximum value under~$v_{\sigma(t)}$ among the items for which~$\sigma(t)$ maximizes impact. Since under~$v'_{\sigma(t)}$, the agent~$\sigma(t)$ only values items for which it maximizes impact,~$g$ is a maximum valued unassigned item at the moment according to~$v'_{\sigma(t)}$, so weighted picking sequence protocol can assign~$g$ to~$\sigma(t)$ in the round~$t$. 
    This way we obtain an allocation~$\alloc'= (\bundle_1',\ldots, \bundle_n')$ that is swEF1 for the instance~$\mathcal{I}'$ such that for all~$j\in \agents$ it holds that~$\bundle_j\subseteq \bundle_j'$ and~$\bundle_j'\setminus \bundle_j\subseteq \items'\setminus\items$. It then follows that if agent~$i$ maximize the social impact of the bundle of agent~$j$, then for every~$g\in \bundle_j$, we have~$v_i(g) = v_i'(g)$ and for every~$g\in \bundle_j'\setminus \bundle_j$, we have~$v_i'(g) = 0$. Hence, if~$g_1^j$ is the first item that~$j$ picked in the picking sequence, then~$v_i(A_i) = v_i'(A_i') \ge v_i'(A_j'\setminus \{g_1^j\}) = v_i(A_j\setminus \{g_1^j\})$ and so removing the the first item~$j$ picked, removes envy from~$i$ towards~$j$. 
    
    To summarize, for every pair~$i,j\in \agents$, either~$A_j$ contains an item~$g\in\items$ such that~$s_i(g) < s_j(g)$ and so~$s_i(A_j)< s_j(A_j)$ or removing the first item that~$j$ picked removes any potential weighted-envy from~$i$ towards~$j$. Hence, the allocation~$\alloc$ is SA-swEF1.
\end{proof}

The above approach naturally extends to all fairness notions we consider except for SA-EFL. It is not a coincidence, as it turns out that \emph{any} picking sequence protocol that computes picking sequence disregarding the valuation functions is doomed from the start for SA-EFL. Observe that the first agent in the sequence that is allowed to pick more than once can choose an item that everyone, including the agent, considers more valuable than the sum of all remaining items (not assigned until the first pick of the agent); to give a more concrete example, consider the following.

\begin{example}
    Let us assume an instance with two agents, four items, and both agents have identical social impact and identical valuations for these items. Let these identical valuations be~$100$,~$1$,~$1$,~$1$. Consider the round-robin picking sequence, in which each agent picks two items. So, the final allocation is~$(\{100,1\}, \{1,1\})$. Clearly, agent 2 with the bundle~$\{1,1\}$ envies agent 1 with the bundle~$\{100,1\}$, and only removal of~$100$ would remove this envy. However,~$100 > 2 = 1+1$, so this allocation is not EFL. The only way to get an EFL allocation for this instance is to give the large item to one agent and the remaining items to the other.
\end{example}

However, we obtain a positive result for EFL anyway. The algorithm to obtain SA-EFL is a combination of the ideas for SA-EF1 from \citet{FlamminiGV2025} and for EFL from \citet{BarmanBMN2018}. 
The algorithm is basically identical to the algorithm for finding EFL allocation from \citet{BarmanBMN2018}; however, agents are only allowed to pick goods for which they maximize social impact, and instead of using the standard envy-graph, whose vertices are agents and directed edges represent envy from one agent to another, we consider the so-called SA-envy graph, where directed edge represent \emph{SA-envy}. The correctness follows again from the observation that throughout the computation, each agent maximizes social impact of their bundle. So, once a bundle~$\bundle_j$ receives an item for which an agent~$i$ does not maximize the social impact, agent~$i$ will never envy an agent that has (a superset of) the bundle~$\bundle_j$. Hence, considering only the SA-envy graph is sufficient. The proof that the EFL property is satisfied if two agents SA-envy one another is then analogous to the proof of \citet{BarmanBMN2018}.

Before we prove our result for SA-EFL, let us define the SA-envy graph for a partial allocation~$\alloc$ introduced by \citet{FlamminiGV2025}. For a partial allocation~$\alloc$ of items to agents in~$\agents$, we let~$G_{\alloc}$ denote the directed graph whose vertex set is~$\agents$ and there is an arc from agent~$i$ to agent~$j$ in~$E(G_{\alloc})$ if and only if~$\val_i(\bundle_i) < \val_i(\bundle_j) \text{ and } s_i(\bundle_j) \ge s_j(\bundle_j)$; that is~$i$ SA-envies~$j$. An important subprocedure that we need to perform on~$G_{\alloc}$ is so-called \textsc{CycleElimination} (introduced by \citet{LiptonMMS2004}). The input of \textsc{CycleEliminiation} is~$G_{\alloc}$ and a cycle~$\mathcal{C}$ in~$G_{\alloc}$. \textsc{CycleEliminiation} then, for every directed edge from agent~$i$ to agent~$j$ along the cycle, takes the bundle~$\bundle_j$ of agent~$j$ and gives it to agent~$i$. \citet{LiptonMMS2004} observed that \textsc{CycleEliminiation} strictly decreases the number of edges in the envy graph (and the same holds for SA-envy graph by \citet{FlamminiGV2025}), hence after at most~$|E(G_{\alloc})|$ many applications of \textsc{CycleEliminiation} we obtain an acyclic SA-envy graph. 

\begin{theorem}\label{thm:EFL_poly}
    SIM and SA-EFL allocation is guaranteed to exist and can be found in polynomial time.
\end{theorem}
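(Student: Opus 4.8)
The plan is to run the EFL algorithm of \citet{BarmanBMN2018} with two modifications: whenever a good~$g$ is about to be handed out, only agents in~$\arg\max_{k\in\agents} s_k(g)$ are allowed to receive it, and throughout we work with the \emph{SA}-envy graph~$G_\alloc$ in place of the ordinary envy graph (both for choosing the recipient and inside \textsc{CycleElimination}). The whole correctness argument hinges on one structural invariant, which I would establish first: at every moment of the execution, every agent maximizes the social impact of their own bundle, i.e.\ each~$\bundle_i$ contains only goods~$g$ with~$i\in\arg\max_k s_k(g)$. This holds trivially at the start, is preserved when a good is assigned (by the $\arg\max$-restriction), and is preserved by \textsc{CycleElimination} for the following reason: the invariant forces~$s_i(\bundle_j)\le s_j(\bundle_j)$ for all~$i,j$ (summing~$s_i(g)\le s_j(g)$ over~$g\in\bundle_j$), so whenever~$i$ SA-envies~$j$ --- which requires~$s_i(\bundle_j)\ge s_j(\bundle_j)$ --- we get equality, hence~$s_i(g)=s_j(g)$ for every~$g\in\bundle_j$, so moving~$\bundle_j$ to~$i$ keeps the invariant and leaves~$\SI$ unchanged. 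In particular the output is SIM.

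Next I would show that a legal recipient always exists and that handing~$g$ to it creates no ``unexpected'' SA-envy. After applying \textsc{CycleElimination} until~$G_\alloc$ is acyclic, the subgraph induced by~$S_g=\arg\max_k s_k(g)$ is still a DAG, hence has a source~$s\in S_g$: no agent of~$S_g$ SA-envies~$s$. We give~$g$ to~$s$. For any~$j\notin S_g$ we have~$s_j(g)<s_s(g)$, so by the invariant~$s_j(\bundle_s\cup\{g\})=s_j(\bundle_s)+s_j(g)<s_s(\bundle_s)+s_s(g)=s_s(\bundle_s\cup\{g\})$, and therefore~$j$ does not SA-envy~$s$ after the assignment. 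The same computation, now with~$s_i(\bundle_s)<s_s(\bundle_s)$, shows that any~$i\in S_g$ that had a strict impact gap to~$s$ also cannot SA-envy~$s$ afterwards. Hence the only pairs for which SA-EFL must still be checked after the step are the pairs~$(i,s)$ with~$i\in S_g$ and~$s_i(\bundle_s)=s_s(\bundle_s)$ --- equivalently, the agents~$i$ that maximize the impact of every good in~$\bundle_s\cup\{g\}$, exactly as~$s$ does.

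For such a pair~$(i,s)$ the verification is essentially Barman et al.'s EFL argument: since~$s$ was a source of the~$S_g$-induced subgraph,~$i$ did not SA-envy~$s$ before the step, and~$s_i(\bundle_s)=s_s(\bundle_s)$ then forces~$v_i(\bundle_i)\ge v_i(\bundle_s)$ (no ordinary envy either); as~$i$'s bundle is unchanged this step, removing~$g$ restores non-envy, which is the first EFL sub-condition, while the ``one less preferred good'' sub-condition~$v_i(\bundle_i)\ge v_i(\{g\})$ follows from the rule Barman et al.\ use to pick which good goes to the current source --- the point being that, restricted to the agents of~$S_g$, our modified process is an instance of their original EFL algorithm. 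It then remains to note that \textsc{CycleElimination} preserves SA-EFL (each swap replaces an agent's bundle by one they value at least as much, so the EFL inequalities are not harmed, exactly as for EFL in~\citep{LiptonMMS2004,BarmanBMN2018}) and strictly decreases~$|E(G_\alloc)|$~\citep{FlamminiGV2025}; thus between two consecutive assignments there are at most~$\binom{\numAgents}{2}$ eliminations, giving~$\Oh{\numItems\cdot\numAgents^2}$ iterations overall and a polynomial running time.

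I expect the main obstacle to be precisely the transfer carried out in the third paragraph: checking that the EFL-specific parts of Barman et al.'s correctness proof --- the second sub-condition~$v_i(\bundle_i)\ge v_i(\{g\})$, and the fact that \textsc{CycleElimination} preserves EFL rather than merely EF1 --- really go through once we replace envy by SA-envy and restrict recipients to~$S_g$. Everything concerning agents outside~$S_g$ is disposed of cheaply by the social-impact comparisons above, so the crux is to make rigorous the claim that, from the viewpoint of the agents in~$S_g$, the modified algorithm coincides with the unmodified EFL algorithm, and then to invoke~\citet{BarmanBMN2018}.
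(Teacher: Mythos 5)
Your overall strategy --- run the \citet{BarmanBMN2018} EFL procedure restricted to social-impact maximizers, replace the envy graph by the SA-envy graph, and maintain SIM plus SA-EFL as an invariant --- is exactly the paper's, and your structural observations (every bundle consists only of goods its owner impact-maximizes; SA-envy then forces $\si_i(\bundle_j)=\si_j(\bundle_j)$; agents outside $\arg\max_k\si_k(g)$ can never SA-envy a bundle containing $g$; \textsc{CycleElimination} preserves SIM and SA-EFL and strictly decreases the edge count) all appear in the paper's proof. The gap is precisely where you flagged it, and it is not just a missing verification: the \emph{good-driven} selection rule of your second paragraph (fix $g$, hand it to a source of the subgraph induced by $S_g=\arg\max_k\si_k(g)$) actually fails the EFL sub-condition $\val_i(\bundle_i)\ge\val_i(\{g\})$. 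Concretely, take two agents who both maximize impact for each of $g_1,g_2,g_3$, with $\val_1\equiv 1$ and $\val_2(g_1)=\val_2(g_2)=1$, $\val_2(g_3)=100$, and process the goods in the order $g_1,g_2,g_3$: agent $1$ takes $g_1$, agent $2$ (now the unique source of the SA-envy graph) is handed $g_2$, and $g_3$ may then go to agent $1$, leaving agent $2$ with value $1$ facing the bundle $\{g_1,g_3\}$ --- no single removal satisfies both EFL inequalities, although a SIM and SA-EFL allocation does exist for this instance. So the claim that, restricted to $S_g$, your process is an instance of Barman et al.'s algorithm is false under your rule: their algorithm lets the \emph{source choose its most preferred} available good, whereas yours forces a prescribed good on it.

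The paper closes this gap by keeping the algorithm agent-driven: pick a source $i$ of the whole SA-envy graph, let $i$ take its \emph{most valued} good among the unallocated goods it impact-maximizes, and delete $i$ from the graph once no such good remains. Then, when some $j$ newly SA-envies $\bundle_i\cup\{g\}$ and that bundle has more than one good, one argues that $\bundle_j\neq\emptyset$ (an empty bundle of goods is never SA-envied, hence never moved by \textsc{CycleElimination}, so $j$ must already have picked), that $g$ was still available and pickable by $j$ at $j$'s earlier pick $g'$, hence $\val_j(g')\ge\val_j(g)$, and that neither picking nor \textsc{CycleElimination} ever decreases $\val_j(\bundle_j)$; this yields $\val_j(\bundle_j)\ge\val_j(g)$ and, combined with $\val_j(\bundle_j)\ge\val_j(\bundle_i)$, the EFL condition. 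If you adopt this ordering and supply that argument, the remainder of your proof goes through essentially as written.
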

\begin{proof}
    The algorithm works as follows. We initialize the partial allocation~$\alloc$ to be the empty allocation, with every agent receiving no item. Afterwards, we initialize the SA-envy graph~$G_{\alloc}$ as the empty graph with~$\agents$ many vertices. Now, while there is some unassigned good, we pick a source agent~$i$ (an agent without an edge point to it in~$G_{\alloc}$). If there is no good for which~$i$ has maximum social impact, we remove~$i$ from~$G_{\alloc}$ and go to the next source agent. Otherwise, we allocate to the agent~$i$ the most valued good by the agent~$i$ among all the goods for which~$i$ has the maximum social impact. Afterwards, we use \textsc{CycleEliminiation} procedure to eliminate all cycles from~$G_{\alloc}$ introduced by giving the good to agent~$i$ and repeat from picking another source agent in~$G_{\alloc}$. 
    
    Since the algorithm is nearly identical to the one from \citet{BarmanBMN2018}, but we consider the SA-envy graph (a subgraph of the envy graph) and we only allow agents to pick from a subset of items. It is straightforward to see that the algorithm runs in polynomial time. More precisely, the algorithm needs to 1) find a source vertex in an acyclic graph at most~$|\agents|+|\items|$ times, 2) remove a vertex from the graph at most~$|\agents|$ many times, and 3) find a cycle and apply \textsc{CycleEliminiation} at most~$|\items|\cdot |\agents|^2$ many times. All of these operations can be easily performed in polynomial time. 
    
    Now, we will show by induction that every partial allocation throughout the computation is SIM and SA-EFL. This clearly holds for the empty allocation. Now let~$\alloc = (\bundle_1, \ldots, \bundle_n)$ be a SIM and SA-EFL partial allocation during our computation. We consider two possibilities to obtain the consecutive partial allocation. 
    
    First, agent~$i\in\agents$ picks a good~$g$. Since ~$\alloc$ is SIM and~$i$ has a maximum social impact for~$g$, all agents have maximum social impact for all items in their respective bundles, and the allocation is SIM. Moreover, as we only consider goods,~$v_i(\bundle_i\cup g)\ge v_i(A_i)$ and hence the agent~$i$ satisfies the definition of SA-EFL towards any other agent~$j\in \agents\setminus\{i\}$. Now, every agent~$j\in \agents$ that has been removed from~$G_{\alloc}$ does not have maximum impact on~$g$ and hence also on the bundle~$\bundle_i\cup \{g\}$, so such an agent~$j$ cannot SA-envy agent~$i$. Agent~$i$ is a source vertex in~$G_{\alloc}$, so any agent~$j\in V(G_{\alloc})$ does not SA-envy bundle~$\bundle_i$. If agent~$j$ starts to SA-envy~$\bundle_i\cup \{g\}$, then~$j$ has maximum social impact for the good~$g$. If~$g$ is the only good in~$\bundle_i\cup \{g\}$, then SA-EFL from~$j$ towards~$i$ is also satisfied. Otherwise, because~$j$ SA-envies~$\bundle_i\cup \{g\}$, but not~$\bundle_i$, we get that~$s_j(A_i) = s_i(A_i)$ and~$v_j(A_j) \ge v_j(A_i) > 0$, so~$A_j$ is not empty. Since an empty bundle cannot be part of an SA-envy cycle, it follows that agent~$j$ has already picked an item~$g'$ in some previous round. Since~$j$ was allowed to pick~$g$ at that point as well, it follows that~$v_j(g')\ge v_j(g)$. But both operations, 1) allocating a good to an agent and 2) the \textsc{CycleElimination}, do not decrease the value of an agent's bundle. Hence,~$v_j(\bundle_j)\ge v_j(g')\ge v_j(g)$ and~$v_j(\bundle_j)\ge v_j(\bundle_i) = v_j((\bundle_i\cup\{g\})\setminus\{g\})$, and~$j$ satisfy the condition of SA-EFL towards~$i$. 
    
    Second, we apply \textsc{CycleElimination} on a cycle~$\mathcal{C}$ in~$G_{\alloc}$ to obtain allocation~$\alloc'$. Since~$\alloc$ is SIM, an agent~$i$ can only SA-envy a bundle for which~$i$ maximizes social impact. So, if there is an edge from~$i$ to~$j$ on~$\mathcal{C}$, then~$i$ also maximizes social impact of all items in~$\bundle_j$, hence after \textsc{CycleElimination}, the allocation is SIM. Now consider any two agents~$i$ and~$j$. After \textsc{CycleElimination}, agent~$i$ has a bundle~$\bundle_{i'}$, for~$i'\in \agents$ such that~$v_{i}(\bundle_{i'})\ge v_{i}(\bundle_i)$ and agent~$j$ has a bundle~$\alloc_{j'}$, for~$j'\in\agents$. By our inductive assumption, we have that~$\alloc$ is SA-EFL. Hence, agent~$i$ satisfies SA-EFL condition toward agent~$j'$. That is either~$s_i(A_{j'}) < s_{j'}(A_{j'})$ or there is a good~$g\in A_{j'}$ such that~$v_i(g)\le v_i(A_i)$ and~$v_i(A_i)\ge v_i(A_{j'}\setminus \{g\})$.
    But then,~$s_{j}(A_{j'}) = s_{j'}(A_j)$ and~$v_i(A_{i'})\ge v_i(A_i)$, so we get either~$s_i(A_{j'}) < s_{j}(A_{j'})$ or there is a good~$g\in A_{j'}$ such that~$v_i(g)\le v_i(A_{i'})$ and~$v_i(A_{i'})\ge v_i(A_{j'}\setminus \{g\})$. In other words, agent~$i$ satisfies the SA-EFL condition towards agent~$j$.
    
    It follows that every partial allocation, including the final allocation of all items, is SIM and SA-EFL.
\end{proof}

Previous results indicate that whenever we combine a fairness notion with social awareness, we obtain an existence guarantee accompanied by polynomial-time algorithms. One may wonder whether these very positive results are not caused simply by the social awareness of our agents. Maybe it is the case that whenever we have socially aware agents, we can actually achieve an arbitrary fairness notion. To formally study this possibility, we assume fully envious but socially aware agents. An agent~$i\in\agents$ is not satisfied with an allocation~$\alloc$ whenever another agent~$j$ exists with a non-empty bundle. Formally, the notion is defined as follows.

\begin{definition}\label{def:czechEnvy}
    We say that an allocation~$\alloc$ is \emph{SA-$\emptyset$}, if for each pair of agents~$i,j\in\agents$ at least one of the following conditions holds:
    \begin{align*}
        && \bundle_j = \emptyset && \text{or} && s_i(\bundle_j) < s_j(\bundle_j)\,. &&
    \end{align*}
\end{definition}

First, we present an auxiliary lemma that significantly restricts the space of possible allocations and is helpful for multiple subsequent results.

\begin{lemma}\label{thm:czechEnvy:sameAgentTypesEmpty}
    If there are two agents~$a_i$ and~$a_{i'}$ of the same agent-type, then in every SA-$\emptyset$ allocation~$\alloc$ it holds that~$\bundle_i = \bundle_{i'} = \emptyset$.
\end{lemma}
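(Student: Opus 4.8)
The plan is to apply \Cref{def:czechEnvy} twice — once to the ordered pair $(a_i, a_{i'})$ and once to the ordered pair $(a_{i'}, a_i)$ — and to exploit the single fact that agents of the same agent-type share, in particular, the same social impact function, i.e.\ $s_i \equiv s_{i'}$ as functions on $2^\items$ (the equality of their valuation functions will not even be needed).

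First I would record the trivial but decisive observation that, since $s_i = s_{i'}$, for every bundle $B \subseteq \items$ we have $s_i(B) = s_{i'}(B)$, so the strict inequality $s_i(B) < s_{i'}(B)$ can never hold. Next, consider the SA-$\emptyset$ condition for the pair $(a_i, a_{i'})$: it requires that $\bundle_{i'} = \emptyset$ or $s_i(\bundle_{i'}) < s_{i'}(\bundle_{i'})$. The second disjunct is ruled out by the observation, so $\bundle_{i'} = \emptyset$. Applying the identical argument to the pair $(a_{i'}, a_i)$ — the condition now reads $\bundle_i = \emptyset$ or $s_{i'}(\bundle_i) < s_i(\bundle_i)$ — forces $\bundle_i = \emptyset$. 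Together these give $\bundle_i = \bundle_{i'} = \emptyset$, as claimed.

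I do not expect a genuine obstacle: the whole argument is a direct unfolding of the definition of SA-$\emptyset$, and it does not even invoke the SIM property of $\alloc$. The only point worth stressing in the write-up is that the ``socially aware'' escape clause $s_i(\bundle_j) < s_j(\bundle_j)$ degenerates into a strict inequality between two equal quantities when $i$ and $j$ are of the same type, and hence can never justify a non-empty bundle for either of them.
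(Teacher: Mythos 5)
Your proposal is correct and follows essentially the same route as the paper: both arguments reduce to the observation that agents of the same agent-type have identical social impact functions, so the escape clause $s_i(\bundle_j) < s_j(\bundle_j)$ is a strict inequality between equal quantities and can never hold, forcing both bundles to be empty. The only cosmetic difference is that you argue directly on both ordered pairs while the paper phrases it as a contradiction starting from a non-empty bundle.
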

\begin{proof}
    For the sake of contradiction, assume that without loss of generality,~$\bundle_i \not= \emptyset$. Then, it must be the case that~$s_{a_i}(\bundle_i) > s_{a}(\bundle_i)$ for every~$a\in\agents\setminus\{a_i\}$; otherwise~$\alloc$ is not SA-$\emptyset$. However,~$a_{i'}$ is of the same agent-type as~$a_i$, meaning that for every set~$S\subseteq \items$~$s_{a_i}(S) = s_{a_{i'}}(S)$. Hence,~$\alloc$ is not SA-$\emptyset$, so the bundles of both such agents must be empty.
\end{proof}

However, as the next theorem proves, even though we can focus only on a very specific subset of allocations, the problem of deciding the existence of SIM and SA-$\emptyset$ allocations is computationally hard. That is, socially aware agents are not enough to guarantee arbitrary fairness notion.

\begin{theorem}\label{thm:czechEnvy:NPh}
    It is \NPc to decide whether there is a SIM and SA-$\emptyset$ allocation, even if the valuations and social impacts are binary and~$v_i = v_j$ for every~$i,j\in\agents$.
\end{theorem}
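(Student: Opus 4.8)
The plan. Membership in \NP is immediate: given a candidate allocation one checks in polynomial time that it is SIM (each item is assigned to an agent maximizing its social impact) and that it is SA-$\emptyset$ (for every ordered pair $i,j\in\agents$ one compares $\bundle_j$ with $\emptyset$, and if $\bundle_j$ is nonempty, the two numbers $s_i(\bundle_j)$ and $s_j(\bundle_j)$). For hardness I would reduce from a convenient \NPh problem with a ``selection'' flavour; Exact Cover by $3$-Sets is a natural candidate, though a restricted variant of $3$-SAT or of hypergraph colouring can be pushed through in the same way. A preliminary simplification is worth noting: neither the SIM requirement nor the SA-$\emptyset$ condition refers to the valuation functions, so I may set $v_i(g)=0$ for all $i\in\agents$ and all $g\in\items$; this is binary, identical across agents, and all items are vacuously goods, so the restriction in the statement is free. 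Thus the whole construction lives in the social impact functions, which I encode through the sets $S_g := \{i\in\agents\mid s_i(g)=1\}$. I ensure $S_g\neq\emptyset$ for every item (no ``universal'' item), so that an allocation is SIM exactly when every item $g$ goes to some agent of $S_g$; and such an allocation is in addition SA-$\emptyset$ if and only if every agent $j$ with $\bundle_j\neq\emptyset$ satisfies $\bigcap_{g\in\bundle_j}S_g=\{j\}$, i.e.\ the bundle of $j$ ``pins $j$ down''.

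The construction rests on two mechanisms. First, wherever the source instance makes a local choice, I introduce several agents sharing one agent-type (identical social impact functions); by \Cref{thm:czechEnvy:sameAgentTypesEmpty} all of them receive the empty bundle in any SA-$\emptyset$ allocation, which rigidly forces the items attached to that part of the construction onto a small set of pairwise distinct agents. Second, I tune the sets $S_g$ of the surviving items so that the only bundles that pin down a surviving agent correspond to a locally consistent decision of the source instance, and so that completeness of the allocation---\emph{every} item must be placed with a member of its $S_g$---encodes the global constraint (every element covered, every clause satisfied, and so on). Because $v\equiv 0$, the fairness side is handled entirely by SA-$\emptyset$, so there is no additional bookkeeping for the valuations.

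Correctness then splits in the usual way. Forward: from a solution of the source instance I build an allocation by placing each item with the agent prescribed by the solution and leaving all duplicated agents empty, and I check directly that each item lands in its $S_g$ (hence SIM) and that each nonempty bundle pins down its owner (hence SA-$\emptyset$). Backward: starting from a SIM and SA-$\emptyset$ allocation $\alloc$, I first invoke \Cref{thm:czechEnvy:sameAgentTypesEmpty} to conclude that every duplicated agent gets $\emptyset$, and then use the pinning-down characterization to show that the remaining items admit only the intended rigid distribution; reading this distribution off gives a solution of the source instance, and the fact that $\alloc$ is a \emph{complete} allocation is precisely what certifies that the global constraint is met. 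Combined with \NP-membership, this yields \NPcness.

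The main obstacle is the backward direction---concretely, proving that the gadgets really are rigid. One must rule out ``mixed'' or ``partial'' bundles: a priori an agent might be active with a bundle that pins it down through an unintended combination of items, or an item with several eligible owners might be rerouted so that all local constraints still hold while the global constraint of the source instance is quietly violated, yet all items remain placed. \Cref{thm:czechEnvy:sameAgentTypesEmpty} disposes of much of this freedom by collapsing whole families of agents to empty bundles, but the remaining case analysis---showing that for each surviving agent the only pinning-down bundles are the intended ones, and that these choices are forced to be globally consistent because every item must be allocated---is where the bulk of the argument lies.
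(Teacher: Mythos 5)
Your high-level framing is sound and in fact matches the paper's: the paper also reduces from (restricted) Exact Cover by $3$-Sets, also uses \Cref{thm:czechEnvy:sameAgentTypesEmpty} to kill duplicated agents, and also lets completeness of the allocation carry the global covering constraint. Two of your preliminary observations are genuinely nice: since neither SIM nor SA-$\emptyset$ mentions valuations you may take $v\equiv 0$ (the paper takes $v\equiv 1$, to the same effect), and under a SIM allocation with binary impacts the SA-$\emptyset$ condition for an active agent $j$ is exactly $\bigcap_{g\in \bundle_j} S_g=\{j\}$, which is a clean reformulation the paper uses only implicitly.

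The gap is that you never build the reduction. Everything after the reformulation is a template --- ``tune the sets $S_g$ so that the only pinning-down bundles correspond to a locally consistent decision'' --- and you yourself flag that the rigidity argument in the backward direction ``is where the bulk of the argument lies.'' That is precisely the content of the proof, and it is absent. Concretely, the paper's construction needs three interlocking gadgets that your sketch does not supply: (i) \emph{two identical guard agents} who are impact maximizers on every element item and on nothing else --- their role is not to encode a local choice but to tie every set agent on pure-element bundles, forcing any active set agent to also hold a dummy item; (ii) exactly $\ell$ \emph{dummy items} on which all $3\ell$ set agents (and no guard) have impact $1$, so that a counting argument (at most one dummy per active agent, else fewer than $3\ell$ element items can be placed and the allocation cannot be complete) forces exactly $\ell$ active set agents, each holding one dummy and the three element items of its set; and (iii) the \emph{restricted} X3C promise that distinct sets intersect in at most one element, which is what guarantees that two distinct active set agents never tie on each other's bundles. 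Without specifying these, one cannot rule out exactly the ``mixed or partial bundles'' you worry about (e.g.\ a set agent holding only dummy items, or holding elements of several sets), so the backward direction does not go through as stated. Your plan is a correct scaffold, but the theorem is not proved until the gadgets and the counting argument are written down.
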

\begin{proof}
    The reduction is from the \probName{Restricted Exact Cover by~$3$-Sets} problem, which is known to be \NPc~\cite{Gonzalez1985}. In this problem, we are given a universe~$\mathcal{U}=(u_1,\ldots,u_{3\ell})$ and a family of subsets~$\mathcal{S} = (S_1,\ldots,S_{3\ell})$ such that (a)~$|S_j| = 3$ for every~$j\in[3\ell]$, (b) each~$u_i\in\mathcal{U}$ is element of exactly three sets of~$\mathcal{S}$, and (c) each pair of sets~$S_j$ and~$S_{j'}$ intersects in at most one element. The goal is to decide if there exists a set~$X\subseteq \mathcal{S}$ such that~$|X| = \ell$ and~$\bigcup_{S\in X} S = \mathcal{U}$.

    Given an instance~$\mathcal{I}=(\mathcal{U},\mathcal{S})$, we construct an equivalent instance~$\mathcal{J}$ of our problem as follows (see \Cref{fig:czechEnvy:construction} for an illustration of the construction). We have~$3\ell + 2$ agents~$a_1,\ldots,a_{3\ell},a_{3\ell+1},a_{3\ell+2}$. Agents~$a_1,\ldots,a_\ell$ are called \emph{set agents} and are in one-to-one correspondence to the sets of~$\mathcal{S}$. Agents~$a_{3\ell+1}$ and~$a_{3\ell+2}$ are special \emph{guard agents}. The set of items contains one \emph{element item}~$g_i$ for every~$u_i\in\mathcal{U}$ together with~$\ell$ special \emph{dummy items}~$d_1,\ldots,d_\ell$. The valuation function for every agent assigns a value of~$1$ to every item. The important bit is the social impact functions. The guard agents are identical and have a social impact of~$1$ for every element item. Otherwise, their social impact is zero. Each set agent~$a_j$,~$j\in[3\ell]$, on the other hand, has a social impact of~$1$ for every dummy item and each~$g_i$ such that~$u_i\in S_j$. For all the remaining items, the social impact is zero. Observe that each set agent maximizes social impact for exactly~$3$ element items, and each element item can be assigned to exactly five agents---three set agents and the guard agents.

    \begin{figure}[bt!]
        \centering
        \renewcommand{\arraystretch}{1.2}
        \begin{tabular}{c|cccc|ccc}
                          &~$g_1$ &~$g_2$ &~$\cdots$ &~$g_{3\ell}$ &~$x_1$ &~$\cdots$ &~$x_\ell$ \\\hline
           ~$a_1$         &   1   &   1   &          &       1     &   1   &~$\cdots$ &     1     \\
           ~$a_2$         &   1   &   0   &          &       1     &   1   &~$\cdots$ &     1     \\
           ~$\vdots$      &       &       &          &             &       &          &           \\
           ~$a_{3\ell}$   &   0   &   1   &          &       0     &   1   &~$\cdots$ &     1    \\\hline
           ~$a_{3\ell+1}$ &   1   &   1   &~$\cdots$ &       1     &   0   &~$\cdots$ &     0     \\
           ~$a_{3\ell+2}$ &   1   &   1   &~$\cdots$ &       1     &   0   &~$\cdots$ &     0     \\
        \end{tabular}
        \caption{An illustration of the social impact function used in the instance constructed in the proof of \Cref{thm:czechEnvy:NPh}.}
        \Description{An illustration of the construction used to prove \Cref{thm:czechEnvy:NPh}.}
        \label{fig:czechEnvy:construction}
    \end{figure}

    The crucial idea behind the construction is that whenever an agent~$i$'s bundle is non-empty, then for any other agent~$i'$, there exists an item~$g\in \bundle_i$ such that~$s_{i}(g) > s_{i'}(g)$. Consequently, if any guard agent's bundle is non-empty, then he is envious of the other guard agent, and whenever a set agent's bundle is non-empty, it also has to contain at least one dummy item, as otherwise, she is envious of guard agents.

    For correctness, assume first that~$\mathcal{I}$ is a \Yes-instance and~$X\subseteq \mathcal{S}$ is an exact cover of~$\mathcal{U}$. We create an allocation~$\alloc$ as follows. The bundles of the guard agents and every set agent~$a_j$ such that~$S_j\not\in X$ are empty, and therefore, they never envy any other agent. For every~$a_j$ such that~$S_j\in X$, we add to her bundle all element items corresponding to elements of~$S_j$ and an arbitrary not yet allocated dummy item. Note that since~$X$ is a solution, the bundles are well defined as there are~$\ell$ dummy items and~$|X|=\ell$. It remains to show that~$\alloc$ is indeed a solution for~$\mathcal{J}$. Let~$a_i$ and~$a_{i'}$ be a pair of agents such that~$i$ is envious of~$i'$s bundle. By definition,~$\bundle_i\not=\emptyset$; thus,~$a_i$ is a set agent. Since~$\bundle_i$ contains some dummy item,~$a_{i'}$ is not a guard agent. Hence, it must be the case that~$s_{a_i}(\bundle_i) = s_{a_{i'}}(\bundle_i)$, which implies that sets~$S_{i}$ and~$S_{i'}$ are identical; however, this is not possible by our assumption (c). That is, such a pair of agents cannot exist, and we obtain that~$\alloc$ is indeed a solution.

    In the opposite direction, let~$\alloc$ be a SA-$\emptyset$ allocation for~$\mathcal{J}$. We show several properties of~$A$, which altogether implies that~$\mathcal{I}$ is also a \Yes-instance. First, since the guard agents are identical, by \Cref{thm:czechEnvy:sameAgentTypesEmpty} their bundles are necessarily empty. Now, let~$a_i$ be a set agent with a non-empty bundle and assume that~$\bundle_i$ does not contain any dummy item. Then, the set agent has envy towards both guard agents, as they are social impact maximizers on all element items. That is the bundle of every set agent~$a_i$ such that~$\bundle_i\not= \emptyset$ contains at least one dummy item. Finally, let there be a set agent~$a_i$ with~$\bundle_i\not = \emptyset$ such that~$\bundle_i$ contains at least two dummy items. Then, there are at most~$\ell-2$ other set agents with non-empty sets; however, each of them can have at most three element items, meaning that at most~$(\ell-2)\cdot 3 + 3 = 3\ell - 6 + 3 = 3\ell - 3$ are distributed between these agents. However, since the bundles of guard agents and all remaining set agents are empty, it follows that~$\alloc$ is not a complete allocation, which is not possible. Consequently, no~$a_i$ has assigned more than one dummy item. If we take~$X = \{ S_j \mid \bundle_{a_j}\not=\emptyset \}$, we obtain an exact cover of~$\mathcal{U}$ by the previous argumentation.
\end{proof}

We conclude with one positive algorithmic result for \mbox{SA-$\emptyset$}. Specifically, we show that whenever the number of agent-types or item-types is bounded, there is an efficient algorithm deciding the existence of SIM and SA-$\emptyset$ allocations. This result, especially for the latter parameterization, is not without interest, as in the area of fair division of indivisible items, parameterization by the number of different item-types is a notoriously hard open problem; see, e.g., \cite{EibenGHO2023,NguyenR2023,BredereckKKN2023}.

\begin{theorem}
    When parameterized by the number of item-types or the number of agent-types, an \FPT algorithm deciding the existence of SIM and SA-$\emptyset$ allocation exist.
\end{theorem}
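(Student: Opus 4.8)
The plan is to reduce, under either parameterization, to the feasibility of an integer linear program whose number of variables is bounded by a computable function of the parameter; feasibility can then be decided in \FPT time by Lenstra's algorithm. First I would make two simplifications. By the argument of \Cref{UMandEF1:socialImpactCanBeBinary}, the social impact functions may be assumed binary: in a \emph{SIM} allocation $\alloc$ we have $\si_i(\bundle_j)<\si_j(\bundle_j)$ exactly when $\bundle_j$ contains an item $g$ with $i\notin\mathrm{Max}(g):=\arg\max_{k\in\agents}\si_k(g)$, so replacing each $\si_i$ by its binary version $\si^*_i$ keeps the set of SIM allocations unchanged and, among those, the \mbox{SA-$\emptyset$} ones. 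Furthermore, neither SIM nor \mbox{SA-$\emptyset$} refers to the valuations at all, so I would also make all valuations identical; neither modification increases the numbers of agent- and item-types. After them, an agent-type is nothing but its \emph{signature} $\mathrm{sig}(i)$, the set of item-types that agent $i$ maximizes, and an item-type $C_r$ is nothing but the set $\mu_r\subseteq\agents$ of agents maximizing it.

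Next I would bound the agents that can matter. \Cref{thm:czechEnvy:sameAgentTypesEmpty} (whose proof uses only the social impact functions) shows that two agents with the same signature both get an empty bundle in every \mbox{SA-$\emptyset$} allocation; hence only an agent that is the unique agent of its type can be active, and I let $P$ be the set of these. Then $|P|$ is at most the number of agent-types. Under the parameterization by the number $\tau_i$ of item-types, a signature is a subset of the $\le\tau_i$ item-types, so $|P|\le 2^{\tau_i}$; under the parameterization by the number $\tau_a$ of agent-types, $|P|\le\tau_a$, and since every $\mu_r$ is a union of agent-types there are at most $2^{\tau_a}$ item-types. In both cases $|P|$ and the number $q$ of item-types are bounded by a function of the parameter.

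Finally I would write the program. Let $m_r$ be the number of items of type $C_r$; for each $r$ and each $j\in P$ with $r\in\mathrm{sig}(j)$ introduce $x_{r,j}\in\Z_{\ge0}$ (how many items of type $C_r$ go to $j$), and for each $j\in P$ an indicator $z_j\in\{0,1\}$ (whether $\bundle_j\neq\emptyset$). Impose: $\sum_{j\in P:\,r\in\mathrm{sig}(j)}x_{r,j}=m_r$ for every $r$ (this forces a complete and, since only maximizers receive items, SIM allocation, and rejects instances in which some item-type has no maximizer in $P$); $\sum_r x_{r,j}\le\numItems\cdot z_j$ for every $j\in P$; and, for every $j\in P$ and every signature $\rho\neq\mathrm{sig}(j)$ occurring among the agents, $\sum_{r\in\mathrm{sig}(j)\setminus\rho}x_{r,j}\ge z_j$. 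The last constraint encodes that a non-empty $\bundle_j$ must contain, for every other agent $i$, at least one item $i$ does not maximize; since $j$ is the unique agent of its type this covers all pairs $(i,j)$ with $i\neq j$, and the condition depends on $i$ only through $\mathrm{sig}(i)$. A feasible point yields a SIM and \mbox{SA-$\emptyset$} allocation by handing out the $m_r$ items of each type $C_r$ arbitrarily according to the $x_{r,j}$, and conversely every SIM and \mbox{SA-$\emptyset$} allocation gives a feasible point (set $z_j=1$ iff $\bundle_j\neq\emptyset$). The program has $O(|P|\cdot q)$ variables, so its feasibility is decidable in \FPT time, which completes both cases. The step I expect to be the crux is the second paragraph: noticing that the valuations are irrelevant and invoking the strengthened form of \Cref{thm:czechEnvy:sameAgentTypesEmpty} to confine the active agents to the parameter-bounded set $P$, which is precisely what makes the number of item-types collapse under the agent-type parameterization and the number of agent-types collapse under the item-type parameterization; the integer program and the verification of its correctness are then routine.
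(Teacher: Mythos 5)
Your proof is correct and follows essentially the same route as the paper: invoke \Cref{thm:czechEnvy:sameAgentTypesEmpty} to restrict attention to agents of unique type, observe that each of the two parameters bounds the other up to an exponential, and decide the existence question via an ILP over item-type counts solved with Lenstra's algorithm. The only difference is cosmetic: the paper guesses the set of agents with non-empty bundles externally (incurring a $2^{2^{\Oh{t_\items}}}$ overhead of Lenstra calls), whereas you fold that choice into the ILP with the binary indicators $z_j$, which is equally valid.
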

\begin{proof}
    First, assume that the number of agent-types~$t_\agents$ is our parameter. By definition, two items are of the same type if the set of agents that are social impact maximizers for them are the same. Similarly, two agents are of the same agent-type if they are social impact maximizers for the set of items. Consequently, whenever the number of agent-types is bounded, the number of items-types is at most~$2^\Oh{t_\agents}$. Therefore, in the rest of the proof, we focus only on the parameterization by the number of item-types.

    Let~$t_\items$ be the number of different item-types. We first partition our items into sets~$T_1,\ldots,T_{t_\items}$ according to their types. Additionally, let~$\agents_t$ be the set of agents maximizing social impact for items of type~$t\in[t_\items]$. Next, we observe that whenever the number of item-types is bounded, so is the number of agent-types; specifically,~$t_\agents \in 2^\Oh{t_\items}$. Moreover, by \Cref{thm:czechEnvy:sameAgentTypesEmpty}, if there are at least two agents that are of the same agent-type, then bundles of all agents of this agent-type are empty. Hence, the number of agents that are of unique agent-type is also~$2^\Oh{t_\items}$, and we use~$U$ to denote the set of all such ``unique'' agents. Moreover, we guess the subset~$U'\subseteq U$ of such unique agents whose bundles are non-empty in the solution.

    To verify whether an SA-$\emptyset$ allocation exists, we formalize the problem as an ILP with fixed dimension. We have a non-negative variable~$x_{i,t}$ for every~$i\in U'$ and every~$t\in[t_\items]$ such that~$i\in T_t$ representing the number of items of type~$t$ in the bundle of agent~$i$. Our first set of constraints ensures that each item is allocated to some agent of~$U$. To ensure this, we add
    \begin{align}
        && \forall t\in[t_\items] && \sum_{i\in U'\cap \agents_t} x_{i,t} = |T_t|\,. &&
    \end{align}
    Next, for every agent~$i\in U'$, we add a single constraint to ensure that their bundle is indeed non-empty
    \begin{align}
        && \forall i\in U' && \sum_{t\colon i\in \agents_t} x_{i,t} \geq 1 &&
    \end{align}
    Finally, the rest of the constraints ensure that if an agent has a non-empty bundle, then there is no agent with the same social impact on the same set of items. Specifically, we add the following constraints.
    \begin{align}
        \forall i\in U'~\forall j \in \agents\setminus\{i\} && \sum_{t\colon i\in \agents_t} x_{i,t} > \sum_{t'\colon j\in \agents_{t'}} x_{i,t'}\label{eq:czechEnvy:FPT:SAcond}
    \end{align}
    Observe that since the agents of the same agent-type are maximizing social impact for exactly the same set of items, we can create only~$2^\Oh{t_\items}$ constraints of type \eqref{eq:czechEnvy:FPT:SAcond} for every~${i\in U'}$.

    The number of variables of this ILP is~$t_\items \cdot 2^\Oh{t_\items}\in 2^\Oh{t_\items}$. Therefore, we can use the celebrated theorem of Lenstra Jr.~\citeyearpar{Lenstra1983} to solve such an ILP in time~$(2^{t_\items})^{2^\Oh{t_\items}}\cdot (\numAgents+\numItems)^\Oh{1} \in 2^{2^\Oh{t_\items}}\cdot (\numAgents+\numItems)^\Oh{1}$ time. Moreover, there are~$2^{2^\Oh{t_\items}}$ different choices for~$U'$, so the overall running time of our approach is~$2^{2^\Oh{t_\items}}\cdot (\numAgents+\numItems)^\Oh{1}$, which is clearly in \FPT.
\end{proof}

\section{Limited Social Awareness}\label{sec:relaxedSA}

In the previous section, we studied the impact of social awareness on the existence of fair and social impact maximizing allocations. It turned out that with socially aware agents, such a desirable allocation always exists and can be found efficiently.
However, the notion of social awareness, as defined, is somewhat idealistic, and in practical scenarios, it may not be the case that agents accept very bad bundles just because of the social impact they generate with these chores. Or, some agents can be even very selfish and may be interested only in fairness and not in the social impact at all.

\subsection{Some Agents Socially Unaware}

The first limitation of social awareness we study is a setting with some agents socially aware and some socially unaware. This is arguably a very natural setting. Unfortunately, as we show in the following example, even if we allow as few as one agent to be socially unaware, we lose all existence guarantees.

\begin{example}
    Let~$a_1$ be a socially unaware agent,~$a_2$ be socially aware agent, and~$g_1,g_2$ be two identical elements such that for~$j\in[2]$ we have~$\val_{a_1}(g_j) = \val_{a_2}(g_j) = 10$, and~$s_{a_1}(g_j) = 0$ and~$s_{a_2}(g_j) = 1$. Due to the social impact function, both items have to be allocated to agent~$a_2$. However,~$a_1$ is now very envious of agent~$a_2$, and this envy cannot be eradicated by removing any item from~$\bundle_1$.
\end{example}

Note that we can extend the previous example to show the non-existence of the SA-EF$r$ solution for arbitrary~$r\in\N$, simply by adding at least~$r+1$ copies of our items. Hence, even if we have exactly one socially unaware agent, we cannot provide any guarantee on (relaxations of) EF.

In the next result, we draw the situation even more negatively. Specifically, we show that even one socially unaware agent~$a_1$ makes the decision of the existence of a social impact maximizing allocation, which is EF1 for~$a_1$ and SA-EF1 for all other agents, computationally hard. 

\begin{theorem}\label{thm:oneSociallyUnaware:NPh}
    It is \NPc to decide if a SIM allocation which is EF1 for agent~$a_1$ and SA-EF1 for every~$a_i\in\agents\setminus\{a_1\}$ exist, even if there are only two agents.
\end{theorem}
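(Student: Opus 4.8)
The plan is to reduce from \probName{Partition}, exactly as in \Cref{thm:two_agents_hard}: we are given a multiset $W=\{w_1,\dots,w_\ell\}$ with $\sum_{j\in[\ell]}w_j=2t$ and no $w_j\ge t$, and we must decide whether some $J\subseteq[\ell]$ satisfies $\sum_{j\in J}w_j=t$. From such an instance I build a two-agent instance in which $a_1$ is socially unaware (so we demand plain EF1 for $a_1$) and $a_2$ is socially aware (so we demand SA-EF1 for $a_2$). The items are: a \emph{number item} $g_j$ for each $w_j$ with $\val_{a_1}(g_j)=\val_{a_2}(g_j)=w_j$ and $\si_{a_1}(g_j)=\si_{a_2}(g_j)=1$ (so $g_j$ is a social-impact tie); a \emph{swing item} $m$ with $\val_{a_1}(m)=t+1$, $\val_{a_2}(m)=t$, and $\si_{a_1}(m)=\si_{a_2}(m)=1$ (again a tie); and two \emph{guard items} $H$ and $K$ with $\si_{a_1}=0$, $\si_{a_2}=1$ (hence forced to $a_2$ in any SIM allocation), $\val_{a_1}(H)=\val_{a_1}(K)=t+1$, and $\val_{a_2}(H)=\val_{a_2}(K)=0$. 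The key design choice is that $\val_{a_1}(K)=\val_{a_1}(m)=t+1$ and that, by the \probName{Partition} assumption, every $w_j$ is strictly smaller than $t$ and hence than $t+1$.

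The argument then goes as follows. First, $a_1$ is never a \emph{strict} social-impact maximizer for any item, so in every SIM allocation $\bundle_1$ consists only of ties and $\si_{a_1}(\bundle_1)=\si_{a_2}(\bundle_1)$; therefore the social-awareness escape clause in SA-EF1 never fires, and $a_2$ effectively needs plain EF1 towards $a_1$. A SIM allocation is thus determined by where $m$ goes and by a set $J\subseteq[\ell]$ of number items assigned to $a_1$. If $m\in\bundle_1$, so that $\bundle_1=\{m\}\cup\{g_j\mid j\in J\}$ and $\bundle_2=\{H,K\}\cup\{g_j\mid j\notin J\}$, then $a_1$'s EF1 check removes a most-valued item of $\bundle_2$ (namely $H$, which is tied for the maximum), leaving $\val_{a_1}(K)+\sum_{j\notin J}w_j=(t+1)+(2t-\sum_{j\in J}w_j)$, and since $\val_{a_1}(\bundle_1)=(t+1)+\sum_{j\in J}w_j$ the two copies of $t+1$ cancel and the condition is exactly $\sum_{j\in J}w_j\ge t$; symmetrically, $a_2$'s EF1 check removes $m$ from $\bundle_1$ (the unique $a_2$-maximum there, as $\val_{a_2}(m)=t>\max_j w_j$), yielding the condition $\sum_{j\notin J}w_j\ge\sum_{j\in J}w_j$, i.e.\ $\sum_{j\in J}w_j\le t$. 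Hence this branch is feasible precisely when $J$ is a valid \probName{Partition} solution. If instead $m\in\bundle_2$, then $\bundle_2\supseteq\{H,K,m\}$; after $a_1$ removes one of these three items (all worth $t+1$), it still sees value at least $\val_{a_1}(K)+\val_{a_1}(m)=2(t+1)>2t\ge\val_{a_1}(\bundle_1)$, so $a_1$ is EF1-envious regardless of how the $g_j$'s are split — this branch is always infeasible. Combining the two branches, a SIM allocation that is EF1 for $a_1$ and SA-EF1 for $a_2$ exists iff the \probName{Partition} instance is a \Yes-instance, and together with the trivial membership in \NP (cf.\ the Remark) this gives \NPcness.

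The main obstacle — and the point where the reduction must diverge from the one in \Cref{thm:two_agents_hard} — is controlling the "$m$ escapes to $a_2$" branch. There, the analogue of $m$ was a strict social-impact maximizer for $a_1$ and so pinned to $\bundle_1$; here we cannot do that, because any item pinned to $\bundle_1$ would make $\si_{a_2}(\bundle_1)<\si_{a_1}(\bundle_1)$, trivially satisfy $a_2$'s SA-EF1, and let $a_1$ simply take everything, turning the instance into a trivial \Yes-instance. The guard item $K$ is what resolves this: in the good branch its $a_1$-value cancels that of $m$ to produce the clean \probName{Partition} equality, while in the bad branch it stacks with $m$ so that $\bundle_2$ becomes irrecoverably too valuable to $a_1$. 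I would devote the bulk of the write-up to verifying these cancellation and overflow inequalities, to the boundary cases ($J=\emptyset$, $\bundle_1=\emptyset$, and the SIM allocations in which $a_2$ gets all number items), and to noting that all item values stay polynomial in the encoding length of $W$, so the reduction uses only the weak \NPhness of \probName{Partition}.
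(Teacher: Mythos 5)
Your reduction is correct and essentially matches the paper's own proof: both reduce from \probName{Partition} using tie-valued number items, two guard items forced to $a_2$ by social impact, and one tied ``swing'' item, with the same key observation that $a_1$ is never a strict social-impact maximizer, so $a_2$'s social-awareness escape clause can never fire and $a_2$ effectively needs plain EF1. The only substantive difference is your choice of $t+1$ (rather than the paper's $t$) for the $a_1$-values of the big items, which lets you kill the branch where the swing item lands in $\bundle_2$ by $a_1$'s envy alone, whereas the paper must additionally invoke $a_2$'s envy to rule out the boundary case $\val_{a_1}(\bundle_2)=3t$ --- a mild simplification, not a different route.
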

\begin{proof}
    We reduce from the \probName{Partition} problem. Recall that in this problem, we are given a multi-set of integers~$W=(w_1,\ldots,w_\ell)$ such that~$\sum_{j\in[\ell]} w_j = 2t$, and the goal is to decide whether~$J\subseteq[\ell]$ exists such that~$\sum_{j\in J} w_j = \sum_{j\in [\ell]\setminus J} w_j = t$. This problem is known to be (weakly) \NPh~\cite{GareyJ1979}. Without loss of generality, we can assume that no~$w_j\in W$ is of size~$\geq t$, as otherwise we have an obvious \Yes (if~$w_j = t$) or \No ($w_j > t$) instance.

     Given an instance~$\mathcal{I}$ of \probName{Partition}, we construct the following equivalent instance of our problem as follows. We have two agents,~$a_1$ and~$a_2$, where~$a_2$ is the only socially aware agent. Next, we have an item~$g_j$ for every element~$w_j\in W$. Both agents value this item as~$w_j$ and are maximizing social impact for these items; that is,~$s_{a_i}(g_j) = 1$ for every~$i\in[2]$. 
     Additionally, we add three ``big'' items~$G_1,G_2$ and~$G_3$. 
     The first agent values~$G_1$ and~$G_2$ at~$t$, while has zero social impact for them, and values~$G_3$ at 0, while has social impact equal to 1 for it.
     Conversely, the second agent values~$G_1$ and~$G_2$ at~$0$,~$G_3$ at~$t$, and has social impact equal to 1 for each one of them.
     The construction is summarized in the following table.
    
    \begin{center}
        \renewcommand{\arraystretch}{1.2}
        \begin{tabular}{c|c|c|c}
                  &~$G_1,G_2$     &~$G_3$         &~$g_j$,~$j\in[\ell]$  \\\hline
           ~$a_1$ &~$s=0, \val=t$ &~$s=1, \val=t$ &~$s=1, \val=w_j$\\
           ~$a_2$ &~$s=1, \val=0$ &~$s=1, \val=t$ &~$s=1, \val=w_j$\\
        \end{tabular}
    \end{center}

    For the correctness, let~$J\subseteq[\ell]$ be a solution for~$\mathcal{I}$. We define the bundles as follows. We set~$\bundle_1 = \{G_3\} \cup \{ g_i \mid i \in J \}$ and~$\bundle_2 = \{G_1,G_2\} \cup \{ g_i \mid i \in [\ell]\setminus J \}$. Recall that agent~$a_1$ is not socially aware and~$\val_{a_1}(\bundle_1) = 2t$, while~$\val_{a_1}(\bundle_2) = 3t$. That is, agent~$a_1$ is envious of~$a_2$'s bundle; however, this envy can be removed if~$G_1$ is deleted from~$a_2$'s bundle. 
    For agent~$a_2$, on the other hand, we have that~$\val_{a_2}(\bundle_2) = t$, while~$\val_{a_2}(\bundle_1) = 2t$. 
    Nevertheless, if we remove~$G_3$ from~$a_1$'s bundle, we obtain that~$\val_{a_2}(\bundle_1\setminus\{G_3\}) = 2t -t = t$. 
    In addition, observe that the allocation indeed maximizes the social impact.
    That is, the allocation~$\alloc$ is EF1 for both agents.

    In the opposite direction, let~$\alloc$ be an allocation which is EF1 for~$a_1$ and SA-EF1 for~$a_2$. 
    First we will prove that in any solution~$G_3 \in \bundle_1$. Then, we  will prove that indeed any solution to our problem gives a solution to the original \probName{Partition} instance.
    Since we are interested in social impact maximizing allocation, it must hold that~$\{G_1,G_2\} \subseteq \bundle_2$.
    
    For the sake of contradiction assume that ~$G_3 \in \bundle_2$.
    Then, agent~$a_1$ values~$\bundle_2$ at least~$3t$.
    If~$\val_1(\bundle_2) > 3t$, then~$\val_1(\bundle_2\setminus \{g\}) > 2t$ for every~$g \in \bundle_2$. Hence, we cannot eliminate envy for agent 1 in this case. 
    Thus, it must be true that~$\val_{a_1}(\bundle_2) = 3t$. This implies that all remaining items belong to~$\bundle_1$. 
    But in this case we have that~$\val_{a_2}(\bundle_1)=2t$, thus~$\val_2(\bundle_1\setminus \{g\}) > t$, for every~$g$, and~$s_{a_2}(\bundle_1) = s_{a_1}(\bundle_1)$. Hence, SA-EF1 criterion is not satisfied for the socially aware agent~$a_2$. Hence, we conclude that~$G_3 \in \bundle_2$.

    Let now ~$\bundle_1 = J \cup \{G_3\}$ and~$\bundle_2 = \bar{J} \cup \{G_1, G_2\}$, where~$(J, \bar{J})$ is a partition of the non-big items.
    We claim that it must be true that in any acceptable solution it must be true that~$\val_{a_1}(\bundle_1\setminus\{G_3\}) = \val_{a_1}(\bundle_2 \setminus \{G_1,G_2\}) = \val_{a_2}(\bundle_1\setminus\{G_3\}) = \val_{a_2}(\bundle_2 \setminus \{G_1,G_2\}) = t$. In other words,~$J$ is a valid solution for the \probName{Partition} instance.
    For the sake of contradiction, assume that~$\val_{a_2}(\bundle_2) = \val_{a_2}(\bundle_2 \setminus \{G_1,G_2\}) = 2t -\delta$ for some~$\delta > 0$. 
    Observe that~$\val_{a_2}(\bundle_2) = \val_{a_2}(\bar{J})$. 
    In addition, observe that 
    \begin{align*}
        \val_{a_2}(\bundle_1) 
        & = \val_{a_2}(G_2) + \val_{a_2}(J)\\
        & = t + 2t - \val_{a_2}(\bar{J})\\
        & > 2t + \delta.
    \end{align*}
    Thus,~$\val_{a_2}(\bundle_1 \setminus \{g\}) > 2t > \val_{a_2}(\bundle_2)$. In addition, observe that~$s_{a_2}(\bundle_1) = s_{a_2}(\bundle_1)$. Thus, the conditions of SA-EF1 are not satisifed for agent~$a_2$. Hence, we conclude that it cannot be true that~$\val_{a_2}(\bundle_2) < t$, in any solution of our problem. Hence, it must hold that~$\val_{a_2}(\bundle_2) \geq t$.

    Suppose now, again for the sake of contradiction, that~$\val_{a_2}(\bundle_2) = t + \delta$, for some~$\delta > 0$. 
    Thus, we get that ~$\val_{a_2}(\bundle_2 \setminus \{G_1,G_2\})= \val_{a_2}(\bar{J}) = t + \delta$. Recall, by construction~$\val_{a_2}(\bar{J}) = \val_{a_1}(\bar{J})$ for every set~$J$.
    Then, we get that~$\val_{a_1}(\bundle_2) = \val_{a_1}(\{G_1,G_2\}) + \val_{a_1}(\bar{J}) = 3t +\delta$. 
    Moreover,~$\val_{a_1}(\bundle_1) = \val_{a_1}(G_3) + \val_{a_1}(J) = 2t -\delta$. Hence,~$\val_{a_1}(\bundle_2 \setminus \{g\}) > 2t + \delta > \val_{a_1}(\bundle_1)$. Hence, the EF1 criterion for agent~$a_1$ is not satisfied. A contradiction. Hence, we can conclude that~$\val_{a_2}(\bundle_2) = t$, which in turn means that~$\val_{a_2}(\bar{J}) = \val_{a_2}(J) = t$, i.e.,~$J$ is a solution for the initial instance of \probName{Partition}, which completes the proof.
\end{proof}

In the previous hardness result, we exploited the fact that some items are forced to be allocated to the not socially aware agent. In the following, we show that in the case of two agents, this property is crucial to draw an instance computationally hard because if there is no such item, we can decide the instance in polynomial time.

\begin{proposition}
    If there are two agents and no item where~$a_2$ is the unique agent maximizing social impact, an allocation that is simultaneously SIM, EF1 for agent~$a_1$, and SA-EF1 for agent~$a_2$ always exists and can be found in poly-time.
\end{proposition}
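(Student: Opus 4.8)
The plan is to turn the hypothesis into a simple structural fact and then split into two cases. With only two agents, ``$a_2$ is the unique social impact maximizer of item $g$'' means precisely $s_{a_2}(g) > s_{a_1}(g)$, so the assumption says $s_{a_1}(g) \ge s_{a_2}(g)$ for \emph{every} $g\in\items$. Consequently $a_1$ maximizes the social impact of every item, and for any allocation $\alloc$ we have $\SI(\alloc)=s_{a_1}(\bundle_1)+s_{a_2}(\bundle_2)\le s_{a_1}(\bundle_1)+s_{a_1}(\bundle_2)=s_{a_1}(\items)$; in particular the allocation giving all items to $a_1$ is SIM.

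First I would handle \textbf{Case 1}, where some item $g^\star$ satisfies $s_{a_1}(g^\star) > s_{a_2}(g^\star)$. Here I claim the allocation $\bundle_1=\items$, $\bundle_2=\emptyset$ already works: it is SIM by the bound above; it is trivially EF1 for $a_1$, who receives everything; and for $a_2$ towards $a_1$, summing $s_{a_1}(g)\ge s_{a_2}(g)$ over all items and using the strict inequality at $g^\star$ yields $s_{a_2}(\bundle_1)=s_{a_2}(\items)<s_{a_1}(\items)=s_{a_1}(\bundle_1)$, so the second disjunct of the SA-EF1 condition is satisfied. This allocation is produced in linear time.

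The remaining \textbf{Case 2} is $s_{a_1}(g)=s_{a_2}(g)$ for all $g$. Then $\SI(\alloc)=s_{a_1}(\items)$ for \emph{every} $\alloc$, so every allocation is SIM; and since $s_{a_2}(\bundle_1)=s_{a_1}(\bundle_1)$ always, the ``social awareness'' disjunct of SA-EF1 can never fire, so SA-EF1 for $a_2$ collapses to plain EF1 for $a_2$. Hence it suffices to compute any allocation that is EF1 for both agents, and for two agents such an allocation always exists and is found in polynomial time --- e.g.\ by the round-robin rule where $a_1$ and $a_2$ alternately take a most-valued remaining item starting with $a_1$; the textbook pairing argument shows $a_1$ never envies $a_2$ and $a_2$ envies $a_1$ only up to $a_1$'s first pick.

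I do not expect a genuine obstacle here: the content is essentially the observation that SA-EF1 degenerates to plain EF1 exactly when social awareness is of no help for maximizing social impact (the all-ties case), after which one only needs the classical existence and algorithm for EF1 with two agents; every other claim is a one-line verification.
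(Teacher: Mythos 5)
Your proof is correct and follows essentially the same route as the paper's: the case split into ``all social impacts tie'' (where SA-EF1 for $a_2$ degenerates to EF1 and round-robin suffices) versus ``some item has $a_1$ as the strict maximizer'' (where giving everything to $a_1$ is SIM, trivially EF1 for $a_1$, and SA-EF1 for $a_2$ via the strict inequality $s_{a_2}(\bundle_1) < s_{a_1}(\bundle_1)$) is exactly the paper's argument, just presented with the cases in the opposite order and with the verifications spelled out in more detail.
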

\begin{proof}
    If all items can go to both agents, i.e., social impact functions are identical, the question reduces to deciding whether an EF1 allocation exists. It is known that, e.g., by the round-robin mechanism, EF1 allocation always exists and can be found in polynomial time. Otherwise, there is at least one item~$g$ in which the socially unaware agent~$a_1$ is the unique agent that maximizes social impact. In this case, we allocate \emph{all} items to agent~$a_1$. Even though the agent~$a_2$ is very envious of~$a_1$, it holds that~$s_{a_2}(\bundle_{a_1}) < s_{a_1}(\bundle_{a_1})$ because of~$g$. Therefore, there is no SA-EF1-envy from~$a_2$ to~$a_1$, so such an allocation is indeed a solution.
\end{proof}

We conclude this subsection with a pseudo-polynomial algorithm similar to the one of \Cref{UMandEF1:pseudopolynomialConstantNumberOfAgents}, complementing the hardness from \Cref{thm:oneSociallyUnaware:NPh}.

\begin{theorem}\label{thm:socially_Unaware_pseudopoly}
    For every fixed number of agents~$\numAgents$ and any~${c \leq \numAgents}$, there exists a pseudo-polynomial time algorithm deciding whether a SIM allocation, which is simultaneously~$\mathcal{F}$-fair for~$a_i$,~$i\in[c]$, and SA-$\mathcal{F}$-fair for any~$a_i$,~$i\in[c+1,\ldots,\numAgents]$, exists for any~$\mathcal{F}\in\{\text{tEF1},\text{EF1},\text{sEF1},\text{wEF1},\text{swEF1},\text{EFL}\}$.
\end{theorem}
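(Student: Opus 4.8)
The plan is to lift the layered-DAG reachability construction from the proof of \Cref{UMandEF1:pseudopolynomialConstantNumberOfAgents}. Recall that there we built an acyclic graph with $\numItems+2$ layers in which a vertex of the layer belonging to item $i$ records, for a partial allocation of all items preceding $i$, the tuple $\bm{x}$ with $x_{a,b}=\val_a(\bundle_b)$ together with the tuple $\bm{y}$ with $y_{a,b}$ the value (as seen by $a$) of a best item in $\bundle_b$; arcs between consecutive layers correspond to handing the next item to an agent maximizing its social impact, and arcs into the sink $t$ are added exactly from those vertices of the last layer whose recorded state certifies $\mathcal{F}$-fairness. Crucially, the \emph{only} objective-dependent component is the set of arcs into $t$, and for every $\mathcal{F}\in\{\text{tEF1},\text{EF1},\text{sEF1},\text{wEF1},\text{swEF1},\text{EFL}\}$ we already know how to phrase that test on $\bm{x},\bm{y}$ (for sEF1 and EFL with the minor addition of a nondeterministic two-arc update of $\bm{y}$).

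The first thing I would observe is that in every SIM allocation the SA-$\mathcal{F}$ condition has a purely combinatorial form. Since every item of $\bundle_j$ is assigned to a social-impact maximizer—namely $j$—we have $s_i(\bundle_j)\le s_j(\bundle_j)$ for all $i$, so by additivity $s_i(\bundle_j)<s_j(\bundle_j)$ holds if and only if $\bundle_j$ contains some item $g$ with $s_j(g)>s_i(g)$, equivalently some item for which $i$ is not a social-impact maximizer. Hence deciding the social-impact side of SA-$\mathcal{F}$ requires only one bit per ordered pair of agents. Concretely, I would enlarge the state stored in each vertex by a tuple $\bm{z}\in\{0,1\}^{\agents\times\agents}$, where $z_{i,j}=1$ means that the current bundle of $j$ already contains an item on which $j$ strictly social-impact-dominates $i$. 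When the arc that assigns item $g$ to agent $c$ is taken, we set $\widehat{z}_{i,j}=z_{i,j}$ for $j\neq c$ and $\widehat{z}_{i,c}=z_{i,c}\lor[\,s_c(g)>s_i(g)\,]$ for every $i$; the source has $\bm{z}=\bm{0}$. This multiplies the number of vertices per layer by the constant factor $2^{\numAgents^2}$, so the graph remains of pseudopolynomial size for fixed $\numAgents$.

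Next I would redefine the arcs into the sink to match the mixed requirement. From a last-layer vertex $v_{\bm{x},\bm{y},\bm{z}}$ we add an arc to $t$ iff (i) for every $i\in[c]$ and every $j\in\agents$ the state certifies the ordinary $\mathcal{F}$-condition from $i$ to $j$ (the very inequality on $\bm{x},\bm{y}$ used for $\mathcal{F}$ in \Cref{UMandEF1:pseudopolynomialConstantNumberOfAgents}), and (ii) for every $i\in\{c+1,\dots,\numAgents\}$ and every $j\in\agents$ either that same $\mathcal{F}$-condition holds or $z_{i,j}=1$. By the observation above, (ii) is exactly the SA-$\mathcal{F}$ condition from $i$ to $j$ in a SIM allocation, so any $s$-$t$ path corresponds to a SIM allocation that is $\mathcal{F}$-fair for $a_1,\dots,a_c$ and SA-$\mathcal{F}$-fair for the remaining agents, and conversely every such allocation yields a path. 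Acyclicity is immediate (layer indices increase), the pseudopolynomial bound on $|V|$ is inherited from \Cref{UMandEF1:pseudopolynomialConstantNumberOfAgents} up to the extra constant factor, and a single reachability query (BFS, or A$^*$ on the graph generated on demand) finishes the argument; for sEF1 and EFL the nondeterministic update of $\bm{y}$ is kept exactly as before and nothing else changes.

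The step I expect to require the most care is the equivalence in the second paragraph—verifying that for SIM allocations the real-valued test $s_i(\bundle_j)<s_j(\bundle_j)$ collapses to the one-bit test ``$\bundle_j$ contains an item for which $i$ is not a maximizer''. This is precisely what lets us avoid tracking (potentially superpolynomially large) social-impact sums in the state and thus keeps the vertex set pseudopolynomial; it is essentially the same insight that underlies the binarization in \Cref{UMandEF1:socialImpactCanBeBinary}, now applied to the mixed fairness notion. Everything else is routine bookkeeping layered on top of the construction already established.
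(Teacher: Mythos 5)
Your proposal is correct and follows essentially the same route as the paper's (much terser) proof: the paper likewise augments the state of the layered graph from Theorem~\ref{UMandEF1:pseudopolynomialConstantNumberOfAgents} with one bit per ordered pair of agents recording whether $b$'s bundle already contains an item $g$ with $\si_b(g) > \si_a(g)$, and uses that bit to exempt socially aware agents from the fairness test at the sink. Your explicit justification that in a SIM allocation the condition $\si_i(\bundle_j) < \si_j(\bundle_j)$ collapses to this one-bit test is exactly the (implicit) insight the paper relies on.
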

\begin{proof}
    The algorithm is the same as in \Cref{UMandEF1:pseudopolynomialConstantNumberOfAgents}, we just add for each (ordered) pair of agents~$a$ and~$b$ an indication whether~$b$ contains an item~$g$ such that~$\si_b(g) > \si_a(g)$. If this indication is true, then~$a$ can no longer envy agent~$b$ and we adapt the edges of the graph accordingly.
\end{proof}

\subsection{Approximate Awareness}

A clear outcome of the previous subsection is that having \emph{all} agents socially aware is a necessary condition for the problem to be tractable. However, having fully socially aware agents seems highly unrealistic. Suppose, for example, an instance with two agents~$a_1$ and~$a_2$, both socially aware, and one item~$g$ valued as~$1000$ by both agents and with~$s_{a_1}(g) = 1$ and~$s_{a_2}(g) = 0$. By the social impact functions, we have to allocate the single item to~$a_1$. Observe that~$a_2$ values its bundle at~$0$ and~$a_1$'s bundle at~$1000$. By \Cref{def:SA-EF1}, this tremendous envy is outweighed by the social impact, which is, however, larger only by one. Motivated by this obvious overestimation of the social awareness, in the rest of this section we study several relaxations of SA.

The first approach we explore is with agents who are not fully socially aware but require their social impact to be significantly larger compared to the social impact of every other agent with the same bundle in order to accept an unfair allocation. We define the notion as follows.

\begin{definition}[$\alpha$-SA-EF1]\label{def:alphaSA}
    We say that an allocation~$\alloc$ is \emph{$\alpha$-envy-free up to one item with socially aware agents} ($\alpha$-SA-EF1) for~$0 \leq \alpha \leq 1$, if for each pair of agents~$i,j\in\agents$, one of the following conditions hold:
    \begin{enumerate}
        \item there is no EF1-envy from~$i$ towards~$j$, or
        \item~$s_i(\bundle_j) < \alpha\cdot s_j(\bundle_j)$.
    \end{enumerate}
\end{definition}

It is easy to see that~$1$-SA-EF1 is equivalent to SA-EF1 and that~$0$-SA-EF1 is equivalent to the standard EF1. Therefore,~$1$-SA-EF1 allocations are guaranteed to exist and can be found in polynomial time, and deciding the existence of~$0$-SA-EF1 is computationally hard. Any value of~$\alpha$ between these two extreme values then naturally limits social awareness of our agents.
In our first result, we show that for \emph{any}~$\alpha$-relaxation,~${\alpha < 1}$, of social awareness, unfortunately, the SIM and~$\alpha$-SA-EF1 allocations are not guaranteed to exist.

\begin{proposition}\label{thm:alphaSA:notGuaranteed}
    For any~$\alpha\in\mathbb{R}$,~$0 \leq \alpha < 1$, SIM and~$\alpha$-SA-EF1 is not guaranteed to exist.     
\end{proposition}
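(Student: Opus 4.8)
The plan is to exhibit, for each fixed $\alpha\in[0,1)$, a small instance of fair division with social impact whose \emph{unique} SIM allocation fails the $\alpha$-SA-EF1 test; this immediately shows non-existence. The template is the ``reliable Bill / sloppy Joe'' example from the introduction, but scaled so that the ratio between the two agents' social impacts is driven close enough to~$1$ to defeat condition~(2) of \Cref{def:alphaSA}.

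\textbf{The construction.} Take two agents $a_1,a_2$ and two identical goods $g_1,g_2$, and set $\val_{a_1}(g_j)=\val_{a_2}(g_j)=1$ for $j\in\{1,2\}$. For the social impacts, fix a positive integer $k$ with $k(1-\alpha)\ge\alpha$ (equivalently $k\ge\alpha/(1-\alpha)$; when $\alpha=0$ any $k\ge 1$ works), and let $\si_{a_1}(g_j)=k+1$ and $\si_{a_2}(g_j)=k$. First I would observe that giving a good to $a_1$ yields strictly more social impact than giving it to $a_2$, so by additivity the unique SIM allocation is $\alloc=(\{g_1,g_2\},\emptyset)$ (the three other allocations of the two goods have total social impact $2k+1$, $2k+1$, $2k$, all below $2k+2$). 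Then I would check $\alpha$-SA-EF1 for the ordered pair $(a_2,a_1)$: since $\val_{a_2}(\bundle_2)=0$ while $\bundle_1$ contains two goods each worth $1$ to $a_2$, for every $g\in\bundle_1$ we have $\val_{a_2}(\bundle_1\setminus\{g\})=1>0$, so there is EF1-envy from $a_2$ towards $a_1$ and condition~(1) fails. For condition~(2), compute $\si_{a_2}(\bundle_1)=2k$ and $\si_{a_1}(\bundle_1)=2(k+1)$; the inequality $2k\ge\alpha\cdot 2(k+1)$ is exactly $k(1-\alpha)\ge\alpha$, which holds by the choice of $k$, so condition~(2) fails as well. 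Hence $\alloc$ is not $\alpha$-SA-EF1, and being the only SIM allocation, the instance has no SIM and $\alpha$-SA-EF1 allocation.

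\textbf{Main obstacle.} The only genuinely delicate point is that the instance must depend on $\alpha$: for a fixed instance with $\si_{a_1}-\si_{a_2}$ bounded, condition~(2) would eventually be satisfied as $\alpha\to 1$, so the ratio $\si_{a_2}(\bundle_1)/\si_{a_1}(\bundle_1)=k/(k+1)$ must be pushed above every $\alpha<1$, which is precisely what the parameter~$k$ achieves. A secondary point worth stating is that the envied bundle must hold at least two goods --- with a single forced item, its removal empties the bundle and EF1 becomes vacuous, so the one-item version of this example would not suffice. Once the elementary inequality $k/(k+1)\ge\alpha\iff k\ge\alpha/(1-\alpha)$ is noted, the remaining verifications are direct.
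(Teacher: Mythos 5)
Your proof is correct and follows essentially the same approach as the paper's: two agents, two identical unit-valued goods, and social impacts that force both goods to $a_1$ while keeping the ratio $\si_{a_2}(\bundle_1)/\si_{a_1}(\bundle_1)$ at least $\alpha$. The only difference is that you use integer social impacts $k$ and $k+1$ with $k\ge\alpha/(1-\alpha)$, whereas the paper sets $\si_{a_2}(g_j)=\alpha$ directly; your version has the minor advantage of respecting the stated codomain $\N_0$ of the social impact functions.
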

\begin{proof}
    Let~$\alpha$ be as stated. We construct an instance with two agents~$a_1$ and~$a_2$ and two items~$g_1$ and~$g_2$. The valuation functions are identical and assign the same value of~$1$ to both items. The social impact functions are
    \[
        s_{a_1}(g_j) = 1 \qquad\text{and}\qquad s_{a_2}(g_j) = \alpha\,.
    \]
    Due to the social impact function, both items have to be allocated to agent~$a_1$. Moreover,~$\val_{a_2}(\bundle_1) = 2$ and~${\val_{a_2}(\bundle_2) = 0}$, meaning that agent~$a_2$ is EF1-envious towards~$a_1$. The social impacts are~$s_{a_1}(\bundle_1) = 2$ and~$s_{a_2}(\bundle_1) = 2\alpha$. That is, the second condition from the definition of~$\alpha$-SA-EF1 is not met.%
\end{proof}

Moreover, the following result shows that relaxing social awareness not only eliminates any existence guarantee, but also renders the associated decision problem computationally intractable. The reduction is similar to the one used to prove \Cref{thm:oneSociallyUnaware:NPh}.

\begin{theorem}\label{thm:alphaSA:NPc}
    For any~$\alpha\in \mathbb{R}$,~$0 \leq \alpha < 1$, it is \NPc to decide whether there is a SIM and~$\alpha$-SA-EF1 allocation, even if there are only two agents.
\end{theorem}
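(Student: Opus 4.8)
The plan is to reduce from \probName{Partition}, adapting the construction from \Cref{thm:oneSociallyUnaware:NPh}, but now with \emph{both} agents socially aware (at level~$\alpha$) rather than one of them being fully selfish. The key idea is that for any fixed~$\alpha < 1$ we can choose the social impacts of the ``big'' items so that the~$\alpha$-relaxed social-awareness condition is \emph{never} triggered for the bundle containing the forced big items, thereby forcing that bundle to be treated exactly as under plain EF1. Concretely, I would take two agents~$a_1, a_2$, an item~$g_j$ with~$\val_{a_i}(g_j) = w_j$ and~$\si_{a_i}(g_j) = 1$ for both~$i\in[2]$, and two big items~$G_1, G_2$. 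Agent~$a_1$ values~$G_1$ at~$t$ and~$G_2$ at~$0$, with~$\si_{a_1}(G_1) = 0$ and~$\si_{a_1}(G_2) = 1$; symmetrically~$a_2$ values~$G_1$ at~$0$ and~$G_2$ at~$t$, with~$\si_{a_2}(G_1) = 1$ and~$\si_{a_2}(G_2) = 0$. Thus any SIM allocation must place~$G_1$ with~$a_2$ and~$G_2$ with~$a_1$, so each agent is forced to hold one big item worth~$t$ \emph{to the other} agent.

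The first step is to argue that in any SIM allocation the envy each agent could have towards the other cannot be waived by the~$\alpha$-SA condition. Here the crucial point: if agent~$a_1$ envies~$a_2$'s bundle~$\bundle_2$, then~$a_1$ maximizes social impact of all of~$\bundle_2$ (since all~$g_j$'s and~$G_1$ give~$a_1$ social impact at least as large as~$a_2$'s — in fact~$\si_{a_1}(G_1)=0 < 1 = \si_{a_2}(G_1)$, so we actually need to be careful). I would instead engineer the social impacts so that~$\si_{a_1}(\bundle_2) = \si_{a_2}(\bundle_2)$ exactly whenever~$\bundle_2$ consists of~$G_1$ plus a subset of the~$g_j$: give~$G_1$ social impact~$1$ to \emph{both} agents (not~$0$ to~$a_1$), and break the SIM tie for~$G_1$ some other way — e.g. add a tiny dedicated dummy item~$D_1$ that only~$a_2$ has positive social impact for, valued~$0$ by everyone, forcing~$D_1$ to~$a_2$; analogously~$D_2$ to~$a_1$. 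Then in any SIM allocation~$G_1, D_1\in\bundle_2$ is \emph{not} forced, but we can instead make the big items themselves have social impact~$1$ only for the ``correct'' agent and route the tie via the~$g_j$-count. Given the space constraints, the cleanest route is: keep the \Cref{thm:oneSociallyUnaware:NPh} construction's social impacts (so~$\si_{a_1}(G_1) = 0$), but observe that then~$a_1$'s potential envy is towards a bundle where~$\si_{a_1}(\bundle_2) < \si_{a_2}(\bundle_2)$ strictly, so condition~(2) of~$\alpha$-SA-EF1 \emph{is} met for~$a_1$ regardless of~$\alpha$ — which kills the reduction for~$a_1$. So the right move is to make the reduction \emph{asymmetric}: force the \probName{Partition} constraint through agent~$a_2$ only, exactly as the ``opposite direction'' of \Cref{thm:oneSociallyUnaware:NPh} does, while handling~$a_1$ via plain EF1 which is automatically satisfiable there.

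Therefore the concrete plan mirrors the proof of \Cref{thm:oneSociallyUnaware:NPh} almost verbatim, with one modification: choose the social impacts of the big items so that the bundle~$\bundle_1$ that~$a_2$ might envy satisfies~$\si_{a_2}(\bundle_1) \ge \alpha\cdot\si_{a_1}(\bundle_1)$ — which holds automatically since~$\si_{a_2}(\bundle_1) = \si_{a_1}(\bundle_1)$ when both agents have equal (binary) social impact for every~$g_j$ and for~$G_3$, and~$\alpha\le 1$. So the~$\alpha$-SA condition for~$a_2$ towards~$a_1$ never fires, meaning~$a_2$ must be genuinely EF1-satisfied towards~$a_1$, forcing~$\val_{a_2}(\bundle_1\setminus\{G_3\})\le\val_{a_2}(\bundle_2)$, i.e. a balanced partition. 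For the forward direction, given a balanced~$J$ the allocation~$\bundle_1 = \{G_3\}\cup\{g_i : i\in J\}$, $\bundle_2 = \{G_1,G_2\}\cup\{g_i : i\notin J\}$ is SIM, is EF1 for~$a_1$ (remove~$G_1$ from~$\bundle_2$), and is~$\alpha$-SA-EF1 for~$a_2$ (after removing~$G_3$ from~$\bundle_1$ the values equalize, so there is no EF1-envy at all, and condition~(1) holds).

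\textbf{Main obstacle.} The subtlety — and the step I expect to require the most care — is verifying that condition~(2) of~$\alpha$-SA-EF1 genuinely \emph{cannot} be used as an escape hatch for agent~$a_2$ in \emph{any} SIM allocation, including ``degenerate'' ones where some~$g_j$ or a big item lands unexpectedly; one must check all cases of which big items go where under the SIM constraint (here~$\{G_1,G_2\}\subseteq\bundle_2$ and~$G_3$ free), and confirm that whenever~$a_2$ could envy, the two agents have \emph{equal} social impact on the envied bundle, so dividing by~$\alpha\le 1$ still leaves condition~(2) violated. Because~$\alpha$ is an arbitrary real in~$[0,1)$ and not part of the construction, this uniformity is exactly what makes the single reduction work for all~$\alpha$ simultaneously, and spelling it out cleanly is the heart of the proof.
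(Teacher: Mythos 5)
There is a genuine gap in the reverse direction of your reduction, and it stems from keeping the social impacts of the big items \emph{independent of~$\alpha$}. Your ``equal social impact'' trick does close the escape hatch for~$a_2$ (every item in~$\bundle_1 = \{G_3\}\cup\{g_j : j\in J\}$ has social impact~$1$ for both agents, so~$\si_{a_2}(\bundle_1)=\si_{a_1}(\bundle_1)\ge\alpha\,\si_{a_1}(\bundle_1)$ and condition~(2) indeed never fires). But that only yields the single inequality~$\sum_{j\in J}w_j \le \sum_{j\notin J}w_j$, which is \emph{not} ``a balanced partition'' as you claim; the other inequality must come from~$a_1$'s EF1 constraint towards~$\bundle_2=\{G_1,G_2\}\cup\{g_j : j\notin J\}$. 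On that bundle the social impacts are \emph{not} equal:~$\si_{a_1}(\bundle_2)=|\bar J|$ while~$\si_{a_2}(\bundle_2)=|\bar J|+2$, so condition~(2) for~$a_1$ reads~$|\bar J| < \alpha(|\bar J|+2)$, which holds whenever~$|\bar J| < 2\alpha/(1-\alpha)$ --- and then~$a_1$'s envy is simply waived and no lower bound on~$\sum_{j\in J}w_j$ is enforced. (Your earlier aside that strict inequality~$\si_{a_1}(\bundle_2)<\si_{a_2}(\bundle_2)$ ``meets condition~(2) regardless of~$\alpha$'' is also not true --- the~$\alpha$-scaled inequality depends on the magnitudes --- but the real damage is the case just described.) Concretely, take~$\alpha=3/4$ and the \probName{Partition} instance~$W=\{t-2,\,t-2,\,3,\,1\}$ with~$t\ge 8$, which has no balanced partition. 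The allocation~$\bundle_1=\{G_3,g_3,g_4\}$,~$\bundle_2=\{G_1,G_2,g_1,g_2\}$ is SIM;~$a_2$ has no envy at all since~$\val_{a_2}(\bundle_2)=2t-4\ge t+4=\val_{a_2}(\bundle_1)$; and~$a_1$ has irreparable EF1-envy but~$\si_{a_1}(\bundle_2)=2<3=\alpha\cdot\si_{a_2}(\bundle_2)$, so the envy is waived and the allocation is~$\alpha$-SA-EF1. A \No-instance of \probName{Partition} thus maps to a \Yes-instance, so the reduction is incorrect for every~$\alpha\ge 1/2$ (it does happen to work for~$\alpha<1/2$, but the theorem must cover all~$\alpha<1$).

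The missing idea --- and the way the paper proceeds --- is to make the gadget itself depend on~$\alpha$: the paper keeps the small items at social impact~$1$ for both agents but gives the forced big items social impacts of magnitude~$\Theta(\ell/\alpha)$ (specifically~$2\ell/\alpha$ and~$\ell/\alpha$ for the two agents' respective forced items). Because the small items contribute at most~$\ell$ to any bundle's social impact, this scaling pins down the ratio~$\si_i(\bundle_j)/\si_j(\bundle_j)$ on every possible SIM bundle relative to the fixed~$\alpha$, uniformly over all allocations, so the~$\alpha$-relaxed condition behaves the same way no matter how the small items are split. Your construction cannot achieve this with binary social impacts, because for the agent whose envied bundle contains items on which it is strictly dominated, the ratio~$\si_i(\bundle_j)/\si_j(\bundle_j)$ drifts with~$|\bar J|$ and inevitably crosses~$\alpha$ for some instances.
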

\begin{proof}
    \newcommand{\siBigAOne}{\ensuremath{x}}
    \newcommand{\siBigATwo}{\ensuremath{y}}
    We reduce from the \probName{Partition} problem. Recall that in this problem, we are given a multi-set of integers~$W=(w_1,\ldots,w_\ell)$ such that~$\sum_{j\in[\ell]} w_j = 2t$, and the goal is to decide whether~$J\subseteq[\ell]$ exists such that~$\sum_{j\in J} w_j = \sum_{j\in [\ell]\setminus J} w_j = t$.
    This problem is known to be (weakly) \NPh~\cite{GareyJ1979}.
    Without loss of generality, we can assume that no~$w_j\in W$ is of size~$\geq t$, as otherwise we have an obvious \Yes (if~$w_j = t$) or \No ($w_j > t$) instance.

    Given an instance~$\mathcal{I}$ of \probName{Partition}, we construct the following equivalent instance of our problem as follows.
    We have two agents~$a_1$ and~$a_2$.
    Next, we have an item~$g_j$ for every element~$w_j\in W$.
    Both agents value this item as~$w_j$ and are maximizing social impact for these items; that is,~$s_{a_i}(g_j) = 1$ for every~$i\in[2]$.
    Additionally, we add three ``big'' items~$G_1$,~$G_2$, and~$G_3$.
    Agent~$a_1$ values~$G_1,G_2$ as~$0$, while he values~$G_3$ as~$t$.
    Agent~$a_2$ values all three big items~$G_1,G_2,G_3$ as~$t$.
    Before we define the social impact, we define
    \begin{equation*}\label{eq:setting_siBigAOne}
        \siBigAOne = 2 \cdot \nicefrac{\ell}{\alpha}
        \quad\text{and}\quad
        \siBigATwo = \nicefrac{\ell}{\alpha}\,.
    \end{equation*}
    Then, the social impact functions are as follows~$\si_{a_1}(G_1) = \si_{a_1}(G_2) = \siBigAOne$,~$\si_{a_2}(G_1) = \si_{a_2}(G_2) = \si_{a_1}(G_3) = 0$, and~$\si_{a_2}(G_3) = y$.
    The construction is summarized in the following table.
    
    \begin{center}
        \renewcommand{\arraystretch}{1.2}
        \begin{tabular}{c|c|c|c}
                  &~$G_1$,~$G_2$   &~$G_3$          &~$g_j$,~$j\in[\ell]$  \\\hline
           ~$a_1$ &~$s=\siBigAOne, \val=0$ &~$s=0, \val=t$ &~$s=1, \val=w_j$\\
           ~$a_2$ &~$s=0, \val=0$ &~$s=\siBigATwo, \val=t$ &~$s=1, \val=w_j$\\
        \end{tabular}
    \end{center}

    For the correctness, let~$J\subseteq[\ell]$ be a solution for~$\mathcal{I}$.
    We define the bundles as follows.
    We set~$\bundle_1 = \{G_1,G_2\} \cup \{ g_i \mid i \in J \}$ and~$\bundle_2 = \{G_3\} \cup \{ g_i \mid i \in [\ell]\setminus J \}$.
    Clearly, this allocation is SIM.
    Now,~$\val_{a_1}(\bundle_1) = t$, while~$\val_{a_1}(\bundle_2) = 2t$.
    That is, agent~$a_1$ is envious of~$a_2$'s bundle; however, this envy can be removed if~$G_3$ is deleted from~$\bundle_2$.
    For agent~$a_2$, on the other hand, we have that~$\val_{a_2}(\bundle_2) = 2t$, while~$\val_{a_2}(\bundle_1) = 3t$.
    That is, agent~$a_2$ is also envious of~$a_1$.
    S%
    If we remove~$G_1$ from~$a_1$'s bundle, this envy is eliminated.
    That is, the allocation~$\alloc$ is~$\alpha$-SA-EF1 for both agents.

    In the opposite direction, let~$\alloc$ be a MIS allocation.
    It must hold that~$\{G_1,G_2\} \subseteq \bundle_1$ and~$G_3\in \bundle_2$.
    Now, we construct~$J$ so that~$J = \{ j\in[\ell] \mid g_j \in \bundle_1 \}$ and we claim that it is a solution for~$\mathcal{I}$. 
    For the sake of contradiction, assume that~$\sum_{j\in J} w_j \not= \sum_{j\in [\ell]\setminus J} w_j$.
    First, let~$\sum_{j\in J} w_j > \sum_{j\in [\ell]\setminus J} w_j$. 
    Then,~$\val_{a_2}(\bundle_2) = 2t - \delta$ and~$\val_{a_2}(\bundle_1) = 3t + \delta$, where~$\delta = |\sum_{j\in [\ell]\setminus J} w_j - \sum_{j\in J} w_j|$.
    That is, the agent~$a_2$ is envious of~$a_1$, and this envy persists even if we remove~$G_1$ or~$G_2$ from~$\bundle_1$.
    This contradicts that ~$\alloc$ is EF1 for~$a_2$.
    Thus, if also~$\alpha\si_{a_2}(\bundle_2) \le \si_{a_1}(\bundle_2)$, then this cannot be the case.
    This is equivalent to
    \begin{align*}
        \alpha (\siBigATwo + |[\ell] \setminus J|)
        &\geq
        |[\ell] \setminus J| \\
        \alpha
        &\geq
        \frac{|[\ell] \setminus J|}{\siBigATwo + |[\ell] \setminus J|}
    \end{align*}
    and the later holds by the initial setting of~$\siBigATwo$, since the nominator is largest if~$|[\ell] \setminus J| = \ell$ and the denominator if~$|[\ell] \setminus J| = 0$.
    In this case we require that~$\alpha \geq \frac{\ell}{\siBigATwo}$.
    If, on the other hand, we have~$\sum_{j\in J} w_j < \sum_{j\in [\ell]\setminus J} w_j$.
    Then,~$\val_{a_1}(\bundle_1) = t - \delta$ and~$\val_{a_1}(2t + \delta)$.
    Again, this envy persists even if we remove~$G_3$ from~$\bundle_2$.
    This again contradicts that~$\alloc$ is EF1 for~$a_1$.
    Thus, in order for this not to be allowed, we also need that~$\alpha\si_{a_1}(\bundle_1) \geq \si+{a_2}(\bundle_1)$.
    This is equivalent to
    \begin{align*}
        \alpha (2\siBigAOne + |J|)
        &\geq
        |J| \\
        \alpha
        &\geq
        \frac{|J|}{2\siBigAOne + |J|}
    \end{align*}
    and the later holds by the initial setting of~$\siBigAOne$, since the nominator is largest if~$|J| = \ell$ and the denominator if~$|J| = 0$.
    Hence, it must be the case that~$\sum_{j\in J} w_j = \sum_{j\in [\ell]\setminus J} w_j$, which finishes the proof.
\end{proof}

Observe that both \Cref{thm:alphaSA:notGuaranteed} and \Cref{thm:alphaSA:NPc} hold also in the setting where only agent~$a_1$ is not fully socially aware, which paints our results even stronger.

\subsection{Weak Social Awareness}

Although~$\alpha$-SA seems to be a natural relaxation of social awareness, it fails to address one of our most important criticisms. Specifically, the level of social awareness is completely unrelated to the amount of envy between two agents (unless we set~$\alpha = v_i(A_i)/v_j(A_j)$, which is specific for every pair of~$i,j\in\agents$). In order to also take into account this perspective, in the remainder of this section, we explore a novel notion of \emph{weak social awareness} (WSA). Intuitively, the notion formalizes the idea that if an agent~$i$ is envious of the agent~$j$ and the bundle~$\bundle_j$ is twice as good as the bundle~$\bundle_i$ according to the agent~$i$, the social awareness condition should override envy if the social impact generated by~$i$ from~$\bundle_i$ is at least twice the social impact of~$j$ with~$\bundle_i$.
Before we formally define weak social awareness, we provide an example that illustrates how powerful standard social awareness is.

\begin{example}\label{ex:wSA:nonexistence}
    Let us have two agents~$1$ and~$2$, three items~$g_1$,~$g_2$,~$g_3$, and let the valuations and social impacts be as follows.
    \begin{center}
        \renewcommand{\arraystretch}{1.2}
        \begin{tabular}{c|c|c|c}
                  &~$g_1$         &~$g_2$         &~$g_3$  \\\hline
           ~$1$ &~$s=1, \val=1$ &~$s=1, \val=5$ &~$s=0, \val=5$ \\
           ~$2$ &~$s=0, \val=5$ &~$s=1, \val=5$ &~$s=1, \val=1$  \\
        \end{tabular}
    \end{center}
    Let the allocation~$\alloc$ be with~$\bundle_1 = \{g_1\}$ and~$\bundle_2 = \{g_3,g_2\}$. Observe that~$\val_{1}(\bundle_1) = 1$ and~$\val_{1}(\bundle_2) = 10$. That is, the bundle~$\bundle_2$ is ten times more valuable than~$\bundle_1$ according to agent~$1$, and the envy cannot be eliminated by the removal of a single item. However, such an allocation is SA-EF1, since agent~$2$ generates strictly more social impact with~$\bundle_2$. However, the social impact of agent~$2$ is only twice the social impact agent~$1$ would generate with bundle~$\bundle_2$.%
\end{example}

\begin{definition}
\label{def:wsa}
    We say that an allocation~$\alloc$ is \emph{envy-free up to one item with weakly socially aware agents} (WSA-EF1), if for every pair of agents~$i,j\in\agents$, one of the following conditions hold:
    \begin{enumerate}
        \item there is no EF1-envy from~$i$ towards~$j$, or
        \item~$\val_i(\bundle_j)\cdot \si_i(\bundle_j) \leq \val_i(\bundle_i)\cdot s_j(\bundle_j)$.
    \end{enumerate}
\end{definition}

The WSA variants for other fairness notions are then defined analogously, with appropriate modifications to condition 1.

\begin{example}\label{ex:wSA-EF1}
    Recall the instance from \Cref{ex:wSA:nonexistence}. Clearly, the presented allocation~$\alloc$ is not EF1 from the perspective of agent~$1$. At the same time, we have~$\val_{1}(\bundle_2)\cdot \si_1(\bundle_1) = 10 \cdot 1 = 10$ and~$\val_{1}(\bundle_1)\cdot \si_2(\bundle_2) = 1\cdot 2 = 2$. Therefore, neither condition 2 from the definition of WSA-EF1 is satisfied, which means that~$\alloc$ is not a WSA-EF1 allocation. 
\end{example}

It is easy to see that the instance from \Cref{ex:wSA:nonexistence} admits no SIM and WSA-EF1 allocation, as only other possible allocation is with~$\bundle_1 = \{g_1\}$ and~$\bundle_2 = \{g_2,g_3\}$, which is also not WSA-EF1 by symmetric arguments. %
Moreover, also with weakly SA agents, the problem remains intractable. 

\begin{theorem}
\label{thm:weaklySA:NPh}
    For any~$\mathcal{F}\in\{\text{tEF1},\text{sEF1},\text{wEF1},\text{swEF1},\text{EF1},\text{EFL}\}$, it is \NPc to decide whether SIM and WSA-$\mathcal{F}$-fair allocation exist, even if the social impact function is binary and~$\numAgents = 2$.
\end{theorem}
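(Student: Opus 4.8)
The plan is to reduce from \probName{Partition}, \emph{reusing verbatim} the two-agent instance from the proof of \Cref{thm:two_agents_hard}, and then to verify that the ``social-awareness escape hatch'' of \Cref{def:wsa} (condition~2) never rescues an unbalanced allocation. So, from a \probName{Partition} instance $W=\{w_1,\dots,w_\ell\}$ with $\sum_j w_j = 2t$ and every $w_j < t$, I build two agents $a_1,a_2$, one item $g_j$ per $w_j$ with $\val_{a_i}(g_j)=w_j$ and $\si_{a_i}(g_j)=1$ for $i\in[2]$, and two big items $G_1,G_2$ with $\val_{a_1}(G_1)=\val_{a_2}(G_2)=0$, $\val_{a_1}(G_2)=\val_{a_2}(G_1)=t$, $\si_{a_1}(G_1)=\si_{a_2}(G_2)=1$, and $\si_{a_1}(G_2)=\si_{a_2}(G_1)=0$. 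The social impacts are binary and $G_1$ (resp.\ $G_2$) has a unique social-impact maximizer, so every SIM allocation has the form $\bundle_1=\{G_1\}\cup\{g_j:j\in J\}$, $\bundle_2=\{G_2\}\cup\{g_j:j\in[\ell]\setminus J\}$ for some $J\subseteq[\ell]$, and conversely all of these are SIM. Write $S=\sum_{j\in J}w_j$, $\bar S = 2t-S=\sum_{j\notin J}w_j$, and $\bar k=\ell-|J|$.

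The ``yes''$\Rightarrow$ direction is inherited from \Cref{thm:two_agents_hard}: if $S=t$ then the displayed allocation is already EF1 — and, by the same arithmetic, EFL — so condition~1 of WSA holds for both ordered pairs of agents and the allocation is SIM and WSA-$\mathcal{F}$. For the converse, assume $S\neq t$, w.l.o.g.\ $S<t$ (the case $S>t$ is symmetric under swapping $a_1\leftrightarrow a_2$, $G_1\leftrightarrow G_2$, $J\leftrightarrow[\ell]\setminus J$). As in \Cref{thm:two_agents_hard}, even removing the best item $G_2$ from $\bundle_2$ leaves $\val_{a_1}(\bundle_2\setminus\{G_2\})=2t-S>S=\val_{a_1}(\bundle_1)$, so $a_1$ has EF1-envy (hence also sEF1-, wEF1-, swEF1-, EFL-envy) toward $a_2$ and condition~1 fails. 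The new step is that condition~2 fails too: since $\bar S=2t-S>t$ while every $w_j<t$, we must have $\bar k\ge 2$, and therefore
\[
  \val_{a_1}(\bundle_2)\cdot \si_{a_1}(\bundle_2)=(3t-S)\,\bar k > 2t\,\bar k \ge t(\bar k+1) > S(\bar k+1)=\val_{a_1}(\bundle_1)\cdot \si_{a_2}(\bundle_2),
\]
where $\si_{a_1}(\bundle_2)=\bar k$ (because $\si_{a_1}(G_2)=0$) and $\si_{a_2}(\bundle_2)=\bar k+1$. Hence the allocation is not WSA-$\mathcal{F}$. So a SIM and WSA-$\mathcal{F}$ allocation exists iff some $J$ has $S=t$, i.e.\ iff $W$ is a \Yes-instance; since WSA-$\mathcal{F}$ of a fixed allocation is checkable in polynomial time, the problem is in \NP and hence \NPc.

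To cover the remaining notions: for two agents sEF1 coincides with EF1, and setting all weights equal makes wEF1 (resp.\ swEF1) coincide with EF1 (resp.\ sEF1), so no change is needed there. For EFL one additionally observes that when $S=t$ the second EFL clause holds for both pairs with $g$ the big item, while when $S<t$ no single deletion from $\bundle_2$ brings $a_1$'s value down to $S$, so the EFL clause fails as well (and the product inequality above is unchanged). For tEF1 I would apply the same tweak as in \Cref{thm:two_agents_hard} and duplicate $G_1,G_2$ into two copies each: transferring one copy of $G_2$ to $a_1$ equalizes both bundles at value $2t$ exactly when $S=t$ (giving WSA-tEF1 in that case), while for $S<t$ no single transfer removes the envy and the product inequality $(4t-S)\bar k \le S(\bar k+2)$ still fails because $4t-S>3t$ and $\bar k\ge 2$.

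The only place this proof departs from \Cref{thm:two_agents_hard} — and the main thing to get right — is confirming that condition~2 of weak social awareness cannot make an unbalanced split fair. The subtlety is that this condition multiplies a \emph{value} ratio by a \emph{count} (the binary social impact), so the argument must exploit the structural fact that ``$\bar S>t$ together with all $w_j<t$ forces at least two small items in $\bundle_2$'' in order to obtain the clean chain of inequalities above; a bound that ignores the number of small items would not be tight enough. For tEF1, the same care combined with the duplicated big items is what keeps the reduction working.
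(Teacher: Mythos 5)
Your proof is correct, but it follows a genuinely different route from the paper's. The paper does \emph{not} reuse the \probName{Partition} gadget of \Cref{thm:two_agents_hard} directly; it instead reduces from \probName{Equitable Partition} (a multiset of $2\ell$ integers with the promise that every subset of size less than $\ell$ sums to less than $t$), which lets it first pin down the \emph{cardinalities} $|\bundle_{a_1}| = |\bundle_{a_2}| = \ell$ and only then the value sums — the cardinality promise is what feeds the item counts into the condition-2 product inequality. You work with plain \probName{Partition} and argue directly on the value sum $S$, replacing the cardinality promise by the structural observation that $\bar S > t$ together with $w_j < t$ forces $\bar k \ge 2$ small items in the envied bundle; your chain $(3t-S)\bar k > 2t\bar k \ge t(\bar k+1) > S(\bar k+1)$ is valid (indeed $\bar k \ge 1$ already suffices for the middle step, and the degenerate case $t=0$ is a trivial \Yes-instance). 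This is a modest but real simplification: you avoid invoking the hardness of the equitable variant and its extra promise, and you reuse the \Cref{thm:two_agents_hard} instance verbatim, whereas the paper's cardinality-first argument is slightly more mechanical at the cost of a stronger source problem. Your treatment of the remaining notions (sEF1 via the two-agent coincidence, wEF1/swEF1 via equal weights, EFL via the big item witnessing clause~2, and tEF1 via duplicated big items with the adjusted product inequality $(4t-S)\bar k \le S(\bar k+2)$) all checks out; the paper handles only EF1 explicitly and is terser about the other notions, so your write-up is, if anything, more complete on that front.
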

\begin{proof}
    This time, we reduce from a variant of the \probName{Partition} problem called \probName{Equitable Partition}. In this variant, we are given a multiset~$W$ of~$2\ell$ integers whose sum is~$2t$, and the goal is to decide whether~$J\subseteq[2\ell]$ of size~$\ell$ exists such that~$\sum_{j\in J} w_j = \sum_{j\in[2\ell]\setminus J} w_j = t$. This problem is known to be \NPh, even if for each~$J\subseteq[2\ell]$ of size less than~$\ell$ we have~$\sum_{j\in J} w_j < t$~\cite{DeligkasEKS2024}. Without loss of generality, we can assume that~$\ell > 4$, as otherwise, we can solve such instance in polynomial by a simple brute-force.

    Let~$\mathcal{I}$ be an instance of \probName{Equitable Partition}. We construct an equivalent instance of \FDSI[WSA-EF1] as follows (again, we discuss the other notion of fairness at the end of the proof). There are two agents,~$a_1$ and~$a_2$. The set of items contains two \emph{large items}~$G_1$ and~$G_2$ together with~$2\ell$ \emph{small items}~$g_j$,~$j\in[2\ell]$, which are in one-to-one correspondence with integers of~$W$. The valuations and the social impact functions are defined according to the following table.
    
    \begin{center}
        \renewcommand{\arraystretch}{1.2}
        \begin{tabular}{c|c|c|c}
                  &~$G_1$         &~$G_2$         &~$g_j$,~$j\in[2\ell]$  \\\hline
           ~$a_1$ &~$s=1, \val=0$ &~$s=0, \val=t$ &~$s=1, \val=w_j$\\
           ~$a_2$ &~$s=0, \val=t$ &~$s=1, \val=0$ &~$s=1, \val=w_j$\\
        \end{tabular}
    \end{center}

    For correctness, let~$\mathcal{I}$ be a \Yes-instance, and let~$J$ be a solution. We set~$\bundle_{a_1} = \{G_1\} \cup \{ g_j \mid j\in J \}$ and~$\bundle_{a_2} = \items \setminus \bundle_{a_1}$. Clearly, the allocation is SIM, since~$G_1$ is allocated to~$a_1$ and~$G_2$ is allocated to~$a_2$. Now, we claim that~$\alloc$ is WSA-EF1. For~$a_1$, we have~$\val_{a_1}(\bundle_{a_2}) = t$ and~$\val_{a_1}(\bundle_{a_2}) = 2t$; however, this envy can be eliminated by removing~$G_2$ from~$\bundle_{a_2}$. The situation for~$a_2$ is symmetric: we have~$\val_{a_2}(\bundle_{a_2}) = t$ and~$\val_{a_2}(\bundle_{a_1}) = 2t$ and envy can be eliminated by removing~$G_1$ from~$\bundle_{a_1}$.

    In the opposite direction, let~$\alloc=(\bundle_{a_1},\bundle_{a_2})$ be a SIM and WSA-EF1 allocation for~$\mathcal{J}$. Since~$\alloc$ is SIM, it must be the case that~$G_1\in\bundle_{a_1}$ and~$G_2\in\bundle_{a_2}$. For the sake of contradiction, assume that~$|\bundle_{a_1}| \not= |\bundle_{a_2}|$, and, without loss of generality, let~$|\bundle_{a_1}| < |\bundle_{a_2}|$. Then~$\bundle_{a_1}$ contains at most~$\ell-1$ small items. Consequently,~$\val_{a_1}(\bundle_{a_1}) = t - \delta$ and~$\val_{a_1}(\bundle_{a_2}) = t + t + \delta = 2t + \delta$ and this envy cannot be eliminated even if we remove~$G_2$ from~$\bundle_{a_2}$. Hence, as~$\alloc$ is WSA-EF1, it must be the case that
    \[
        \val_{a_1}(\bundle_{a_2})\cdot \si_{a_1}(\bundle_{a_2}) \leq \val_{a_1}(\bundle_{a_1})\cdot \si_{a_2}(\bundle_{a_2})\,.
    \]
    However, if we substitute the actual values, we obtain 
    \begin{align*}
        (2t+\delta)\cdot(|\bundle_{a_2}|-1) &\leq (t-\delta)(|\bundle_{a_2}|)\\
        2t|\bundle_{a_2}| - 2t + \delta|\bundle_{a_2}| - \delta &\leq t|\bundle_{a_2}| - \delta|\bundle_{a_2}|\\
        t|\bundle_{a_2}| - 2t + 2\delta|\bundle_{a_2}| - \delta &\leq 0\,,
    \end{align*}
    which is clearly never satisfied, as~$|\bundle_{a_2}| > \ell + 1$ and~$\ell > 4$ by our assumptions. Therefore,~$\alloc$ does satisfy neither the second condition of WSA-EF1, which is a contradiction. Hence, it must be the case that~$|\bundle_{a_1}| \geq |\bundle_{a_2}|$. By symmetric argument, we have that also~$|\bundle_{a_2}| \geq |\bundle_{a_1}|$, meaning that~$|\bundle_{a_1}| = |\bundle_{a_2}| = \ell$. We create~$J$ such that~$J = \{ j\in[2\ell] \mid g_j\in \bundle_{a_1} \}$. By the previous arguments, we have~$|J| = \ell$. It remains to show that~$\sum_{j\in J} w_j = t$. Without loss of generality, let~$\sum_{j\in J} w_j < t$. Then,~$\val_{a_1}(\bundle_{a_2}) = 2t + \delta$ and~$\val_{a_1}(\bundle_{a_1}) = t - \delta$. However, we have already shown that in this situation,~$\alloc$ is not WSA-EF1, so we have a contradiction. Therefore, it must be the case that~$\sum_{j\in J} w_j = t$, which finishes the proof.
\end{proof}

\section{Discussion}\label{sec:discussion}

We have presented a comprehensive study of computing allocations that maximize social impact while maintaining some fairness criterion. 
We believe that our results resolve almost completely the problem  for goods when the fairness criterion is based on EF1, or EFL. 
It is evident that in order to achieve SIM, social awareness is essential.
Additionally, our results from \Cref{sec:relaxedSA} indicate that probably the definition of social awareness from~\cite{FlamminiGV2025}, i.e., the one we have adopted, is probably the ``correct'' one, as both relaxations we have identified remove the existential guarantee and furthermore make the problem hard. Hence, an immediate question is whether there exist other natural definitions of social awareness that allow for SIM and fair allocations. In the following, we highlight three other directions that we believe deserve further study.

\paragraph{Relaxing SIM.}
One obvious direction for future research is to relax the SIM constraint, as someone can argue that SIM is too restrictive and {\em this} is the reason for non-existence of fair allocations for socially unaware agents; \citet{FlamminiGV2025} show that we cannot get better than~$\Oh{\numAgents}$-approximation for SIM under EF1 fairness.
However, what if we replace SIM with a Pareto optimal allocation with respect to social impact? Unfortunately, there is no positive news for this: the proof of \Cref{thm:two_agents_hard} shows that the problem remains \NPc, even when there are only two agents.

\paragraph{Other Fairness Notions.}
A complementary approach to the question above is to further relax the fairness criterion. Does there always exist an allocation that is SIM and {\em approximately} fair, for some established fairness notion?
A different idea is to try to strengthen the fairness guarantee for socially aware agents. Can we prove the existence of SIM and EFX allocations with socially aware agents for the settings that admit EFX allocations in the ``standard'' model? What about SIM allocations satisfying epistemic EFX, that are guaranteed to exist and can be efficiently found~\cite{AzizBCGL2018,Caragiannis_epistemic,AkramiR2025}?

\paragraph{Division of Chores.}Another exciting avenue, which we believe deserves further study, is the setting where we have {\em chores} to allocate~\cite{Aziz2016,AzizRSW2017,AzizCIW2022}. 
This is a very natural setting with many practical applications; think of the admin tasks in a CS department. 
Our initial results provided in the appendix show that, again, social unawareness does not allow for SIM and EF1 allocations.
Besides this, even with social awareness, it is unclear whether we can always find SIM-EF1 allocations. In fact, we have created examples where round-robin approach, that gives us SA-EF1 for goods by \Cref{thm:sa_swef1_goods}, fails when we have chores.

\begin{example}
    Consider an instance with two agents with identical valuation~$v$ and five items. For the first two items, both agents have value~$v(g_1)=v(g_2)= -100$ and~$s_i(g_1)=s_i(g_2)= 1$ for~$i\in \{1,2\}$. For the remaining three items~$v(g_3)=v(g_4)= v(g_5) = -1$ but~$s_1(g_j) = 0$ and~$s_2(g_j)=1$ for~$j\in \{3,4,5\}$. Clearly, if we allocate the chores in round-robin fashion such that in each round agents pick their best chore for which they maximize the social impact, if such chore exists, then we end up with allocation~$(\{g_1, g_2\}, \{g_3, g_4, g_5\})$. However, since~$s_1(\{g_1, g_2\}) = s_2(\{g_1, g_2\})$ and~$v(\{g_1\}) = v(\{g_2\}) = -100 < v(\{g_3, g_4, g_5\}) = -3$, agent 1 envies agent 2. 
\end{example}
\noindent

Similarly, it is unclear whether algorithms based on envy-cycle elimination can be extended to SA-EF1 for chores. The main issue is that in the setting of chores, it is possible that agent~$i$ does not SA-envy bundle~$\bundle_j$, when agent~$j$ has the bundle, as agent~$j$ has a smaller impact on the bundle of agent~$i$; however, if~$\bundle_j$ is moved to some other agent~$k$, who has same social impact for the~$i$-th's bundle as~$i$, the agent~$i$ can start SA-envying the same bundle. This is a problem, as obtaining SIM allocation requires eliminating the cycle only in the SA-envy graph. However, this means that it is possible that the agent to whom we allocate a chore already envies some bundle, just not SA-envies it on the current agent. Moving this bundle then can introduce envy that is not SA-EF1. 
Hence, the discussion above is a strong indication, at least to us, that if SA-EF1 allocations do indeed exist for chores, they require a different algorithmic technique in order to resolve it.

\begin{acks}
    This project has received funding from the European Research Council (ERC) under the European Union’s Horizon 2020 research and innovation programme (grant agreement No 101002854). This work was co-funded by the European Union under the project Robotics and advanced industrial production (reg. no. CZ.02.01.01/00/22\_008/0004590). Argyrios Deligkas acknowledges the support of the EPSRC grant EP/X039862/1.
    Šimon Schierreich acknowledges the additional support of the Grant Agency of the Czech Technical University in Prague, grant No. SGS23/205/OHK3/3T/18.
    
    \begin{center}
    	\includegraphics[width=4cm]{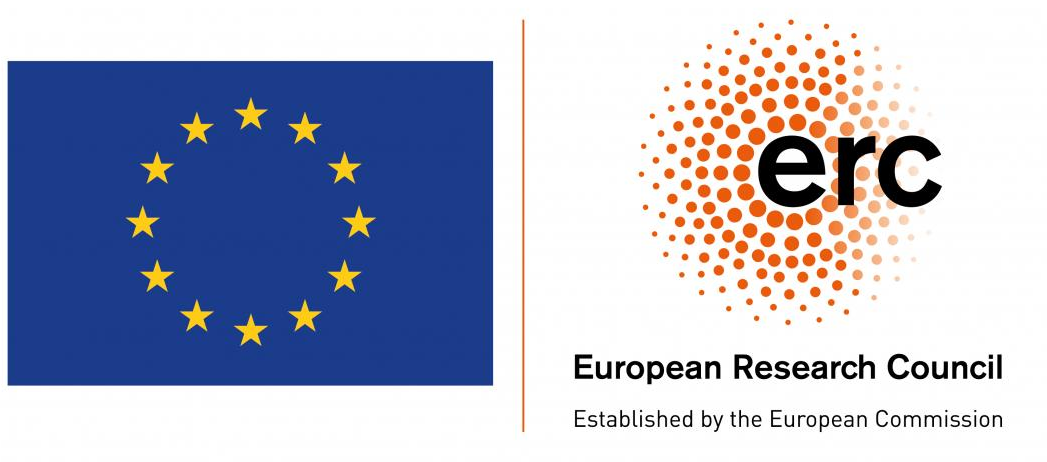}
    \end{center}
\end{acks}

\printbibliography

\appendix

\end{document}